\newif\ifprocs
\newcommand{\veps}{\varepsilon}
\newcommand{\eps}{\varepsilon}
\newcommand{\E}{\mathbf{E}}
\renewcommand{\Pr}{\mathbf{Pr}}
\newcommand{\abs}[1]{\left| #1 \right|}
\newcommand{\norm}[1]{\left\lVert #1 \right\rVert}
\newcommand{\var}[1]{\mathbf{Var}\left[#1\right]}
\newcommand{\vol}{\text{vol}}
\newcommand{\sk}[1]{\text{sk}(#1)}
\newcommand{\windeg}[1]{\delta_{#1}^{\tt{in}}(\vec{G})}
\newcommand{\woutdeg}[1]{\delta_{#1}^{\tt{out}}(\vec{G})}
\newcommand{\uwindeg}[1]{d_{#1}^{\tt{in}}(\vec{G})}
\newcommand{\uwoutdeg}[1]{d_{#1}^{\tt{out}}(\vec{G})}
\newcommand{\windegP}[1]{\delta_{#1}^{\tt{in}}(\vec{P})}
\newcommand{\woutdegP}[1]{\delta_{#1}^{\tt{out}}(\vec{P})}
\newcommand{\uwoutdegP}[1]{d_{#1}^{\tt{out}}(\vec{P})}
\renewcommand{\L}{\mathcal{L}}
\renewcommand{\S}{\mathcal{S}}
\newcommand{\IND}{Indexing}
\newcommand{\ignore}[1]{}
\newcommand{\zo}{\{0,1\}}
\newcommand{\R}{{\mathbb{R}}}
\newcommand{\tO}{\tilde{O}}
\DeclareMathOperator{\wt}{wt}
\DeclareMathOperator{\ske}{{\bf sk}}
\DeclareMathOperator{\est}{{\bf est}}
\DeclareMathOperator{\IC}{IC}
\DeclareMathOperator{\polylog}{polylog}
\newcommand{\calP}{{\cal P}}
\newcommand{\Lhigh}{L_{\mathrm{high}}}
\newcommand{\indic}{\mathds{1}}
\newcommand{\EX}{\mathbf{E}}
\providecommand{\minn}[1]{\min\{{#1}\}}
\providecommand{\ceil}[1]{\lceil #1 \rceil}
\providecommand{\floor}[1]{\lfloor #1 \rfloor}
\providecommand{\card}[1]{\lvert#1\rvert}
\providecommand{\aset}[1]{\{ #1 \}}
\newcommand{\whp}{with high probability\xspace}
\newcommand{\half}{\tfrac12}
\def\compactify{\itemsep=0pt \topsep=0pt \partopsep=0pt \parsep=0pt}
\newcommand{\aanote}[1]{}
\newcommand{\rnote}[1]{}
\newtheorem{theorem}{Theorem}[section]
\newtheorem{lemma}[theorem]{Lemma}
\newtheorem{claim}[theorem]{Claim}
\newtheorem{observation}[theorem]{Observation}
\newtheorem{corollary}[theorem]{Corollary}
\newtheorem{definition}[theorem]{Definition}
\newcommand{\qinhides}[1]{}
\newenvironment{proof}{\trivlist\item[]\emph{Proof:}}%
{\unskip\nobreak\hskip 1em plus 1fil\nobreak$\Box$
\parfillskip=0pt%
\endtrivlist}
\begin{document}

\title{On Sketching Quadratic Forms}

\author{Alexandr Andoni\thanks{Columbia University. Work done in part while the author was at Microsoft Research Silicon Valley. Email: andoni@cs.columbia.edu} 
\and  Jiecao Chen\thanks{Indiana University Bloomington. Work supported in part by NSF CCF-1525024, and IU's Office of the Vice Provost for Research through the Faculty Research Support Program. Email: jiecchen@indiana.edu} 
\and Robert Krauthgamer\thanks{Weizmann Institute of Science, Israel. Work supported in part by a US-Israel BSF grant \#2010418, an Israel
Science Foundation grant \#897/13, and by the Citi Foundation. Part of the work was done at Microsoft Research Silicon Valley. Email: robert.krauthgamer@weizmann.ac.il} 
\and Bo Qin\thanks{Hong Kong University of Science and Technology. Work of this author partially supported by Hong Kong RGC GRF grant 16208415. Email: bqin@cse.ust.hk} 
\and David P. Woodruff\thanks{IBM Almaden Research. Work supported in part by the XDATA program of the Defense Advanced Research Projects Agency (DARPA), administered through Air Force Research Laboratory contract FA8750-12-C-0323.
Email: dpwoodru@us.ibm.com} 
\and Qin Zhang\thanks{Indiana University Bloomington. Work supported in part by NSF CCF-1525024, and IU's Office of the Vice Provost for Research through the Faculty Research Support Program. Email: qzhangcs@indiana.edu}
}


\maketitle

 \begin{abstract}
 We undertake a systematic study of sketching a quadratic form: given
an $n \times n$ matrix $A$, create a succinct sketch $\sk{A}$ which
can produce (without further access to $A$) a multiplicative
$(1+\eps)$-approximation to $x^T A x$ for any desired query $x \in
\mathbb{R}^n$. 
While a general matrix does not admit non-trivial sketches, 
positive semi-definite (PSD) matrices admit sketches of size 
$\Theta(\eps^{-2} n)$, via the Johnson-Lindenstrauss lemma, 
achieving the ``for each'' guarantee, namely, for each query $x$, with a constant
probability the sketch succeeds. (For the stronger ``for all''
guarantee, where the sketch succeeds for all $x$'s simultaneously,
again there are no non-trivial sketches.)

We design significantly better sketches for the important subclass of graph Laplacian matrices, which we also extend to symmetric diagonally dominant matrices. A sequence of work culminating in that of Batson, Spielman, and Srivastava (SIAM Review, 2014), shows that by choosing and reweighting $O(\eps^{-2} n)$  edges in a graph, one achieves the ``for all" guarantee. Our main results advance this front.

\begin{enumerate}
\item
For the ``for all'' guarantee, we prove that Batson et al.'s bound
is optimal even when we restrict to ``cut queries''
$x\in \{0,1\}^n$. Specifically, an arbitrary sketch that can
$(1+\eps)$-estimate the weight of {\em all} cuts $(S,\bar S)$ in an
$n$-vertex graph must be of size $\Omega(\eps^{-2} n)$ bits.
Furthermore, if the sketch is a cut-sparsifier (i.e., itself a
weighted graph and the estimate is the weight of the corresponding cut
in this graph), then the sketch must have $\Omega(\eps^{-2} n)$
edges. 

In contrast, previous lower bounds showed the bound only for {\em
  spectral-sparsifiers}.

\item 
For the ``for each'' guarantee, we design a sketch of size 
$\tilde O(\eps^{-1} n)$ bits for ``cut queries'' $x\in \{0,1\}^n$. 
We apply this
sketch to design an algorithm for the distributed minimum cut problem. We
prove a nearly-matching lower bound of $\Omega(\eps^{-1} n)$ bits.
For general queries $x \in \mathbb{R}^n$, we construct sketches of
size $\tilde{O}(\eps^{-1.6} n)$ bits.  
\end{enumerate}

Our results provide the first separation between the sketch size needed
for the ``for all'' and ``for each'' guarantees for Laplacian
matrices.


 \end{abstract}


\section{Introduction} \label{sec:intro}

Sketching emerges as a fundamental building block 
used in numerous algorithmic contexts to reduce memory, runtime, or communication requirements.
Here we focus on sketching \emph{quadratic forms}, defined as follows:
Given a matrix $A\in \R^{n\times n}$, compute a sketch of it, $\sk{A}$,
which suffices to estimate the quadratic form $x^T A x$ 
for every query vector $x\in \R^n$. 
Typically, we aim at $(1+\eps)$-approximation, 
i.e., the estimate is in the range $(1\pm \eps) x^T A x$, and
sketches that are randomized. The randomization guarantee comes in two flavors. The first one requires that the sketch $\sk{A}$ succeeds 
(produces a $(1+\eps)$-approximation) on all queries $x$ simultaneously. 
The second one requires that for every fixed query $x$, 
the sketch succeeds with high probability. 
The former is termed the ``for all'' guarantee and the latter the ``for each'' guarantee, following the prevalent terminology in compressive sensing. 
The main goal is then to design a sketch $\sk{A}$ of small size.

Sketching quadratic forms is a basic task with many applications. In
fact, the
definition from above abstracts several specific concepts studied before.  One important example is the sparsification of a graph
$G$, where we take the matrix $A$ to be the Laplacian
of $G$ and restrict the sketch to be of a specific form, namely, a
Laplacian of a sparse subgraph $G'$.  Then a cut-sparsifier
corresponds to the setting of query vectors $x \in \{0,1\}^n$, in which
case $x^TAx$ describes the weight of the corresponding cut in $G$.
Also, a spectral-sparsifier corresponds to query vectors $x\in \mathbb{R}^n$
in which case $x^TAx$ is a Laplacian Rayleigh quotient.
Cut queries to a graph have been studied in the context of privacy in
databases \cite{gru12,jt12,bbds13,u13,u14} where, for example,
vertices represent users and edges represent email correspondence
between users, and email correspondences between groups of users are
of prime interest.  These papers study also directional covariance
queries on a matrix, which correspond to evaluating the quadratic form
of a positive semidefinite~(PSD) matrix, as well as evaluating the
quadratic form of a low-rank matrix, which could correspond to, e.g.,
a user-movie rating matrix. Finally, sketching quadratic forms has
appeared and has been studied in other contexts 
\cite{AHK05, agm12, agm12b, KLMMS14, m14}.

Quadratic form computations also arise in numerical linear algebra.
Consider the least squares regression problem of minimizing $\norm{By-c}_2^2$ 
for an input matrix $B$ and vector $c$.
Writing the input as an adjoined matrix $M=[B, c]$ and denoting $x = (y, -1)$,
the objective is just $\norm{By-c}_2^2 = \norm{Mx}_2^2 = x^T M^T M x$, 
and thus regression queries 
can be modeled by a quadratic form over the PSD matrix $A=M^T M$.
Indeed, for a concrete example where a small-space sketch $\sk{A}$ 
leads to memory savings (in the data-stream model)
in regression problems, see \cite{cw09}.

To simplify the exposition,
let us assume that the matrix $A$ is of size $n\times n$
and its entries are integers bounded by a polynomial in $n$,
and fix the success probability to be $90\%$.
When we consider a graph $G$, we let $n$ denote its number of vertices,
with edge-weights that are positive integers bounded by a polynomial in $n$.
We use $\tilde{O}(f)$ to denote $f\cdot (\log f)^{O(1)}$,
which suppresses the distinction between counting bits and machine words.

The general quadratic forms, i.e., when the square matrix $A$ is
arbitrary, require a sketch of size $\tilde{\Omega}(n^2)$ bits, even
in the ``for each'' model 
(see Appendix \ref{app:general-each}).

Hence we restrict our attention to the class of PSD matrices $A$, 
and its subclasses like graph Laplacians, which occur in many applications.
We provide tight or near-tight bounds for these classes, 
as detailed in Table~\ref{tab:results}.
Overall, our results show that the specific class of matrices 
as well as the model (``for each'' vs.\ ``for all'' guarantee)
can have a dramatic effect on the sketch size, namely, 
quadratic vs.\ linear dependence on $n$ or on $\eps$.

\subsection{Our Contributions}
\label{sec:results} 

We start by characterizing the sketching complexity for general PSD
matrices $A$, in both the ``for all'' and ``for each'' models.  First,
we show that, for the ``for all'' model, sketching an arbitrary
PSD matrix $A$ requires $\Omega(n^2)$ bits
(Theorem~\ref{thm:all-exact}); i.e., storing the entire matrix is
essentially optimal.
In contrast, for the ``for each'' model, we show 
that the Johnson-Lindenstrauss lemma immediately yields 
a sketch of size $O(n \eps^{-2} \log n)$ bits and this is tight up to
the logarithmic factor (see Section~\ref{sec:psd-each}).
We conclude that the bounds for the two models are quite different:
quadratic vs.\ linear in $n$. 

Surprisingly, one can obtain significantly smaller sketches when $A$
is the Laplacian of a graph $G$, a subclass of PSD matrices that
occurs in many applications.  Specifically, we refer to a celebrated
result of Batson, Spielman, and Srivastava \cite{BSS14}, which is the
culmination of a rich line of research on graph sparsification
\cite{BK96,ST04a,ST11,SS11,FHHP11,KP12}.  They show that every graph
Laplacian $A$ 
admits a sketch in the ``for all'' model 
whose size is $O(n \eps^{-2} \log n)$ bits. This stands in contrast to
the $\tilde{\Omega}(n^2)$ lower bound for general PSD matrices.  Their
sketch has a particular structure: it is itself a
graph, consisting of a reweighted subset of edges in $G$ and works in
the ``for all'' model. 
Batson et al.~\cite{BSS14} also prove a lower bound for the case
of {\em spectral sparsification} (for cut sparsifiers, the bound
remained open).

The natural question is whether there are qualitatively better
sketches we can construct by relaxing the guarantees or considering
more specific cases. Indeed, we investigate this research direction by
pursuing the following concrete questions:
\begin{enumerate} \compactify
\renewcommand{\theenumi}{Q\arabic{enumi}}
\item \label{it:q1}
Can we improve the ``for all'' upper bound $O(n\eps^{-2})$ 
by using an arbitrary data structure?
\item \label{it:q2}
Can we improve the bound by restricting attention to {\em cut} queries? 
Specifically, can the optimal size of {\it cut-sparsifiers} 
be smaller than that of spectral-sparsifier?
\item \label{it:q3} 
Can we improve the ``for each'' bound beyond the $\tilde O(n\eps^{-2})$
bound that follows from general PSD matrices result
(and also from the ``for all'' model via \cite{BSS14})?
\end{enumerate}
We make progress on all of the above questions, 
often providing (near) tight results.


In all of these questions,
the main quantitative focus is the dependence 
on the accuracy parameter $\eps$.
We note that improving the dependence on $\eps$ is important for a
variety of reasons.  From a theoretical angle, a quadratic dependence
is common for estimates with two-sided error, and hence sub-quadratic
dependence elucidates new interesting phenomena.  From a practical
angle, we can set $\eps$ to be the smallest value for which the sketch
still fits in memory (i.e., we can get {\em better} estimates with the
same memory).  In general, quadratic dependence might be
prohibitive for large-scale matrices: if, say, $\eps$ is $1\%$ then
$1/\eps^2 = 10000$. 

We answer \ref{it:q1} negatively by showing that every sketch that
satisfies the ``for all'' guarantee requires $\Omega(n \eps^{-2})$
bits of space, even if the sketch is an arbitrary data structures (see
Section~\ref{sec:sdd-all}).  This matches the upper bound of
\cite{BSS14} (up to a logarithmic factor, which stems from the
difference between counting words and bits).

Our answer to \ref{it:q1} essentially answers \ref{it:q2} as well: our
lower bound actually holds even if we only consider cut queries $x\in
\{0,1\}^n$.  Indeed, an immediate consequence of the $\Omega(n
\eps^{-2})$ bits lower bound is that a cut-sparsifier $G'$ must have
$\Omega(n \eps^{-2}/\log n)$ edges.
We strengthen this further and obtain a tight lower bound of 
$\Omega(n \eps^{-2})$ edges (even in the case when the cut-sparsifier
$G'$ is a not necessarily a subgraph of $G$). 
Such an edge lower bound was not known before.
The previous lower bound for a cut-sparsifier $G'$, 
due to Alon~\cite{Alon97-expansion}, uses two additional requirements ---
that the sparsifier $G'$ has \emph{regular degrees} and
\emph{uniform edge weights} ---
to reach the same conclusion that $G'$ has $\Omega(n/\eps^2)$ edges. Put 
differently, Alon's lower bound is \emph{quantitatively} optimal
--- it concludes the tight lower bound of $\Omega(n/\eps^2)$ edges --- 
but it is unsatisfactory
\emph{qualitatively}, as it does not cover a cut-sparsifier $G'$ that
has edge weights or has non-regular degrees, which may potentially
lead to a smaller sparsifier. Similarly, the results of \cite{Nil91,
  BSS14} apply to spectral-sparsification, which is a harder problem
than cut-sparsification.  Our result subsumes all of these bounds, and
for cut sparsifiers it is in fact the first lower bound under no assumption. 
Our lower bound holds even for input graphs $G$ that are unweighted.

On the upside, we answer \ref{it:q3} positively by showing how
to achieve the ``for each'' guarantee using $n \eps^{-1} \polylog(n)$
bits of space (see Section~\ref{sec:cut-each}).  This bound can be
substantially smaller than in the ``for all'' model when $\eps$ is
small: e.g., when $\eps = 1/\sqrt{n}$ we obtain size
$n^{3/2}\polylog(n)$ instead of the $O(n^2)$ needed in the ``for all''
model.  We also show that $\Omega(n \eps^{-1})$ bits of space is
necessary for the ``for each'' guarantee (Theorem~\ref{thm:nEpsLB}).

We then give an application for the ``for each'' sketch to
showcase that it is useful algorithmically despite having a guarantee
that is is weaker than that of a ``for all'' cut-sparsifier.  In
particular, we show how to $(1+\eps)$-approximate the global minimum
cut of a graph whose edges are distributed across multiple servers (see
Section~\ref{sec:min-cut}).

Finally, we consider a ``for each'' sketch of a Laplacian matrix 
under arbitrary query vectors $x \in \mathbb{R}^n$, 
which we refer to as {\em spectral queries} on the graph $G$.
Such spectral queries give more flexibility than cut queries. 
For example, if the graph corresponds to a physical system, 
e.g., the edges correspond to electrical resistors, 
then spectral queries can evaluate the total heat dissipation of the system 
for a given set of potentials on the vertices. 
Also, a spectral query $x$ that is a permutation of $\{1, 2, \ldots, n\}$ gives the average squared distortion of a line embedding of $G$.  
We design in Section~\ref{sec:spectral-each} a sketch for spectral queries
that uses $n \eps^{-1.6} \polylog(n)$ bits of space. These upper
bounds also apply to the symmetric diagonally-dominant (SDD) matrices.

Our results and previous bounds are summarized in
Table~\ref{tab:results}.


\begin{table*}[t]
  \centering
  \begin{tabular}{|l|ll|ll|}
	\hline
	& \multicolumn{2}{c|}{``for all'' model}  & \multicolumn{2}{c|}{``for each'' model}\\
	Matrix family & upper bound & lower bound & upper bound  & lower bound \\
	\hline
	\hline
	General
	 & $\tilde{O}(n^2)$ &  $\Omega(n^2)$  &   $\tilde{O}(n^2)$ &
        $ {\Omega}(n^2)$ App.~\ref{app:general-each}
        \\
	\hline
	PSD
	 & $\tilde{O}(n^2)$ &  $ \Omega(n^2)$
         Sec.~\ref{sec:psd-all} 
         &  $ \tilde{O}(n \epsilon^{-2})$
         Sec.~\ref{sec:psd-each} 
        &  $ \Omega(n \eps^{-2})$ 
        Sec.~\ref{sec:psd-each} 
        \\
	\hline
	Laplacian, SDD &  $\tilde{O}(n \eps^{-2})$ \cite{BSS14} &
        $\Omega(n \eps^{-2})$ \cite{BSS14}  & $ \tilde{O}(n
        \eps^{-1.6})$ Sec.~\ref{sec:spectral-each} & $\Omega(n \eps^{-1})$ Sec.~\ref{sec:cut-each} \\
	edge-count: & $O(n \eps^{-2})$ \cite{BSS14} &  $\Omega(n \eps^{-2})$ \cite{BSS14} & &\\
	\hline
	Laplacian+cut queries & $\tilde{O}(n \eps^{-2})$ \cite{BSS14}
        &  $\Omega(n \eps^{-2})$  Sec.~\ref{sec:sdd-all}  &
        $\tilde{O}(n \eps^{-1})$ Sec.~\ref{sec:cut-each} &  $\Omega(n \eps^{-1})$ Sec.~\ref{sec:cut-each}\\
	edge-count: & $O(n \eps^{-2})$ \cite{BSS14} &  $\Omega(n \eps^{-2})$ Sec.~\ref{sec:sdd-all}  & &\\
	\hline
  \end{tabular}
  \caption{Bounds for sketching quadratic forms,
expressed in bits, except when counting edges. 
}
  \label{tab:results}
\end{table*}

\subsection{Highlights of Our Techniques}
\label{sec:technique} 
In this section we give technical overviews for our three main
results: (1) the lower bound for cut queries on Laplacian matrices
(answering Q1 and Q2);
(2) the upper bound for cut queries on Laplacian matrices; and (3) the
upper bound for spectral queries on Laplacian matrices (answering Q3).  We always use
$G$ to denote the corresponding graph of the considered Laplacian
matrix.


\subsubsection{Lower Bound for Sketching Laplacian Matrices with Cut Queries, ``For All'' Model}

We first prove our $\Omega(n \eps^{-2})$-bit lower bound using communication complexity for arbitrary data structures. We then show how to obtain an $\Omega(n\eps^{-2})$
edge lower bound for cut sparsifiers by encoding a sparsifier in a careful way
so that if it had $o(n/\eps^2)$ edges, it would violate an $\Omega(n \eps^{-2})$ bit
lower bound in the communication problem. 

 For the $\Omega(n \eps^{-2})$ bit lower bound, the natural
thing to do would be to give Alice a graph $G$, and Bob a cut
$S$. Alice produces a sketch of $G$ and sends it to Bob, who must
approximate the capacity of $S$. The communication cost of this
problem lower bounds the sketch size. However, as we just saw, Alice
has an upper bound with only $\tO(n \eps^{-1})$ bits of communication. We
thus need for Bob to solve a much harder problem which uses the fact
that Alice's sketch preserves all cuts.

We let $G$ be a disjoint union of $\eps^2 n/2$ graphs $G_i$, where
each $G_i$ is a bipartite graph with $\frac{1}{\eps^2}$ vertices in
each part. Each vertex in the left part is independently connected to
a random subset of half the vertices in the right part. Bob's problem
is now, given a vertex $v$ in the left part of one of the $G_i$, as
well as a subset $T$ of half of the vertices in the right part of that $G_i$,
decide if $|N(v) \cap T| > \frac{1}{4\eps^2} + \frac{c}{\eps}$ ($N(v)$ is the set of neighboring vertices of $v$), or if
$|N(v) \cap T| < \frac{1}{4\eps^2} - \frac{c}{\eps}$, for a small
constant $c > 0$. Most vertices $v$ will satisfy one of these
conditions, by anti-concentration of the binomial distribution. Note
that this problem is not a cut query problem, and so {\em a priori} it is
not clear how Bob can use Alice's sketch to solve it.

To solve the problem, Bob will do an exhaustive enumeration on cut
queries, and here is where we use that Alice's sketch preserves all
cuts. Namely, for each subset $S$ of half of the vertices in the left
part of $G_i$, Bob queries the cut $S \cup T$.  As Bob ranges over all
(exponentially many) such cuts, what will happen is that for most
vertices $u$ in the left part for which $|N(u) \cap T| >
\frac{1}{4\eps^2} + \frac{c}{\eps}$, the capacity of $S \cup T$ is a
``little bit'' larger if $u$ is excluded from $S$. This little bit is
not enough to be detected, since $|N(u) \cap T| = \Theta \left
(\frac{1}{\eps^2} \right )$ while the capacity of $S \cup T$ is
$\Theta \left ( \frac{1}{\eps^4} \right )$. However, as Bob range over
all such $S$, he will eventually get lucky in that $S$ contains all
vertices $u$ for which $|N(u) \cap T| > \frac{1}{4\eps^2} +
\frac{c}{\eps}$, and now since there are about $\frac{1}{2\eps^2}$
such vertices, the little $\frac{c}{\eps}$ bit gets ``amplified'' by a
factor of $\frac{1}{2\eps^2}$, which is just enough to be detected by
a $(1+\eps)$-approximation to the capacity of $S \cup T$. If Bob finds
the $S$ which maximizes the (approximate) cut value $S \cup T$, he can
check if his $v$ is in $S$, and this gives him a correct answer with
large constant probability. 

We believe our main contribution
is in designing a communication problem which requires Alice's sketch 
to preserve all cuts instead of only a single cut. 
There are also several details in the communication
lower bound for the problem itself, including a direct-sum theorem
for a constrained version of the Gap-Hamming-Distance problem,
which could be independently useful. 

For the $\Omega(n \eps^{-2})$ edge lower bound for cut sparsifiers, 
the straightforward encoding would
encode each edge using $O(\log n)$ bits, and cause us to lose a $\log n$ factor
in the lower bound. Instead, we show how to randomly
round each edge weight in the sparsifier 
to an adjacent {\it integer}, and observe that the integer 
weights sum up to a small value in our communication problem. 
This ultimately allows to transmit, in a communication-efficient manner,
all the edge weights together with the edge identities.


\subsubsection{Upper Bound for Sketching Laplacian Matrices with Cut Queries, ``For Each'' Model}

To discuss the main ideas behind our $\tilde O(n \eps^{-1})$-bit sketch construction for Laplacian matrices with queries $x \in \{0,1\}^n$, let us first give some intuition on why the previous algorithms cannot yield a $\tilde O(n \eps^{-1})$ bound, and show how our algorithm circumvents these
roadblocks on a couple of illustrative examples. For concreteness, it
is convenient to think of $\eps=1/\sqrt{n}$.

All existing cut (and spectral) sparsifiers algorithms construct the sparsifier by taking a subgraph of the original graph $G$, with the
``right'' re-weightening of the edges \cite{BK96,SS11,BSS14,FHHP11,KP12}. 
In fact, except for \cite{BSS14}, they all proceed by sampling edges 
independently, each with its own probability (that depends on the graph).

Consider for illustration the complete graph. In this case, 
these sampling schemes employ a uniform probability $p\approx
\tfrac{1/\eps^2}{n}$ of sampling every edge. It is not hard to
see that one cannot sample edges with probability less than $p$, as
otherwise anti-concentration results suggest that even the degree of a
vertex (i.e., the cut of a ``singleton'') is not preserved within 
$1+\eps$ approximation. 
Perhaps a more interesting example is a random graph ${\mathcal G}_{n,1/2}$;
if edges are sampled independently with (roughly) uniform probability,
then again it cannot be less than $p$, because of singleton cuts.
However, if we aim for a sketch for the complete graph or ${\cal G}_{n,1/2}$,
we can just store the degree of each vertex using only $O(n)$
space, and this will allow us to report the value of every singleton
cut (which is the most interesting case, as the standard deviation for
these cut values have multiplicative order roughly $1\pm \eps$).
These observations suggest that {\em sketching} a graph may go beyond 
considering a subgraph (or a different graph) to represent the
original graph $G$.

Our general algorithm proceeds in several steps. The core of our
algorithm is a procedure for handling cuts of value $\approx 1/\eps^2$ 
in a graph with unweighted edges, which proceeds as follows. 
First, repeatedly partition the graph along every {\em sparse} cut, 
namely, any cut whose sparsity is below $1/\eps$. 
This results with a partition of the vertices into some number of parts.
We store the cross-edges (edge connecting different parts) explicitly.
We show the number of such edges is only $\tilde O(n \eps^{-1})$, 
and hence they fit into the space allocated for the sketch.
Obviously, the contribution of these edges to any desired cut $w(S,\bar S)$ 
is easy to compute from this sketch.

The sketching algorithm still needs to estimate the contribution 
(to a cut $w(S,\bar S)$ for a yet unknown $S\subset V$)
from edges that are inside any single part $P$ of the partition. 
To accomplish this, we sample $\approx 1/\eps$ edges out of each vertex, 
and also store the exact degrees of all vertices. 
Then, to estimate the contribution of edges inside a part $P$ to $w(S,\bar S)$,
we take the sum of (exact) degrees of all vertices in $S\cap P$,
\emph{minus} an estimate for (twice) the number of edges inside $S\cap P$
(estimated from the edge sample).
This ``difference-based'' estimate has a smaller variance than 
a direct estimate for the number edges in $(S\cap P, \bar S\cap P)$ 
(which would be the ``standard estimate'', in some sense
employed by previous work). 
The smaller variance is achieved thanks to the facts that 
(1) the assumed cut is of size (at most) $1/\eps^2$; and 
(2) there are no sparse cuts in $P$.

Overall, we achieve a sketch size of $\tilde O(n \eps^{-1})$.  We can
construct the sketch in polynomial time by employing an $O(\sqrt{\log
  n})$-approximation algorithm for sparse cut
\cite{ARV09,sherman2009breaking} or faster algorithms with
$(\log^{O(1)} n)$-approximation \cite{madry2010fast}.


\subsubsection{Upper Bound for Sketching Laplacian Matrices with Spectral Queries, ``For Each'' Model}
Now we consider spectral queries $x \in \mathbb{R}^n$,
starting first with a space bound of $n \eps^{-1.66} \textrm{polylog}(n)$ bits, 
and then discuss how to improve it further to 
$n \eps^{-1.6} \polylog(n)$. 

We start by making several simplifying assumptions. 
The first is that the total number of edges is $O(n \eps^{-2})$. 
Indeed, we can first compute a spectral sparsifier \cite{BSS14}. It is useful
to note that if all edges weights were between $1$ and $\textrm{poly}(n)$, then after spectral sparsification
the edge weights are between $1$ and $\textrm{poly}(n)$, for a possibly larger polynomial. Next, we 
can assume all edge weights are within a factor of
$2$. Indeed, by linearity of the Laplacian, if all edge weights are in $[1, \textrm{poly}(n)]$, then 
we can group the weights 
into powers of $2$ and sketch each subset of edges separately, incurring an $O(\log n)$ factor blowup in space. 
Third, and most importantly, we assume that Cheeger's constant $h_G$
of each resulting graph $G = (V,E)$ satisfies $h_G > \eps^{1/3}$, 
where recall that $h_G = \inf_{S \subset V} \Phi_G(S)$ where
$$
  \Phi_G(S) = \frac{w(S, \bar{S})}{\min\{\vol(S), \vol(\bar{S})\}} 
  \; \text{ and } \;
  \vol(S) = \sum_{u \in S} w(\{u\}, V\setminus \{u\} ) .
$$
We can assume $h_G > \eps^{1/3}$
because if it were not, then by definition of $h_G$ there is a sparse cut, that is, 
$\Phi_G(S) \leq \eps^{1/3}$. We can find a sparse cut (a polylogarithmic approximation suffices), store all
sparse cut edges in our data structure, and remove them from the graph $G$. We can then recurse on
the two sides of the cut. By a charging argument we can bound the total
number of edges stored across all sparse cuts. 

As for the actual data structure achieving our $n \eps^{-1.66} \textrm{polylog}(n)$ upper bound, 
we first store the weighted degree 
$\delta_u(G) = \sum_{v: (u,v) \in E} w(u,v)$ of each node (as that for the cut queries). A difference is that we now partition vertices into
``heavy'' and ``light'' classes $V_L$ and $V_H$, where $V_H$ contains those vertices whose weighted degree
exceeds a threshold, and light consists of the remaining vertices. We include all edges incident to light
vertices in the data structure. The remaining edges have both endpoints heavy and for each heavy vertex, we
randomly sample about $\eps^{-5/3}$ of its neighboring heavy edges; edge ${u,v}$ is sampled with probability
$\frac{w(u,v)}{\delta_u(G_H)}$ where $\delta_u(G_H)$ is the sum of weighted edges from the heavy vertex $u$
to neighboring heavy vertices $v$. 

For the estimation procedure, we write
$x^T L x = \sum_{(u,v) \in E} (x_u - x_v)^2 w(u,v)$ 
as
\begin{align*}
  x^T L x &= \sum_{u \in V} \delta_u(G) x_u^2 - \sum_{u \in V_L, v \in V} x_u x_v w(u,v)\\
  &- \sum_{u \in V_H, v \in V_L} x_u x_v w(u,v) - \sum_{u \in V_H} \sum_{v \in V_H} x_u x_v w(u,v)
\end{align*}
and observe that our data structure has the first three summations on the right exactly; error can only come from
estimating $\sum_{u \in V_H} \sum_{v \in V_H} x_u x_v w(u,v)$, 
for which we use our sampled heavy edges. 
Since this summation has only heavy edges, 
we can control its variance and upper bound it by $\eps^{10/3} \|D^{1/2} x\|_2^4$,
where $D$ is a diagonal matrix with the degrees of $G$ on the diagonal.
We can then upper bound this norm by relating it to the first non-zero eigenvalue $\lambda_1(\tilde{L})$ 
of the normalized Laplacian $\tilde{L}$, which cannot be too small, since by Cheeger's inequality, 
$\lambda_1(\tilde{L}) \geq {h_G^2}/{2}$, and we have ensured that $h_G$ is large.

To improve the upper bound to $n \eps^{-1.6} \textrm{polylog}(n)$ bits, 
we partition the edges of $G$ into more refined groups, 
based on the degrees of their endpoints. More precisely, we classify edges
$e$ by the minimum degree of their two endpoints, call this number $m(e)$, 
and two edges $e, e'$ are in the same class if 
the nearest power of $2$ of $m(e)$ and of $m(e')$ is the same. 
We note that the total number of vertices
with degree in $\omega(\eps^{-2})$ is $o(n)$, since we are starting with a graph with only $O(n \eps^{-2})$ edges;
therefore, all edges $e$ with $m(e) = \omega(\eps^{-2})$ can be handled by applying our entire procedure recursively
on say, at most $n/2$ nodes. Thus, it suffices to consider $m(e) \leq \eps^{-2}$. 

The intuition now is that as $m(e)$ increases, the variance of our estimator decreases since the two endpoints
have even larger degree now and so they are even ``heavier'' than before. Hence, we need fewer edge samples
when processing a subgraph restricted to edges with large $m(e)$. On the other hand, a graph on edges $e$ 
for which every value of $m(e)$ is small simply cannot have too many edges; indeed, every edge is incident
to a low degree vertex. Therefore, when we partition the graph to ensure that Cheeger's constant $h_G$ is small,
since there are fewer total edges (before we just assumed this number was upper bounded by $n \eps^{-2}$), now
we pay less to store all edges across sparse cuts. Thus, we can balance these two extremes, and doing so we
arrive at our overall $n \eps^{-1.6} \textrm{polylog}(n)$ bit space bound. 

Several technical challenges arise when performing this more refined partitioning. One is that when doing the
sparse cut partitioning to ensure the Cheeger's constant is small, we destroy the minimum degree of endpoints of
edges in the
graph. Fortunately we can show that for our setting of parameters, the total number of edges removed along
sparse cuts is small, and so only a small number of vertices have their degree drop by more than a factor of $2$.
For these vertices, we can afford to store all edges incident to them directly, so they do not contribute
to the variance. Another issue that arises is that to have small variance, we would like to ``assign'' each
edge $\{u,v\}$ to one of the two endpoints $u$ or $v$. If we were to assign it to both, we would have higher
variance. This involves creating a companion
or ``buddy graph'' which is a directed graph associated with the original graph. This directed graph assists
us with the edge partitioning, and tells us which edges to potentially sample from which vertices.

\subsection{Application to Distributed Minimum Cut}
\label{sec:min-cut}

We now illustrate how a ``for each'' sketch can be useful algorithmically
despite its relaxed guarantees compared to a cut sparsifier.
In particular, we show how to $(1+\eps)$-approximate the global minimum cut
of a graph whose edges are distributed across multiple servers. 
Distributed large-scale graph
computation has received recent attention, where protocols for
distributed minimum spanning tree, breadth-first search, shortest paths,
and testing connectivity have been studied, among other problems,
see, e.g., \cite{knpr15,wz13}. 
In our case, each server locally computes the ``for each'' data structure 
of Sec.~\ref{sec:cut-each} on its subgraph (for accuracy $\eps$), 
and sends it to a central server.
Each server also computes a classical cut sparsifier, with
fixed accuracy $\eps' = 0.2$, and sends it to the central server. 
Using the fact that cut-sparsifiers can be merged, 
the central server obtains a $(1 \pm \eps')$-approximation to all cuts
in the union of the graphs. By a result of Henzinger and Williamson \cite{hw96}
(see also Karger \cite{karger2000minimum}), there are only $O(n^2)$
cuts strictly within factor $1.5$ of the minimum cut, and they can be found efficiently
from the sparsifier (see \cite{karger2000minimum} for an $\tO(n^2)$ time way of implicitly
representing all such cuts). The central server then
evaluates each ``for each'' data structure on each of these cuts, and sums up the
estimates to evaluate each such cut up to factor $1+\eps$,
and eventually reports the minimum found. 
Note that the ``for each'' data structures can be assumed,
by independent repetitions, to be correct with probability $1-1/n^4$ 
for any fixed cut (and at any server),
and therefore correct with high probability on all $O(n^2)$ candidate cuts.


\section{Preliminaries}
Let $G = (V, E, w)$ be an undirected graph with weight function $w : V \times V \to \mathbb{R}_{\ge 0}$. 
Denote $n = \card V$ and $m = \card E$, and assume by convention 
that $w(u,v)>0$ whenever $(u,v)\in E$, and $w(u,v)=0$ otherwise.
Let $w_{\max}$ and $w_{\min}$ be the maximum and minimum (positive) 
weights of edges in $E$ respectively.

For a vertex $u \in V$, let $\delta_u(G)$ be the weighted degree of $u$ in $G$, i.e. $\delta_u(G) = \sum_{v \in V} w(u,v)$, and let $d_u(G)$ be the unweighted degree of $u$ in $G$, i.e. $d_u(G) = \abs{\{v \in V\ |\ w(u,v) > 0\}}$.

For two disjoint vertex sets $A$ and $B$, let $\partial(A,B)=\{(u,v)\in E\ | u\in A,v\in B\}$ be the set of edges across two vertex sets $A$ and $B$.  Abusing the notation a little bit, let $\partial(A,A)$ denotes the set of edges whose endpoints are both inside $A$. Given a cut $(S,\bar{S})$ with $S\subset V$, we denote the cut weight in $G$ as $w_G(S,\bar{S})=\sum_{e\in \partial(S,\bar{S})}w(e)$ (we often omit the subscript $G$ when it is clear from the context).

Let $\vec{G} = (V, \vec{E}, w)$ be a directed positively weighted graph. For a vertex $u \in V$, define weighted in- and out-degree $\windeg{u} = \sum_{v: (v, u) \in \vec{E}} w(v, u)$, $\woutdeg{u} = \sum_{v: (u, v) \in \vec{E}} w(u, v)$, and unweighted in- and out-degree $\uwindeg{u} = |\{v\ |\ (v, u) \in \vec{E}, w(v, u) > 0\}|$, and $\uwoutdeg{u} = |\{v\ |\ (u, v) \in \vec{E}, w(u, v) > 0\}|$.

For a vertex set $S \subset V$, let $G(S)$ be the vertex-induced subgraph of $G$, and $E(S)$ be the edge set of $G(S)$. And for an edge set $F \subset E$, let $G(F)$ be the edge-induced subgraph of $G$, and $V(F)$ be the vertex set of $G(F)$; definitions will be the same if edges are directed.

Let $L(G)$ be the unnormalized Laplacian of $G$, and let $\tilde{L}(G)$ be the normalized Laplacian of $G$.  

Let $\lambda_0(L),\lambda_1(L), \ldots, \lambda_{n-1}(L)$ be the eigenvaules of a Laplacian matrix $L$ such that $\lambda_0(L) \leq \lambda_1(L) \le \ldots \le \lambda_{n-2}(L)\leq \lambda_{n-1}(L)$. Recall that the smallest eigenvalue of $L$ is always $\lambda_0(L) = 0$. We can always assume $G$ is connected, since otherwise we can sketch each connected component separately.  We thus have $\lambda_1(L) >0$.


A random variable $X$ is called a $(1+\eps, \delta)$-approximation of $Y$ 
if $\Pr[X\in (1\pm\eps)Y] \geq 1-\delta$.
We usually assume $\eps > 1/n$ (or similar) since otherwise 
the sketch can just store the whole graph.

We define cut and spectral sketch of a graph as follows.

\begin{definition}[$(1+\eps, \delta)$-cut-sketch]
A sketch of $G = (V,E,w)$, denoted $\sk{G}$, is called a {\em $(1+\eps, \delta)$-cut-sketch} of $G$ if there is a reconstruction function that given $\sk{G}$ and $S\subset V$,  outputs a $(1+\eps,\delta)$-approximation to the weight of the cut $(S,\bar{S})$, i.e., $w(S,\bar{S})$.
\end{definition}

\begin{definition}[$(1+\eps, \delta)$-spectral-sketch]
A sketch of $G = (V,E,w)$, denoted $\sk{G}$, is called a {\em $(1+\eps, \delta)$-spectral-sketch} of $G$ if there is a reconstruction function that given $\sk{G}$ and $x\in\mathbb{R}^n$, outputs a $(1+\eps,\delta)$-approximation to $x^TL(G)x$.
\end{definition}







We will need Cheeger's constant and Cheeger's inequality.

\begin{definition}[Cheeger's constant]
Given a graph $G = (V,E,w_G)$, for any $S \subset V$, let $\vol_G(S) = \sum_{u\in S}\delta_u(G)$ be the weighted volume of $S$ in $G$. Let $\Phi_G(S) = \frac{w_G(S, \overline{S})}{\min\{\vol_G(S), \vol_G(\bar{S})\}}$ be the {\em conductance} of the cut $(S, \bar{S})$. We define {\em Cheeger's constant} to be $h_G = \min_{S\subset V}\Phi_G(S)$.
\end{definition}

\begin{lemma}[Cheeger's inequality]
\label{lem:cheeger}
Let $G = (V, E, w)$ be an undirected positively weighted graph. Let $\tilde L$ be the normalized Laplacian of $G$. Let $h_G$ be the Cheeger's constant of graph $G$. The following inequality holds,
\begin{equation}
  \label{eq:cheeger}
  \lambda_1(\tilde L) \geq \frac{h_G^2}{2}.
\end{equation}
\end{lemma}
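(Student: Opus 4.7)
The plan is to establish the inequality by the classical \emph{sweep-cut} (level-set) argument. Starting from the variational characterization
$$\lambda_1(\tilde L) \;=\; \min_{\substack{g \in \R^V \\ \sum_u \delta_u(G)\, g_u = 0}} \frac{\sum_{(u,v)\in E} w(u,v)(g_u - g_v)^2}{\sum_u \delta_u(G)\, g_u^2},$$
obtained by substituting $g = D^{-1/2} f$ in the Rayleigh quotient for $\tilde L$ (with $D$ the diagonal matrix of weighted degrees), let $g$ be an eigenvector attaining $\lambda_1$. Put $V_+ = \{u : g_u > 0\}$; by replacing $g$ with $-g$ if necessary, we may assume $\vol_G(V_+) \leq \vol_G(V)/2$.

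The first step is to pass to the positive part $h$ defined by $h_u = \max(g_u, 0)$ and to prove
$$\frac{\sum_{(u,v) \in E} w(u,v)(h_u - h_v)^2}{\sum_u \delta_u(G)\, h_u^2} \;\leq\; \lambda_1(\tilde L).$$
This does \emph{not} follow from $g$ being a Rayleigh minimizer, since restricting to $V_+$ destroys orthogonality with $D\mathbf{1}$. Instead I would use the pointwise eigenvector identity $\sum_v w(u,v)(g_u - g_v) = \lambda_1\, \delta_u(G)\, g_u$ at each $u \in V_+$: multiplying by $h_u = g_u$ and summing produces $\lambda_1 \sum_u \delta_u(G)\, h_u^2$ on the right, while on the left, edges inside $V_+$ contribute $w(u,v)(h_u - h_v)^2$ after symmetrization, and every crossing edge $(u,v)$ with $u \in V_+$, $v \notin V_+$ contributes $w(u,v)(h_u^2 - g_v h_u) \geq w(u,v)(h_u - h_v)^2$, using $g_v \leq 0$ and $h_v = 0$.

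The second step is the sweep cut itself. For $t \geq 0$ set $S_t = \{u : h_u^2 > t\}$. Fubini gives
$$\int_0^\infty \vol_G(S_t)\, dt \;=\; \sum_u \delta_u(G)\, h_u^2, \qquad \int_0^\infty w_G(S_t,\overline{S_t})\, dt \;=\; \sum_{(u,v)\in E} w(u,v)\,|h_u^2 - h_v^2|,$$
so by averaging there is a threshold $t^\star$ with $w_G(S_{t^\star},\overline{S_{t^\star}}) / \vol_G(S_{t^\star})$ at most the ratio of these two integrals. Applying Cauchy--Schwarz to the factorization $|h_u^2 - h_v^2| = |h_u - h_v|\cdot(h_u + h_v)$ and the bound $(h_u + h_v)^2 \leq 2(h_u^2 + h_v^2)$ turns this ratio into at most $\sqrt{2\,\lambda_1(\tilde L)}$. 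Since $S_{t^\star} \subseteq V_+$ we have $\vol_G(S_{t^\star}) \leq \vol_G(V)/2$, so the ratio equals $\Phi_G(S_{t^\star})$, and therefore $h_G \leq \Phi_G(S_{t^\star}) \leq \sqrt{2\lambda_1(\tilde L)}$, which rearranges to the claim.

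The step I expect to be the main obstacle is the first one: naively plugging $h$ into the Rayleigh quotient of $\tilde L$ fails because $h$ is generally not orthogonal to $D\mathbf{1}$, so the min characterization does not apply to $h$ directly. Exploiting the pointwise eigenvector equation and the sign information $g_v \leq 0$ on $V \setminus V_+$ to dominate the crossing-edge terms is the crux that makes the localized Rayleigh bound go through, after which the sweep cut and Cauchy--Schwarz are routine.
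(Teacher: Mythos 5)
The paper does not supply a proof of this lemma; it is stated as a citation of the classical Cheeger inequality, so there is no ``paper's own proof'' to compare against. Your argument is the standard proof of the hard direction of Cheeger's inequality (restriction to the positive part via the pointwise eigenvector identity, followed by the sweep cut and Cauchy--Schwarz), and I checked it line by line: the localized Rayleigh bound, the two Fubini identities, the averaging step producing $t^\star$, the Cauchy--Schwarz factorization of $|h_u^2 - h_v^2|$, and the final volume bound $\vol_G(S_{t^\star}) \leq \vol_G(V_+) \leq \vol_G(V)/2$ are all correct. Your remark that the localized Rayleigh quotient cannot be obtained by plugging $h$ into the variational characterization (because $h$ is not $D$-orthogonal to $\mathbf{1}$) correctly identifies the one step that requires the eigenvector equation rather than just the min-characterization; the crossing-edge estimate $g_u^2 - g_u g_v \geq g_u^2 = (h_u - h_v)^2$ using $g_v \leq 0$ and $h_v = 0$ is exactly what makes it work. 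In short: the proposal is a complete and correct proof of a result the paper only cites.
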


We introduce another parameter of a graph.
\begin{definition}[Expansion constant]
For any $S \subset V$, we define {\em expansion constant} of graph $G = (V, E)$ to be $\Gamma_G=\min_{|S|\leq n/2} \frac{|\partial(S,\bar{S})|}{|S|}$.
\end{definition}




\section{Symmetric Diagonally Dominant Matrices, ``For All'' Model}
\label{sec:sdd-all}

In this section we prove a lower bound $\Omega(n/\eps^2)$ on the size of ``for all'' sketches for symmetric diagonally dominant (SDD) matrices. 
Our lower bound in fact applies even in the the special case 
of a graph Laplacian $A$ with query vectors $x \in \{0,1\}^n$.
Concretely, we prove the following two theorems in Sections~\ref{sec:sketchLB} and~\ref{sec:sparsifierLB}, respectively. 
The first holds for arbitrary sketches, 
and the second holds for sketches that take the form of a {\em graph}.

\begin{theorem} \label{thm:sketchLB}
Fix an integer $n$ and $\veps\in (1/\sqrt n, 1)$, 
and let $\ske=\ske_{n,\veps}$ and $\est=\est_{n,\veps}$ be possibly randomized
sketching and estimation algorithms for unweighted graphs on vertex set $[n]$.
Suppose that for every such graph $G=([n],E)$, 
with probability at least $3/4$ we have%
\footnote{The probability is over the randomness of the two algorithms;
equivalently, the two algorithms have access to a common source of random bits.}
$$
  \forall S\subset [n],\quad \est\big(S,\ske(G)\big) \in\ (1\pm\veps)\cdot \card{\partial(S,\bar S)}.
$$
Then the worst-case size of $\ske(G)$ is $\Omega(n/\veps^2)$ bits.
\end{theorem}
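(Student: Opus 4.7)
The plan is to prove the lower bound by reduction from a two-party one-way communication game: Alice holds a random graph $G$, sends $\ske(G)$ to Bob, and Bob must solve a hard decision problem using the sketch together with cut queries. Any such sketch then has size at least the communication cost of the game, which I will show is $\Omega(n/\eps^2)$.

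\textbf{Hard distribution.} Partition $[n]$ into $k = \eps^2 n / 2$ disjoint blocks, and on each block place an independent random bipartite graph $G_i$ on parts $L_i, R_i$ with $\card{L_i}=\card{R_i}= 1/\eps^2$, where each $u \in L_i$ is independently assigned a uniformly random neighborhood $N(u) \subseteq R_i$ of size $1/(2\eps^2)$. Alice receives $G=\bigsqcup_i G_i$; Bob receives a uniform triple $(i^*,v^*,T^*)$ with $v^* \in L_{i^*}$ and $T^* \subseteq R_{i^*}$ of size $1/(2\eps^2)$. Bob's task is a Gap-Hamming-Distance-style decision: output ``high'' if $\card{N(v^*) \cap T^*} \ge 1/(4\eps^2) + c/\eps$ and ``low'' if $\card{N(v^*) \cap T^*} \le 1/(4\eps^2) - c/\eps$, for a small constant $c>0$. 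By anti-concentration of the hypergeometric distribution, each case occurs with constant probability over the randomness of $G$ and $T^*$.

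\textbf{Reducing to cut queries.} For each $S \subseteq L_{i^*}$ with $\card{S}=1/(2\eps^2)$, Bob considers the cut $(S\cup T^*,\overline{S\cup T^*})$. Contributions from blocks other than $G_{i^*}$ are independent of $S$ and can be peeled off using singleton cut queries; inside $G_{i^*}$ a direct edge-counting gives that the number of cut edges equals
$$
  \tfrac{\card{S}}{2\eps^2} + \sum_{u \in L_{i^*}} \card{N(u) \cap T^*} - 2\sum_{u \in S}\card{N(u)\cap T^*}.
$$
Maximizing over such $S$ is therefore equivalent to picking the half of $L_{i^*}$ with the smallest overlap with $T^*$; the optimum $S^*$ contains exactly the ``low'' vertices and excludes the ``high'' ones. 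Bob enumerates all such $S$, uses the sketch's ``for all'' guarantee to obtain $(1\pm\eps)$-approximations simultaneously on every one of these exponentially many cuts, and from these approximations infers the low/high status of $v^*$.

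\textbf{Amplification and main obstacle.} The total cut value is $\Theta(1/\eps^4)$, so a $(1+\eps)$-approximation tolerates absolute error $\Theta(1/\eps^3)$. A single swap between a ``high'' vertex and a ``low'' vertex shifts the true cut by only $\Theta(1/\eps)$, invisible to the sketch; the entire point of enumerating over all $S$ is that aggregating across the $\Theta(1/\eps^2)$ high vertices produces a total gap of $\Theta(1/\eps^3)$ between $S^*$ and its reverse---just at the edge of detectability. Turning this into an inference procedure that recovers $v^*$'s status with constant probability is the most delicate step: a single cut query is useless, and the argument must exploit the joint consistency of the ``for all'' estimates across exponentially many queries (so that the errors on nearby cuts are not independent).

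\textbf{Direct sum.} The per-block task is essentially GHD in dimension $\Theta(1/\eps^2)$, which has one-way randomized communication complexity $\Omega(1/\eps^2)$. Since Bob only needs to answer about the random block $i^*$, and the $k$ blocks and their inputs are independent, a direct-sum theorem for information complexity (tailored to the constrained GHD variant produced by the cut-query reduction) will lift the per-block $\Omega(1/\eps^2)$ hardness to an $\Omega(k/\eps^2)=\Omega(n/\eps^2)$ lower bound on Alice's message, and hence on $\card{\ske(G)}$. Establishing this direct-sum theorem is the other technical hurdle, and I would prove it by an information-complexity argument modeled on the standard GHD lower bound.
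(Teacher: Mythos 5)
Your proposal follows essentially the same path as the paper: the identical block structure of random bipartite graphs, the reduction to an $n$-fold Gap-Hamming instance, the exhaustive enumeration of cut queries $S\cup T^*$ to aggregate the $\Theta(1/\eps)$-per-vertex signal into a detectable $\Theta(1/\eps^3)$ gap, and the direct-sum argument to lift the per-block $\Omega(1/\eps^2)$ bound. The two places you flag as hurdles are exactly where the paper does its work (the inference step is handled via an order-statistics argument showing the maximizing $S$ must contain most of the high-overlap vertices, and the direct sum is an information-complexity argument built on the Braverman--Garg--Pankratov--Weinstein GHD bound); the only cosmetic difference is that you maximize $\delta(S\cup T)$ to isolate the low-overlap vertices while the paper maximizes $n(S,T)$ to isolate the high-overlap ones, which is an equivalent choice of sign.
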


\begin{theorem} \label{thm:sparsifierLB}
For every integer $n$ and $\veps\in (1/{\sqrt n},1)$, 
there is an $n$-vertex graph $G$ for which every $(1+\eps)$-cut sparsifier $H$
has $\Omega(n/\eps^2)$ edges, 
even if $H$ is not required to be a subgraph of $G$.
\end{theorem}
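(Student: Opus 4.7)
The plan is to take $G$ to be the graph used in the proof of Theorem~\ref{thm:sketchLB}, namely the disjoint union of $\eps^2 n/2$ random bipartite graphs $G_i$ on $2/\eps^2$ vertices each, and to show that every $(1+\eps)$-cut sparsifier $H$ of $G$ with $k$ edges admits a bit-encoding of length $O(k \log(n/(\eps^2 k)))$. Combined with the $\Omega(n/\eps^2)$-bit lower bound of Theorem~\ref{thm:sketchLB} applied to this $G$, this will force $k = \Omega(n/\eps^2)$. Two devices circumvent the $\log n$ factor that a naive $O(k \log n)$ encoding would lose: (i) since $G$ is disconnected and $\eps < 1$, any cut separating whole components of $G$ has weight $0$, and the $(1+\eps)$-sparsifier guarantee then forces $H$ to put zero weight across components, so $H$ decomposes as $H = \bigsqcup_i H_i$ and the support-universe of each $H_i$ shrinks from $\binom{n}{2}$ to $\binom{2/\eps^2}{2} = O(1/\eps^4)$; (ii) the real edge weights of $H$ are replaced with integers of small total mass via randomized rounding.

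The first step is the rounding: for each edge $e$ of $H$, independently set $w'_e = \lceil w_e \rceil$ with probability $w_e - \lfloor w_e \rfloor$ and $w'_e = \lfloor w_e \rfloor$ otherwise. The resulting graph $H'$ satisfies $\E[w_{H'}(S,\bar S)] = w_H(S,\bar S)$ and the per-edge contribution to the variance of $w_{H'}(S,\bar S)$ is at most $1/4$. I would then combine Hoeffding/Bernstein with a union bound -- using that the nontrivial cuts of each $G_i$ have value $\Theta(1/\eps^2)$ by the degree-concentration of the random bipartite model, and controlling the count of small cuts via Karger's $n^{O(c)}$ bound -- to conclude that $H'$ remains a $(1 + O(\eps))$-cut sparsifier of $G$ with constant probability, so a good realization exists. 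The total integer mass is then $M = \sum_e w'_e \le \|w_H\|_1 + k$, and since $H$ preserves singleton cuts, $\|w_H\|_1 = (1\pm\eps)\,|E(G)| = O(n/\eps^2)$; assuming WLOG $k \le n/\eps^2$, we get $M = O(n/\eps^2)$.

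The second step encodes $H'$ component-by-component: for each $G_i$ specify the $k_i$ support edges ($\log\binom{O(1/\eps^4)}{k_i}$ bits) and the positive-integer weights summing to $M_i$ ($\log\binom{M_i-1}{k_i-1}$ bits). Applying $\log\binom{N}{k}\le k\log(eN/k)$ per component and then Jensen's inequality to the concave function $k \mapsto k \log(A/k)$ across the $\eps^2 n/2$ components, with $\sum_i k_i = k$ and $\sum_i M_i = M$, the two contributions aggregate to $O(k\log(n/(\eps^2 k))) + O(k)$ bits. Pairing this string with a reconstruction procedure that reads $H'$ and returns $w_{H'}(S,\bar S)$ on query $S$ yields a sketch meeting the hypothesis of Theorem~\ref{thm:sketchLB} (after rescaling $\eps$ to absorb the $O(\eps)$ slack), so the encoding length must be $\Omega(n/\eps^2)$. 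Writing $k = \alpha\cdot n/\eps^2$, this becomes $\alpha\log(1/\alpha) + O(\alpha) = \Omega(1)$, which forces $\alpha = \Omega(1)$ and hence $k = \Omega(n/\eps^2)$.

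The hardest part will be the concentration analysis in the first step: the rounded cut values must lie within an $\eps$ fraction of $w_G(S,\bar S)$ for \emph{all} cuts simultaneously, yet the per-cut variance scales with $k_S$ rather than with $k$, so the union bound has to be weighted by cut size -- Karger's bound on near-minimum cuts is the natural way to supply the needed slack across the different cut scales. A secondary concern is making sure the Jensen aggregation in the encoding step is tight enough to keep the final bound at $O(k\log(n/(\eps^2 k)))$, with no hidden $\log n$ factor sneaking back in through either the per-component universe or the meta-information (sizes $k_i, M_i$) across components.
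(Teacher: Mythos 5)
Your overall plan matches the paper's: take $G$ to be the disjoint union of random bipartite blocks from Theorem~\ref{thm:sketchLB}, observe that any $(1+\eps)$-cut sparsifier $H$ must decompose block-by-block (since zero cuts must be preserved), randomly round the edge weights of each block to integers, and encode the rounded sparsifier in about $O(s\log(\eps^{-2}n/s))$ bits so that Theorem~\ref{thm:sketchLB} forces $s = \Omega(n/\eps^2)$. The decomposition observation and the rounding device are both exactly what the paper does.

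However, your concentration step has a genuine gap that the paper avoids by two devices you do not use. First, the paper does \emph{not} try to show that the rounded graph $H'$ remains a $(1+O(\eps))$-cut sparsifier of $G$ for \emph{all} cuts. It observes (by inspecting Lemma~\ref{lem:BobRecovers}) that the lower bound of Theorem~\ref{thm:sketchLB} survives even if the reconstruction is only accurate (i) on cut queries contained in a single block $G_j$, and (ii) up to \emph{additive} error $\gamma/\eps^3$ (rather than multiplicative $1\pm\eps$), because the cuts Bob actually queries all have value $\Theta(1/\eps^4)$. This relaxation is what lets the per-block Hoeffding bound with tolerance $t=\gamma/(2\eps^3)$ beat the union bound over the $2^{2/\eps^2}$ cuts of a single block. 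Your version asks for multiplicative $1+O(\eps)$ accuracy on every cut of $G$ simultaneously, and that is strictly stronger; in particular, for a near-minimum cut of $G_j$ of value $\Theta(1/\eps^2)$, the tolerance $\eps\cdot\Theta(1/\eps^2)=\Theta(1/\eps)$ is of the same order as the rounding standard deviation $\Theta(\sqrt{k_S})$ when $k_S\approx 1/\eps^2$, and the resulting tail probability $e^{-\Theta(1)}$ cannot beat even the modest $\mathrm{poly}(1/\eps)$ count of such cuts. Invoking Karger's bound does not rescue this: applied to $G$ itself it is vacuous (the minimum cut of $G$ is $0$ since $G$ is disconnected), and applied per-block it does not supply enough slack at the near-minimum-cut scale, as just computed.

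Second, you omit the case split that the paper uses to make the rounding concentration unconditional: when a block has $s_j$ larger than $\Theta(\gamma^2/\eps^4)$ sparsifier edges, the Hoeffding exponent $t^2/s_j$ drops below $1/\eps^2$ and the union bound fails; the paper handles this by simply storing the $1/\eps^4$ bits of the adjacency matrix of $G_j$ directly for such $j$, then charging this cost against $s_j$. Without this case, your Hoeffding step has no control over the blocks with many sparsifier edges. So the missing ideas are: use the additive-error/restricted-query reading of Theorem~\ref{thm:sketchLB}'s proof rather than its statement, and split on $s_j$ versus $\Theta(1/\eps^4)$ to store dense blocks verbatim. Your encoding and Jensen aggregation step, in contrast, is essentially identical to the paper's use of $\binom{p}{k}\binom{p'}{k'}\le\binom{p+p'}{k+k'}$ and is fine as proposed.
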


In addition, we show in Appendix~\ref{app:sdd-each} how to reduce
the quadratic form of an SDD matrix can to that of a Laplacian matrix
(with only a modest increase in the matrix size, from order $n$ to order $2n$).
Thus, the upper bound of sketching SDD matrices in both ``for each'' and ``for all'' cases will be the same as that for Laplacians. Since in the ``for all'' case, we can build the cut (or spectral) sparsifier for a graph using $\tilde{O}(n/\eps^2)$ bits (using e.g.~\cite{BSS14}), we can also construct a ``for all'' sketch for an SDD matrix using $\tilde{O}(n/\eps^2)$ bits of space.  This means that our $\Omega(n/\eps^2)$ lower bound is tight up to some logarithmic factors.

\subsection{Sketch-size Lower Bound}
\label{sec:sketchLB}

We prove Theorem~\ref{thm:sketchLB} using the following communication 
lower bound for a version of the Gap-Hamming-Distance problem, 
whose proof appears in section~\ref{sec:CC}.
Throughout, we fix $c = 10^{-3}$ (or a smaller positive constant),
and assume $\veps\leq c/10$.

\begin{theorem} \label{thm:GHD}
Consider the following distributional communication problem:
Alice has as input $n/2$ strings $s_1,\ldots,s_{n/2} \in \zo^{1/\veps^2}$
of Hamming weight $\tfrac{1}{2\veps^2}$, 
and Bob has an index $i\in[n/2]$ together with one string $t\in \zo^{1/\veps^2}$
of Hamming weight $\tfrac{1}{2\veps^2}$, 
drawn as follows:%
\footnote{Alice's input and Bob's input are \emph{not} independent,
but the marginal distribution of each one is uniform over its domain,
namely, $\zo^{(n/2)\times (1/\veps^2)}$ and $[n]\times \zo^{1/\veps^2}$,
respectively.
}
\begin{itemize} \compactify
\item $i$ is chosen uniformly at random; 
\item $s_i$ and $t$ are chosen uniformly at random but conditioned on 
their Hamming distance $\Delta(s_i,t)$ being, with equal probability,  
either $\geq \tfrac{1}{2\veps^2} + \tfrac{c}{\veps}$ 
or     $\leq \tfrac{1}{2\veps^2} - \tfrac{c}{\veps}$;
\item the remaining strings $s_{i'}$ for $i'\neq i$ are chosen uniformly at random.
\end{itemize}
Consider a (possibly randomized) one-way protocol,
in which Alice sends to Bob an $m$-bit message,
and then Bob determines, with success probability at least $2/3$, 
whether $\Delta(s_i,t)$
is $\geq \tfrac{1}{2\veps^2} + \tfrac{c}{\veps}$ or $\leq \tfrac{1}{2\veps^2} - \tfrac{c}{\veps}$.
Then Alice's message size is $m\geq \Omega(n/\veps^2)$ bits. 
\end{theorem}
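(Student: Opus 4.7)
The plan is a two-stage argument: first, establish a single-copy information-theoretic lower bound of $\Omega(1/\varepsilon^2)$ for the weight-balanced Gap-Hamming-Distance problem, and then lift it to $\Omega(n/\varepsilon^2)$ via an indexed direct sum that exploits the independence of Alice's strings.

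For the single-copy step, consider the problem where Alice holds a single $s \in \{0,1\}^N$ with $N = 1/\varepsilon^2$ and weight exactly $N/2$, Bob holds $t$ of the same weight, and $\Delta(s,t)$ is uniformly either $\geq N/2 + c\sqrt{N}$ or $\leq N/2 - c\sqrt{N}$. I would prove that any one-way protocol distinguishing the two cases with success probability $\geq 2/3$ has information cost $\Omega(N)$. This is a weight-constrained variant of the classical Gap-Hamming-Distance lower bound (Indyk--Woodruff; Chakrabarti--Regev; Jayram et al.). Since a uniformly random string already has weight $N/2 \pm O(\sqrt{N})$ with constant probability, the constrained and unconstrained hard distributions differ only by a conditioning step that shifts the gap by $O(\sqrt{N})$; the standard information-complexity proof then yields the desired $\Omega(N)$ bound with slightly rescaled constants.

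For the direct-sum step, let $\Pi$ be any correct one-way protocol with message $M$ of size $m$ and public randomness $R$. Because Alice's marginal is uniform over $(n/2)$-tuples of weight-balanced strings, the coordinates $s_1,\ldots,s_{n/2}$ are mutually independent. The chain rule together with independence gives
\[
m \;\geq\; H(M \mid R) \;\geq\; I(s_1,\ldots,s_{n/2};\, M \mid R) \;=\; \sum_{j=1}^{n/2} I(s_j;\, M \mid s_{<j}, R) \;\geq\; \sum_{j=1}^{n/2} I(s_j;\, M \mid R),
\]
where the last inequality uses $s_j \perp s_{<j}$ to get $I(s_j; M \mid s_{<j}, R) \geq I(s_j; M \mid R)$. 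It therefore suffices to show $I(s_j;\, M \mid R) = \Omega(1/\varepsilon^2)$ for each $j$. I would do this by embedding: fix $j$, and given a single-copy GHD instance $(s,t)$, have Alice sample the remaining $s_{i'}$ ($i' \neq j$) independently uniform of weight $N/2$, set $s_j = s$, and simulate $\Pi$; Bob uses $i = j$ and $t$. Since $\Pi$ is correct on every index, the embedded protocol correctly solves single-copy GHD, and the induced joint distribution of $(s_j, t)$ coincides exactly with the multi-copy distribution conditioned on $i=j$. The single-copy information bound from the first step therefore forces $I(s_j;\, M \mid R) \geq \Omega(1/\varepsilon^2)$, and summing over $j$ gives the claimed $m = \Omega(n/\varepsilon^2)$.

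The principal obstacle is the single-copy step: showing that the weight constraint does not weaken the GHD lower bound, and, crucially, doing so at the level of \emph{information cost} (not just communication) so that the chain-rule direct sum actually goes through. A subsidiary technicality is to verify that the indexed structure---the fact that Bob alone knows the live coordinate $i$---cannot let the protocol amortize information across the $n/2$ copies; this is exactly where the mutual independence of $s_1,\ldots,s_{n/2}$ under Alice's marginal is essential in the chain-rule inequality above.
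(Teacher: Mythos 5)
Your high-level strategy matches the paper's: a single-copy information-cost lower bound for the weight-balanced Gap-Hamming-Distance problem, followed by a chain-rule direct sum over the $n/2$ independent strings. The chain-rule inequality you write (using independence to drop conditioning on $s_{<j}$) is exactly what the paper uses. However, there are two gaps worth flagging.

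First, in the direct-sum step you assert ``Since $\Pi$ is correct on every index.'' That is not given: the protocol is only promised to succeed with probability $2/3$ \emph{averaged} over the uniformly random index $i$, so the conditional success probability for a fixed $j$ could be as low as $0$. The paper repairs this with a good-index argument: an index $i$ is called good if the conditional success probability given $I=i$ is at least $1-2\delta$; a Markov/union bound guarantees at least $n/2$ good indices, and the embedding and information-cost lower bound are invoked only for those. You still get $\Omega(n/\veps^2)$ by summing over the $n/2$ good indices. Without this step your reduction to the single-copy problem does not go through.

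Second, the single-copy step is largely asserted rather than proved, and the difficulty you flag as ``a subsidiary technicality'' is in fact where the paper spends most of its effort. The issue is precisely that you need an \emph{information-cost} lower bound under the \emph{weight-constrained} distribution $\zeta$ (and its gap-conditioned version $\mu$), not merely a communication lower bound, and information cost is distribution-sensitive: conditioning can change $I(S;M)$ arbitrarily. The paper's Lemma~\ref{lem:icost} handles the move from the gap-conditioned $\mu$ to the unconditioned-but-weight-balanced $\zeta$ (paying only $O(c)$ in error because the gap event has probability $1-O(c)$ under $\zeta$). The nontrivial part is Lemma~\ref{lem:icost2}: it reduces from the known BGPW13 information-cost lower bound under the fully uniform distribution $\eta$ to $\zeta$, by having Alice/Bob locally rebalance their strings' weights by flipping $O(1/\veps^{3/2})$ random coordinates, and then controlling the resulting change in information cost via the chain rule and the data processing inequality, losing only an additive $O(\log(1/\veps)/\veps^{3/2}) = o(1/\veps^2)$. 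Your proposal's claim that this is ``the standard proof with slightly rescaled constants'' skips exactly this argument, which is not automatic.

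Both gaps are fixable with the ideas in the paper's Lemmas~\ref{lem:icost},~\ref{lem:icost2} and Theorem~\ref{thm:GHD2}, so the overall plan is sound; you just need to carry out the reductions you gesture at.
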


We can interpret the lower bound of Theorem \ref{thm:GHD} as follows:
Consider a (possibly randomized) algorithm that produces an $m$-bit
sketch of Alice's input $(s_1,\ldots,s_{n/2})\in\zo^{n/2\veps^2}$, and
suppose that the promise about $\Delta(s_i,t)$ can be decided
correctly (with probability at least $3/4$) given (only) the sketch
and Bob's input $(i,t)\in [n/2]\times \zo^{1/\veps^2}$.  Then $m\geq
\Omega(n/\veps^2)$.

We now prove Theorem \ref{thm:sketchLB} by a reduction to the above communication problem,
interpreting the one-way protocol as a sketching algorithm, as follows.
Given the instance $(s_1,\ldots,s_{n/2},i,t)$, define an $n$-vertex graph $G$ 
that is a disjoint union of the graphs $\aset{G_j: j\in[\veps^2n/2]}$,
where each $G_j$ is a bipartite graph, whose two sides,
denoted $L(G_j)$ and $R(G_j)$, are of size 
\[ \card{L(G_j)}=\card{R(G_j)}=1/\veps^2. \]
The edges of $G$ are determined by $s_1,\ldots,s_{n/2}$,
where each string $s_u$ is interpreted as a vector of indicators for 
the adjacency between vertex $u\in \cup_{j\in[\veps^2n/2]} L(G_j)$
and the respective $R(G_j)$.

Observe that Alice can compute $G$ without any communication,
as this graph is completely determined by her input.
She then builds a sketch of this graph, that with probability $\geq 99/100$, 
succeeds in simultaneously approximating all cut queries within factor 
$1\pm\gamma\veps$, where $\gamma>0$ is a small constant to be determined later. 
This sketch is obtained from the theorem's assumption about $m$-bit sketches 
by standard amplification of the success probability from $3/4$ to $0.99$
(namely, repeating $r=O(1)$ times independently 
and answering any query with the median value of the $r$ answers).
Alice then sends this $O(m)$-bit sketch to Bob. 

Bob then uses his input $i$ to compute $j=j(i)\in[\veps^2n/2]$ 
such that the graph $G_j$ contains vertex $i$ 
(i.e., the vertex whose neighbors are determined by $s_i$).
Bob also interprets his input string $t$ as a vector of indicators
determining a subset $T\subseteq R(G_j)$.
By construction of $G_j$ 
, the neighbor sets $N(v)$ of the vertices $v\in L(G_j)\setminus \aset{i}$
are uniformly distributed, independently of $T$ and of each other;
in particular, each $\card{N(v)\cap T}$ has a Binomial distribution 
$B(\tfrac{1}{\veps^2},\tfrac14)$.

\begin{lemma} \label{lem:BobRecovers}
Using the $O(m)$-bit sketch he received from Alice, 
Bob can compute a ``list'' $B\subset L(G_j)$ of size 
$\card{B} = \half\card{L(G_j)} = \tfrac{1}{2\veps^2}$,
and with probability at least $0.96$, this list contains 
at least $\tfrac{4}{5}$-fraction of the vertices in the set 
\begin{equation} \label{eq:Nlarge}
 \Lhigh = \aset{v\in L(G_j):\ 
      \card{N(v)\cap T} \geq \tfrac{1}{4\veps^2} + \tfrac{c}{\veps}
      }.
\end{equation}
Moreover, Bob uses no information about his input $i$ other than $j=j(i)$.
\end{lemma}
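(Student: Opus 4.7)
Proof Proposal. The plan is for Bob to enumerate every subset $S \subset L(G_j)$ of size $1/(2\veps^2)$, evaluate an approximation $\widetilde{w}(S\cup T,\overline{S\cup T})$ using the sketch Alice sent, and set $B = S^*$ to be the subset minimising this estimate. Since this uses only $j$, $T$, and the sketch, the ``no-information-about-$i$'' clause is automatic. Because the graphs $G_{j'}$ are vertex-disjoint, the cut localises to $G_j$, and using $|N(v)| = |T| = 1/(2\veps^2)$ one computes
\[
  w(S\cup T, \overline{S\cup T}) = \tfrac{1}{4\veps^4} + F_T - 2\, F(S),
\]
where $f(v):=|N(v)\cap T|$, $F(S):=\sum_{v\in S} f(v)$, and $F_T := F(L(G_j))$. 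Thus minimising the cut is equivalent to maximising $F$, and the true optimum $S_{\mathrm{opt}}$ consists of the $1/(2\veps^2)$ vertices of $L(G_j)$ with the largest $f$-values.

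Boosting the sketch's success probability to $0.99$ by standard $O(1)$-fold independent repetition with median aggregation, I may assume all of Bob's cut queries enjoy a $(1\pm\gamma\veps)$-approximation simultaneously. Combined with the bound $w(S_{\mathrm{opt}}) \leq \tfrac{1}{4\veps^4}$, the usual multiplicative-to-additive conversion yields $F(S_{\mathrm{opt}}) - F(S^*) \leq O(\gamma/\veps^3)$. Separately, a hypergeometric anti-concentration estimate (Berry--Esseen type) shows that, for a suitably small absolute constant $c$, with probability $\geq 0.98$ over the random graph
\[
  |\Lhigh|,\ |L_{\mathrm{low}}| \in [\alpha/\veps^2,\ 1/(2\veps^2)]
  \quad\text{and}\quad
  |L_{\mathrm{mid}}| \leq \tfrac{\alpha}{10\veps^2},
\]
where $L_{\mathrm{low}}:=\{v\in L(G_j): f(v)\leq \tfrac{1}{4\veps^2}-\tfrac{c}{\veps}\}$, $L_{\mathrm{mid}} := L(G_j)\setminus (\Lhigh\cup L_{\mathrm{low}})$, and $\alpha>0$ is a constant. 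The single vertex $v=i$ whose neighbourhood is correlated with $T$ shifts these counts by at most one and is absorbed into the error.

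The final step is a swap argument. Suppose for contradiction that $r := |\Lhigh\setminus S^*| > |\Lhigh|/5$. Since $|\Lhigh| \leq |S^*|$ we get $|S^*\setminus \Lhigh| \geq r$, and because $|L_{\mathrm{mid}}| < r/2$, at least $r/2$ of these vertices lie in $L_{\mathrm{low}}$. Swapping any $r/2$ such $L_{\mathrm{low}}$-vertices out of $S^*$ in exchange for any $r/2$ vertices from $\Lhigh\setminus S^*$ produces a set $S'$ of the same cardinality with
\[
  F(S') - F(S^*) \geq \tfrac{r}{2}\bigl(\tfrac{1}{4\veps^2}+\tfrac{c}{\veps}\bigr) - \tfrac{r}{2}\bigl(\tfrac{1}{4\veps^2}-\tfrac{c}{\veps}\bigr) = \tfrac{rc}{\veps}.
\]
Since $F(S_{\mathrm{opt}}) \geq F(S')$, the approximation bound forces $rc/\veps \leq O(\gamma/\veps^3)$, so $r = O(\gamma/(c\veps^2))$; choosing $\gamma$ sufficiently small (in terms of $c\alpha$) contradicts $r > \alpha/(5\veps^2)$. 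A union bound over the sketch event ($0.99$) and the structural event ($0.98$) gives the required probability $\geq 0.96$.

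The main obstacle I anticipate is calibrating the three constants $c$, $\alpha$, $\gamma$ simultaneously: the anti-concentration must produce a clean separation $|L_{\mathrm{mid}}| \ll |\Lhigh|$, and the sketch accuracy $\gamma\veps$ must be small enough that the per-swap gain $\Omega(c/\veps)$ dominates the additive slack $O(\gamma/\veps^3)$ after summing over $r$ swaps. A Berry--Esseen estimate for the hypergeometric law of each $f(v)$ together with Hoeffding-type concentration for the counts $|\Lhigh|,\, |L_{\mathrm{low}}|$ should make all three constraints compatible, at the price of fixing $c$ to be a sufficiently small absolute constant (consistent with the choice $c = 10^{-3}$ used in Theorem~\ref{thm:GHD}).
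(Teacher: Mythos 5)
Your proposal is correct and the basic strategy is the same as the paper's: Bob enumerates all queries $S\cup T$ with $|S|=\tfrac{1}{2\veps^2}$, uses the sketch to estimate each cut, and observes that the cut value equals a constant minus $2F(S)$ where $F(S)=\sum_{v\in S}f(v)$, so that optimizing the cut estimate over $S$ amounts to selecting a set with nearly maximal $F$-value. The paper and you both convert the multiplicative $(1\pm\gamma\veps)$ guarantee on cuts of size $\Theta(1/\veps^4)$ into an additive $O(\gamma/\veps^3)$ slack for $F$.

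Where you diverge is in the intermediate argument. The paper introduces the set $S^*$ of the $\tfrac{1}{2\veps^2}$ vertices with largest $f$-value and proves, via an order-statistics claim (Claim~\ref{cl:binom}, the gap $f(v_{0.55/\veps^2})-f(v_{0.45/\veps^2})\geq 0.01/\veps$), that any $S'$ disagreeing with $S^*$ on a $\tfrac{1}{10}$-fraction has $F$-value lower by $\Omega(1/\veps^3)$; combined with Claim~\ref{cl:binom1} ($\tfrac{1}{4\veps^2}\leq|\Lhigh|\leq\tfrac{1}{2\veps^2}$, so $\Lhigh\subseteq S^*$), this yields the conclusion. You instead argue directly about $\Lhigh$ via a swap: if Bob's output $B$ missed more than a $\tfrac{1}{5}$-fraction of $\Lhigh$, then (since $|L_{\mathrm{mid}}|$ is small) $B$ must contain many $L_{\mathrm{low}}$ vertices, and swapping them for the missed $\Lhigh$ vertices raises $F$ by $\Omega(c/\veps^3)$, exceeding the slack. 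This avoids the paper's intermediary set $S^*$ and its order-statistics lemma, at the price of needing a different concentration fact, namely that $|L_{\mathrm{mid}}|\leq \alpha/(10\veps^2)$ together with $\alpha/\veps^2\leq|\Lhigh|\leq\tfrac{1}{2\veps^2}$. That fact is genuinely needed and you only sketch it (Berry--Esseen plus Hoeffding with the constants $c,\alpha,\gamma$ calibrated); the paper's Claim~\ref{cl:binom1} gives you the bounds on $|\Lhigh|$, but the $|L_{\mathrm{mid}}|$ estimate is an additional, though routine, step. The calibration you describe does close: $\mathbf{E}[|L_{\mathrm{mid}}|] = O(c)/\veps^2$ with fluctuation $O(1/\veps)$, so for $c$ small and $\alpha$ a fixed constant the constraints are simultaneously satisfiable, and then $\gamma$ can be chosen as a small constant depending on $c\alpha$, matching the paper's $\gamma=10^{-4}$. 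Two small points: the paper's own sketch (received by Bob) is already the amplified one, so the boosting step you mention has in fact already been performed by Alice; and the bound $w(S_{\mathrm{opt}})\leq\tfrac{1}{4\veps^4}$ should really read $O(1/\veps^4)$ (the total edge count in $G_j$ is $\tfrac{1}{2\veps^4}$), but this does not affect the $O(\gamma/\veps^3)$ conclusion.
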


Before proving the lemma, let us show how to use it to decide 
about $\Delta(s_i,t)$ and derive the theorem.
We will need also the following simple claim, 
which we prove further below. 

\begin{claim} \label{cl:binom1}
With probability at least $0.98$, 
the relative size of $\Lhigh$ is 
$
  \frac{ \card{\Lhigh} }{ \card{L(G_j)} } \in [\half-10c,\half]
$.
\end{claim}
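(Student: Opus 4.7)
The plan is to recognize that for $v \in L(G_j) \setminus \aset{i}$ the indicators of the event $\aset{v\in \Lhigh}$ are i.i.d.\ Bernoulli, and then combine an anti-concentration estimate for the hypergeometric distribution (to pin down the common probability $p := \Pr[v\in \Lhigh]$ near $\tfrac12$) with Chebyshev's inequality (to pin down $\card{\Lhigh}$ around its expectation).

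I first fix Bob's input $(i,t)$ and hence the target set $T \subseteq R(G_j)$ of size $\tfrac{1}{2\veps^2}$. By the distribution in Theorem~\ref{thm:GHD}, the strings $s_v$ for $v\neq i$ are independent and uniform over $\zo^{1/\veps^2}$ of Hamming weight $\tfrac{1}{2\veps^2}$, independent of $t$. Thus each $N(v)\subseteq R(G_j)$ is a uniform random subset of size $\tfrac{1}{2\veps^2}$, and $\card{N(v)\cap T}$ is hypergeometric with mean $\mu=\tfrac{1}{4\veps^2}$ and standard deviation $\sigma = \tfrac{1}{4\eps}\cdot(1+O(\eps^2))$. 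Next I would estimate $p := \Pr[\card{N(v)\cap T}\geq \mu + \tfrac{c}{\veps}]$ for any such $v$. The threshold lies approximately $4c\sigma$ above the mean, so by a Berry--Esseen-type bound for the hypergeometric distribution (with the standard $O(1/\sigma) = O(\eps)$ error), $p = 1 - \Phi(4c) \pm O(\eps) = \tfrac12 - \Theta(c) \pm O(\eps)$. Since $\veps\leq c/10$, the $O(\eps)$ error is absorbed into the $\Theta(c)$ gap, giving, for $c$ a sufficiently small universal constant, explicit bounds of the form $\tfrac12 - 2c \leq p \leq \tfrac12 - c$.

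Third, I would concentrate $X := \card{\Lhigh \setminus \aset{i}}$. Since $X$ is a sum of $\tfrac{1}{\veps^2}-1$ independent Bernoulli$(p)$ variables, Chebyshev's inequality with $\var{X}\leq \tfrac{1}{4\veps^2}$ gives $\Pr\!\left[\,\card{X - \E X}\geq \tfrac{8}{\eps}\,\right] \leq \tfrac{1/(4\eps^2)}{64/\eps^2} < 0.02$. Dividing by $\card{L(G_j)}=\tfrac{1}{\eps^2}$ and absorbing the contribution of the single vertex $i$ (at most $\eps^2$ in the ratio), we conclude that with probability at least $0.98$, $\card{\Lhigh}/\card{L(G_j)} \in [\,p-9\eps,\,p+9\eps\,]$. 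Combined with the bounds on $p$ above and $\veps\leq c/10$, this interval is contained in $[\tfrac12 - 10c, \tfrac12]$, proving the claim.

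The main obstacle is quantitative: the upper containment at $\tfrac12$ requires that the CLT gap $\tfrac12 - p = \Theta(c)$ strictly dominate both the Chebyshev fluctuation $O(\eps)$ and the Berry--Esseen slack $O(\eps)$. This is exactly the reason for the standing assumption $\veps\leq c/10$ at the top of the section; the rest is a matter of choosing explicit constants consistently.
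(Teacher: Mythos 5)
Your proof is correct and follows the same two-step structure as the paper's: first pin down the per-vertex probability $p = \Pr[v\in\Lhigh]$ near $\tfrac12$ via a normal approximation (Berry--Esseen), then concentrate $\card{\Lhigh}$ around $p\cdot\card{L(G_j)}$. The only differences are cosmetic: you correctly model $\card{N(v)\cap T}$ as hypergeometric (since both $N(v)$ and $T$ have fixed size $\tfrac{1}{2\veps^2}$) where the paper loosely calls it Binomial $B(1/\veps^2,\tfrac14)$---both give $\Theta(1/\veps)$ standard deviation, so the qualitative conclusion $p\in[\tfrac12-K_2c,\tfrac12-K_1c]$ is the same---and you use Chebyshev for the concentration step where the paper uses Hoeffding, a weaker but still adequate tail bound given the standing assumption $\veps\leq c/10$.
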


We assume henceforth that the events described in the above lemma and claim
indeed occur, which happens with probability at least $0.94$.
Notice that $\Delta(s_i,t) = \deg(i) + \card{T} - 2\card{N(i)\cap T}$.
\rnote{We assume here/recall that all the strings $s_i$ and also $t$ have 
Hamming weight $\tfrac{1}{2\veps^2}$, 
we get that $\deg(i)=\card{T}=\tfrac{1}{2\veps^2}$.}

Now suppose that $\Delta(s_i,t) \leq \tfrac{1}{2\veps^2} - \tfrac{c}{\veps}$.
Then $\card{N(i)\cap T} \geq \tfrac{1}{4\veps^2} + \tfrac{c}{2\veps}$, 
and because Bob's list $B$ is independent of the vertex $i\in L(G_j)$,
we have 
$\Pr[i\in B] 
 \geq \tfrac{4}{5}\card{\Lhigh} / \card{\Lhigh}
 = \tfrac{4}{5}
$.

Next, suppose that $\Delta(s_i,t) \geq \tfrac{1}{2\veps^2} + \tfrac{c}{\veps}$.
Then $\card{N(i)\cap T} \leq \tfrac{1}{4\veps^2} - \tfrac{c}{2\veps}$, 
and using Claim \ref{cl:binom1},
\[
  \Pr[i\in B] 
  \leq \frac{ \card{B} - \tfrac{4}{5}\card{\Lhigh} }{ \card{L(G_j)} } 
  \leq \frac{1}{4}.
\]
\aanote{Shouldn't it be divided by $L(G_j)-|L_{high}|$ ?}
Thus, Bob can decide between the two cases 
with error probability at most $1/4$. 
Overall, it follows that Bob can solve the Gap-Hamming-Distance problem for $(s_i,t)$,
with overall error probability at most $1/4 + 0.06 < 1/3$,
as required to prove the theorem.

\medskip
\begin{proof}[Proof of Claim \ref{cl:binom1}]
By basic properties of the Binomial distribution (or the Berry-Esseen Theorem),
there are absolute constants $\tfrac{1}{5}\leq K_1 \leq K_2\leq 5$ 
\rnote{I have not really verified these constants} 
such that for each vertex $v\in L(G_j)$, 
\[
  \Pr\Big[v\in \Lhigh\Big] 
  = \Pr\Big[\card{N(v)\cap T} \geq \tfrac{1}{4\veps^2} + \tfrac{c}{\veps}\Big] 
  \in [\half - K_2 c, \half - K_1 c].
\]
Denoting $Z = \card{\Lhigh}$, then $\E[Z] \in [\frac{1}{2\eps^2} - \frac{K_2c}{\eps^2}, \frac{1}{2\eps^2} - \frac{K_1c}{\eps^2}]$. We have by Hoeffding's inequality, 
\[
  \Pr\Big[ \abs{Z-\EX[Z]} > \tfrac{K_1 c}{\veps^2} \Big]
  \leq 2e^{-\half (K_1 c)^2 (1/\veps^2)}
  \leq 0.02.
\]
Thus, with probability at least $0.98$, we have both bounds
\begin{align*}
  Z &\leq \EX[Z] + \tfrac{K_1 c}{\veps^2} \le  \tfrac{1}{2\eps^2} - \tfrac{K_1c}{\eps^2} + \tfrac{K_1c}{\eps^2} = \tfrac{1}{2\veps^2}, 
  \text{ and } \\
  Z &\geq \EX[Z] - \tfrac{K_1 c}{\veps^2} \ge \tfrac{1}{2\eps^2} - \tfrac{K_1c}{\eps^2} - \tfrac{K_2c}{\eps^2}
     \geq (\half-2K_2c)\tfrac{1}{\veps^2} 
     \geq (\half-10c)\tfrac{1}{\veps^2}.
\end{align*}
The claim follows by noting $\abs{L(G_j)} = 1/\eps^2$.
\end{proof}

\begin{proof}[Proof of Lemma~\ref{lem:BobRecovers}]
We now show how Bob creates the ``list'' $B\subset L(G_j)$ of size 
$\card{B}=\tfrac{1}{2\veps^2}$. 
Bob estimates the cut value for $S\cup T$ 
for every subset $S\subseteq L(G_j)$ of size exactly $\frac{1}{2\veps^2}$. 
Observe that the cut value for a given $S$ is 
\[
  \delta(S\cup T)
  = \sum_{v\in S} \deg(v) + \sum_{u\in T} \deg(u) - 2 \sum_{v\in S} \card{N(v)\cap T}.
\]

The cut value is bounded by the number of edges in $G_j$, which is at most $1/\veps^4$, 
and since the data structure maintains all the cut values 
within factor $1+\gamma\veps$ for an arbitrarily small constant $\gamma>0$, 
the additive error on each cut value is at most $\gamma/\veps^3$
\aanote{$\gamma/\eps$ ?}.
Further, we can assume Bob knows the exact degrees of all vertices 
(by adding them to the sketch, using $O(n\log\tfrac{1}{\veps})$ bits), 
which he can subtract off, and since scaling by $-1/2$ can only shrink the
additive error, we can define the ``normalized'' cut value
\[ 
  n(S,T) = \sum_{v\in S} \card{N(v)\cap T}, 
\]
which Bob can estimate within additive error $\gamma/\veps^3$ \aanote{$\gamma/\eps$?}. 
Bob's algorithm is to compute these estimates for all the values $n(S,T)$,
and output a set $S$ that maximizes his estimate for $n(S,T)$ 
as the desired list $B\subset L(G_j)$.


Let us now analyze the success probability of Bob's algorithm.
For each vertex $v\in L(G_j)$, let $f(v) = \card{N(v)\cap T}$. 
Observe that each $f(v)$ has a Binomial distribution 
$B(\tfrac{1}{\veps^2},\tfrac14)$, and they are independent of each other.
We will need the following bounds on the typical values of some order statistics
when taking multiple samples from such a Binomial distribution.
Recall that the \emph{$k$-th order statistic} of a sample (multiset) 
$x_1,\ldots,x_m\in\R$ is the $k$-th smallest element in that sample.
The following claim is proved further below. 

\begin{claim} \label{cl:binom}
Let $\aset{X_j}_{j=1,\ldots,m}$ be independent random variables
with Binomial distribution $B(t,\tfrac14)$.
Let $\alpha\in(0,\half)$ 
such that $(\half+\alpha)m$ is integral,
and both $t,m \geq 10/\alpha^2$.
Then
\begin{align*}
  \Pr&\Big[\text{the $(\half-\alpha)m$ order statistic of $\aset{X_j}$ 
is $\leq \tfrac14 t - \tfrac{\alpha}{10}\sqrt{t}$}\Big] \geq 0.99,
  \text{ and } \\
  \Pr&\Big[\text{the $(\half+\alpha)m$ order statistic of $\aset{X_j}$ 
is $\geq \tfrac14 t + \tfrac{\alpha}{10}\sqrt{t}$}\Big] \geq 0.99.
\end{align*}
\end{claim}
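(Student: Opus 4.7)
My approach is to reduce the order-statistic bound to a statement about the typical number of samples that exceed a fixed threshold, then combine an anti-concentration estimate for a single Binomial with a standard concentration bound on the resulting count. I will do the "upper" case (the second inequality) in detail; the lower case follows by an entirely symmetric argument with threshold $\tfrac14 t - \tfrac{\alpha}{10}\sqrt{t}$ and tail $\Pr[X_j \leq \cdot]$ in place of $\Pr[X_j \geq \cdot]$.

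Set $\theta := \tfrac14 t + \tfrac{\alpha}{10}\sqrt{t}$. The $(\half+\alpha)m$-th order statistic of $\{X_j\}$ is $\geq \theta$ if and only if at least $(\half-\alpha)m$ of the $X_j$'s satisfy $X_j \geq \theta$. Let $Y = \sum_{j=1}^m \indic\{X_j\geq\theta\}$ and $p := \Pr[X_j \geq \theta]$, so $Y$ is a sum of $m$ i.i.d.\ Bernoulli$(p)$ variables. I need two ingredients: (i) $p \geq \tfrac12 - \tfrac{\alpha}{2}$, and (ii) $Y \geq (\tfrac12 - \alpha)m$ with probability at least $0.99$.

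For (i), I would use the Berry-Esseen theorem applied to $B(t,\tfrac14)$, which has mean $t/4$ and standard deviation $\sigma = \tfrac{\sqrt{3t}}{4}$. The threshold $\theta$ corresponds to standardized value $\tfrac{\alpha\sqrt{t}/10}{\sigma} = \tfrac{2\alpha}{5\sqrt{3}} < \tfrac{\alpha}{4}$. Berry-Esseen gives $p \geq \Pr[Z \geq \alpha/4] - C/\sqrt{t}$ for a standard normal $Z$ and an absolute constant $C$. Using the trivial tail bound $\Pr[Z \geq x] \geq \tfrac12 - x/\sqrt{2\pi}$ for $x \geq 0$, and plugging in $t \geq 10/\alpha^2$, I obtain $p \geq \tfrac12 - \tfrac{\alpha}{4\sqrt{2\pi}} - \tfrac{C\alpha}{\sqrt{10}} \geq \tfrac12 - \tfrac{\alpha}{2}$ (with room to spare for the standard Berry-Esseen constant for Bernoulli sums). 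For (ii), Hoeffding's inequality on the Bernoulli indicators yields
\[
\Pr\bigl[Y < (\tfrac12-\alpha)m\bigr] \leq \Pr\bigl[Y < \EX[Y] - \tfrac{\alpha m}{2}\bigr] \leq \exp\!\bigl(-\tfrac{\alpha^2 m}{2}\bigr) \leq e^{-5} < 0.01,
\]
where I used $\EX[Y] = pm \geq (\tfrac12-\tfrac{\alpha}{2})m$ and $m \geq 10/\alpha^2$. Combining (i) and (ii) gives the desired bound on the $(\tfrac12+\alpha)m$-th order statistic with probability at least $0.99$.

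The only nontrivial step is the anti-concentration estimate in (i): I have to verify that the specific constants $\tfrac{1}{10}$ in the threshold and $10$ in the assumption $t,m \geq 10/\alpha^2$ leave enough slack for Berry-Esseen and the linear Gaussian tail bound to yield $p \geq \tfrac12 - \tfrac{\alpha}{2}$. Everything else is routine Hoeffding plus a symmetric reflection argument for the companion inequality.
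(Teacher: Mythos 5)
Your proposal is correct and follows essentially the same route as the paper: reduce the order-statistic statement to a concentration bound on the count of samples beyond a fixed threshold, establish the per-sample probability is at least $\tfrac12-\tfrac{\alpha}{2}$ via a Berry-Esseen-style anti-concentration estimate for $B(t,\tfrac14)$, and conclude with Hoeffding on the $m$ indicators. The only (cosmetic) difference is in the anti-concentration step: the paper factors $\Pr[X_j\le T]=\Pr[X_j\le \tfrac14 t]\cdot\Pr[X_j\le T\mid X_j\le\tfrac14 t]$ and invokes the Binomial mean-median fact $\Pr[X_j\le\tfrac14 t]\ge\tfrac12$ together with a bound on the mass in the short interval, whereas you apply Berry-Esseen directly and then use the linear lower bound $\Pr[Z\ge x]\ge\tfrac12-x/\sqrt{2\pi}$; both reach $p\ge\tfrac12-\tfrac{\alpha}{2}$. (One tiny nit: the event that the $(\tfrac12+\alpha)m$-th order statistic is $\ge\theta$ is equivalent to $Y\ge(\tfrac12-\alpha)m+1$, not $Y\ge(\tfrac12-\alpha)m$, but your Hoeffding bound in fact controls $\Pr[Y\le(\tfrac12-\alpha)m]$ so this off-by-one is immaterial.)
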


Sort the vertices $v\in L(G_j)$ by their $f(v)$ value, 
and denote them by $v_1,\ldots,v_{1/\veps^2}$ such that $f(v_i) \leq f(v_{i+1})$.
Applying the claim (for $\alpha=0.05$ and $t,m=\tfrac{1}{\veps^2}$), 
we see that with probability at least $0.98$, the difference 
\begin{equation} \label{eq:fdiff}
  f(v_{0.55/\veps^2})-f(v_{0.45/\veps^2}) 
  \geq 0.01 / {\veps}.
\end{equation}
We assume henceforth this event indeed occurs.
Let $S^*$ include the $\frac{1}{2\veps^2}$ vertices $v\in L(G_j)$ 
with largest $f(v)$, i.e., $S^*= \aset{v_j}_{j > 0.5/\veps^2}$,
and let $S'\subset L(G_j)$ be any subset of the same size such that 
at least $\tfrac{1}{10}$-fraction of its vertices are not included in $S^*$
(i.e., their order statistic in $L(G_j)$ is at most $\tfrac{1}{2\veps^2}$).
Then we can write
\begin{align*}
  n(S^*,T)
  &= \sum_{j\in S^*} f(v) 
   = \sum_{j>0.5/\veps^2} f(v_j),
  \\
  n(S',T) 
  &= \sum_{j\in S'} f(v) 
   \leq \sum_{j>0.6/\veps^2} f(v_j) + \sum_{0.4/\veps^2 < j\leq 0.5/\veps^2} f(v_j).
\end{align*}
Now subtract them
\begin{align*}
  n(S',T) - n(S^*,T)
  &= \sum_{0.5/\veps^2 < j\leq 0.6/\veps^2} f(v_j)
    - \sum_{0.4/\veps^2 < j\leq 0.5/\veps^2} f(v_j),
\intertext{observe that elements in the normalized interval $(0.5,0.55]$ 
dominate those in $(0.45,0.5]$,
}
  &\geq \sum_{0.55/\veps^2 < j\leq 0.6/\veps^2} f(v_j)
    - \sum_{0.4/\veps^2 < j\leq 0.45/\veps^2} f(v_j)
\intertext{and bound the remaining elements using \eqref{eq:fdiff}, }
  &\geq (0.05/\veps^2)  \big[ f(v_{0.55/\veps^2})-f(v_{0.45/\veps^2}) \big]
  \geq 0.0005/\veps^3.
\end{align*}
Bob's estimate for each of the values $n(S^*,T)$ and $n(S',T)$ 
has additive error at most $\gamma/\veps^3$, 
and therefore for suitable $\gamma=10^{-4}$, 
the list $B$ Bob computes cannot be this set $S'$.
Thus, Bob's list $B$ must contain at least $9/10$-fraction of $S^*$,
i.e., the $\frac{1}{2\veps^2}$ vertices $v\in L(G_j)$ with highest $f(v)$.

Recall from Claim~\ref{cl:binom1} that with probability at least $0.98$, 
we have $\tfrac{1}{4\veps^2} \leq \card{\Lhigh} \leq \tfrac{1}{2\veps^2}$,
and assume henceforth this event occurs.
Since $S^*$ includes the $\tfrac{1}{2\veps^2}$ vertices with highest $f$-value,
it must contain all the vertices of $\Lhigh$,
i.e., $\Lhigh \subseteq S^*$.
We already argued that Bob's list $B$ contains all 
but at most $\tfrac{1}{10}\card{S^*} = \tfrac{1}{20\veps^2}$ vertices of $S^*$,
and thus 
\[
  \frac{\card{\Lhigh\setminus B}} {\card{\Lhigh}}
  \leq \frac{\card{S^*\setminus B}} {\card{\Lhigh}}
  \leq \frac{\ \tfrac{1}{20\veps^2}\ } {\tfrac{1}{4\veps^2} }
  = \frac{1}{5}.
\]
This bound holds with probability at least $0.96$
(because of two events that we ignored, each having probability at most $0.02$)
and this proves Lemma~\ref{lem:BobRecovers}.
\end{proof}

\medskip
\begin{proof}[Proof of Claim \ref{cl:binom}]
The $(\half-\alpha)m$ order statistic of $\aset{X_j}$ 
is smaller or equal to $T= \tfrac14 t - \tfrac{\alpha}{10}\sqrt{t}$
if and only if at least $(\half-\alpha)m$ elements are smaller or equal to $T$, 
which can be written as $\sum_{j=1}^m \indic_{\aset{X_j\le T}} \geq (\half-\alpha)m$.

Now fix $j\in\aset{1,\ldots,t}$. 
Then 
\begin{equation} \label{eq:binom1}
  \Pr[X_j\leq T] = \Pr[X_j \leq \tfrac14 t] \cdot \Pr[X_j \leq T \mid X_j \leq \tfrac14 t],
\end{equation}
and by the Binomial distribution's relationship between mean and median,
$\Pr[X_j \leq \tfrac14 t] \geq \half$.
Elementary but tedious calculations (or the Berry-Esseen Theorem) show 
there is an absolute constant $K\in(0,5)$ such that 
\rnote{I have not verified the constant is indeed smaller than 5;
if not, we should increase the other constant 10 accordingly.}
\[
  \Pr\Big[\tfrac14 t - \tfrac{\alpha}{10}\sqrt{t} < X_j \leq \tfrac14 t\Big]
  \leq K \tfrac{\alpha}{10}\cdot \Pr\Big[X_j \leq \tfrac14 t\Big],
\]
and plugging into \eqref{eq:binom1}, we obtain 
$\Pr[X_j\leq T] \geq \half(1-K\tfrac{\alpha}{10}) \geq \half - \half\alpha$.

Now bound the expectation by
$\EX[\sum_{j=1}^m \indic_{\aset{X_j\le T}}] \geq (\half -\half\alpha)m$,
and apply Hoeffding's inequality, 
\[
  \Pr \Big[\sum_j \indic_{\aset{X_j\le T}} < (\half-\alpha)m \Big]
  \leq e^{-\half (\half \alpha)^2 m}
  = e^{-\alpha^2 m/8}
  \leq 0.01,
\]
where the last inequality follows since $\alpha^2m$ is sufficiently large.
\end{proof}

\subsection{The Communication Lower Bound}
\label{sec:CC}


We now prove Theorem \ref{thm:GHD} (see Theorem \ref{thm:GHD2} below),
by considering distributional communication problems
between two parties, Alice and Bob, as defined below. 
We restrict attention to the one-way model, in which Alice sends to Bob
a single message $M$ that is a randomized function of her input
(using her private randomness), and Bob outputs the answer.

\subsubsection{Distributional Versions of Gap-Hamming-Distance.}
Recall that our analysis is asymptotic for $\veps$ tending to $0$, and 
let $0 < c < 1$ be a parameter, considered to be a sufficiently small constant.
Alice's input is $S \in \{0,1\}^{\frac{1}{\eps^2}}$,
Bob's input is $T \in \{0,1\}^{\frac{1}{\eps^2}}$,
where the Hamming weights are $\wt(S) = \wt(T) = \frac{1}{2\eps^2}$, and Bob 
needs to evaluate the partial function
\[
  f_{c}(S,T) = 
  \begin{cases}
    1& \text{if $\Delta(S,T) \geq \frac{1}{2\eps^2} + \frac{c}{\eps}$; }\\
    0& \text{if $\Delta(S,T) \leq \frac{1}{2\eps^2} - \frac{c}{\eps}$. }
  \end{cases}
\]
The distribution $\mu$ we place on the inputs $(S,T)$ is the following: 
$S$ is chosen uniformly at random with $\wt(S) = \frac{1}{2\eps^2}$,
and then with probability $\frac{1}{2}$, we choose $T$ uniformly at random with $\wt(T) = \frac{1}{2\eps^2}$
subject to the constraint that $\Delta(S,T) \geq \frac{1}{2\eps^2} + \frac{c}{\eps}$, 
while with the remaining probability $\frac{1}{2}$, we choose $T$
uniformly at random with $\wt(T) = \frac{1}{2\eps^2}$ 
subject to the constraint that $\Delta(S,T) \leq \frac{1}{2\eps^2} - \frac{c}{\eps}$. 
We say Alice's message $M = M(S)$ is $\delta$-error for $(f_{c}, \mu)$ if 
Bob has a reconstruction function $R$ for which
$$\Pr_{(S,T) \sim \mu, \textrm{ private randomness}}[R(M, T) = f_{c}(S,T)] \geq 1-\delta.$$

Now consider a related but different distributional problem. 
Alice and Bob have $S, T \in \{0,1\}^{\frac{1}{\eps^2}}$, respectively,
each of Hamming weight exactly $\frac{1}{2\eps^2}$, 
and Bob needs to evaluate the function
\[
  g(S,T) = 
  \begin{cases}
    1& \text{if $\Delta(S,T) >    \frac{1}{2\eps^2}$; }\\
    0& \text{if $\Delta(S,T) \leq \frac{1}{2\eps^2}$. }
  \end{cases}
\]
We place the following distribution $\zeta$ on the inputs $(S,T)$: 
$S$ and $T$ are chosen independently and uniformly at random
among all vectors with Hamming weight exactly $\frac{1}{2\eps^2}$. 
We say a message $M$ is $\delta$-error for $(g, \zeta)$ if Bob has a reconstruction function $R$ for which
$$\Pr_{(S, T) \sim \zeta, \ \textrm{private randomness}}[R(M, T) = g(S,T)] \geq 1-\delta.$$

Let $I(S ; M) = H(S) - H(S | M)$ be the mutual information between $S$ and $M$, where $H$ is the entropy
function. Define
$\IC_{\mu, \delta}(f_c) = \min_{\text{$\delta$-error $M$ for $(f_c, \mu)$}} I(S ; M)$ 
and 
$\IC_{\zeta, \delta}(g) = \min_{\text{$\delta$-error $M$ for $(g, \zeta)$}} I(S ; M)$. 

\begin{lemma}\label{lem:icost}
For all $\delta>0$, 
 $\IC_{\mu, \delta}(f_c) 
\geq \IC_{\zeta, \delta + O(c)}(g)$. 
\end{lemma}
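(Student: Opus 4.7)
The plan is to exhibit a reduction: any $\delta$-error message $M$ for $(f_c,\mu)$ can be used \emph{verbatim} as a $(\delta+O(c))$-error message for $(g,\zeta)$ with the same information cost $I(S;M)$. Taking the infimum over protocols then yields the lemma. A first easy observation is that $S$ has the same marginal under $\mu$ and $\zeta$ (both are uniform on weight-$\tfrac{1}{2\eps^2}$ strings), and since $M=M(S)$ depends only on $S$ and Alice's private randomness, the joint law of $(S,M)$ coincides under $\mu$ and $\zeta$; in particular $I_\mu(S;M)=I_\zeta(S;M)$. The real content is to analyze Bob's reconstruction $R$ when applied to inputs drawn from $\zeta$ rather than $\mu$.

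The structural heart of the reduction is an exact symmetry. For any fixed weight-$\tfrac{1}{2\eps^2}$ string $S$, bitwise complementation $T\mapsto \bar T$ is a measure-preserving bijection on weight-$\tfrac{1}{2\eps^2}$ strings and sends $\Delta(S,T)$ to $\tfrac{1}{\eps^2}-\Delta(S,T)$. Hence, conditional on $S$, the law of $\Delta(S,T)$ under $\zeta$ is symmetric about $\tfrac{1}{2\eps^2}$, and the two ``extreme'' events $\{\Delta\ge \tfrac{1}{2\eps^2}+\tfrac{c}{\eps}\}$ and $\{\Delta\le \tfrac{1}{2\eps^2}-\tfrac{c}{\eps}\}$ are equiprobable under $\zeta(\cdot\mid S)$. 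Setting $G:=\{\,|\Delta(S,T)-\tfrac{1}{2\eps^2}|<\tfrac{c}{\eps}\,\}$ and $p_0:=\Pr_\zeta[G]$, comparing definitions then yields the clean identity $\zeta(\cdot\mid G^c)=\mu$, so that $\zeta=(1-p_0)\,\mu+p_0\,\zeta(\cdot\mid G)$, and in particular $\|\zeta-\mu\|_{\mathrm{TV}}\le p_0$.

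The main obstacle will be to show $p_0=O(c)$. I would argue that $\Delta(S,T)=\tfrac{1}{\eps^2}-2|S\cap T|$ with $|S\cap T|$ hypergeometric, giving mean exactly $\tfrac{1}{2\eps^2}$ (consistent with the symmetry) and variance $\Theta(\tfrac{1}{\eps^2})$, and then invoke a Berry--Esseen or direct Stirling-based anti-concentration bound for the hypergeometric to conclude that the probability of $\Delta(S,T)$ lying in an $O(\tfrac{1}{\eps})$-window around its mean is $O(c)$. This is the only genuinely analytic step; the rest is distributional bookkeeping.

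With $p_0=O(c)$ in hand, the reduction closes quickly. Under $\zeta$ Bob runs the same $(M,R)$ from the $f_c$ protocol. On $G^c$, the partial function $f_c$ and the function $g$ agree (if $\Delta\ge \tfrac{1}{2\eps^2}+\tfrac{c}{\eps}$ then $g=1=f_c$; if $\Delta\le \tfrac{1}{2\eps^2}-\tfrac{c}{\eps}$ then $g=0=f_c$), and because $\zeta(\cdot\mid G^c)=\mu$ the error probability there is at most $\delta$; on $G$, which has probability $p_0=O(c)$, we bound the error trivially by $1$. Hence $(M,R)$ is a $(\delta+O(c))$-error protocol for $(g,\zeta)$ with information cost $I_\zeta(S;M)=I_\mu(S;M)$, which, on taking the infimum over $\delta$-error protocols for $(f_c,\mu)$, yields $\IC_{\mu,\delta}(f_c)\ge \IC_{\zeta,\delta+O(c)}(g)$, as required.
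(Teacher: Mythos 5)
Your proposal is correct and follows essentially the same route as the paper: use the $\delta$-error protocol $M$ for $(f_c,\mu)$ verbatim on $(g,\zeta)$, note that $f_c=g$ on the support of $\mu$, and charge the anti-concentration window of probability $O(c)$ as extra error. Your argument is actually a bit more careful than the paper's, which simply asserts that $\zeta$ conditioned on $|\Delta(S,T)-\tfrac{1}{2\eps^2}|\ge\tfrac{c}{\eps}$ equals $\mu$; your complementation symmetry $T\mapsto\bar T$ cleanly justifies that the high and low branches are equiprobable, and you correctly identify $|S\cap T|$ as hypergeometric (not binomial) for the $p_0=O(c)$ bound.
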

\begin{proof}
It suffices to show that if 
$M$ is $\delta$-error for $(f_c, \mu)$, then $M$ is $(\delta + O(c))$-error for $(g, \zeta)$.
Since $M$ is $\delta$-error for $(f_c, \mu)$, Bob has a reconstruction function $R$ for which
$$\Pr_{(S,T) \sim \mu, \textrm{ private randomness}}[R(M, T) = f_{c}(S,T)] \geq 1-\delta.$$ Now consider
$\Pr_{(S,T) \sim \zeta, \textrm{ private randomness}}[R(M, T) = g(S,T)]$. Observe that whenever $(S,T)$ lies in the support of $\mu$, if
$R(M,T) = f_c(S,T)$, then $R(M,T) = g(S,T)$. The probability that $(S,T)$ lies in the support of $\mu$
is $1-O(c)$, by standard anti-concentration arguments of the Binomial distribution (or the Berry-Esseen Theorem), 
and conditioned on this event we have that $(S,T)$ is distributed according to $\mu$. Hence,
$\Pr_{(S,T) \sim \zeta, \ \textrm{private randomness}}[R(M, T) = g(S,T)] 
  \geq [1 - O(c)][1- \delta]
  \geq 1 - O(c) - \delta
  $. 
\end{proof}
We now lower bound $\IC_{\zeta, \delta}(g)$. 
\begin{lemma}\label{lem:icost2}
For $\delta_0 > 0$ a sufficiently small constant, $IC_{\zeta, \delta_0}(g) = \Omega \left (\frac{1}{\eps^2} \right ).$
\end{lemma}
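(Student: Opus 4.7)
The plan is to reduce the problem to the classical one-way Gap-Hamming-Distance lower bound on uniform (unconstrained) binary strings of length $N = 1/\eps^2$ with gap $\Theta(\sqrt{N}) = \Theta(1/\eps)$ centered at $N/2$. By well-known results on the one-way complexity of Gap-Hamming-Distance (Indyk--Woodruff; Jayram--Kumar--Sivakumar; Chakrabarti--Regev), distinguishing $\Delta(X,Y) \geq N/2 + c'/\eps$ from $\Delta(X,Y) \leq N/2 - c'/\eps$ for uniform $X, Y \in \{0,1\}^N$ has one-way information cost $\Omega(N) = \Omega(1/\eps^2)$ for some small constant $c' > 0$ and small constant error $\delta_0$. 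The goal is to transfer this to the exact-weight distribution $\zeta$ considered for $g$.

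The main step is to show that any $\delta_0$-error one-way message $M$ for $(g, \zeta)$ yields a low-information message for the classical GHD problem on uniform inputs. Given uniform $X, Y \in \{0,1\}^N$, Alice and Bob would each first locally \emph{rebalance} their strings to weight exactly $N/2$ by flipping a uniformly random subset of bits of the ``overrepresented'' value --- Alice uses her private randomness to turn $X$ into a uniformly random exact-weight string $S$, and Bob similarly turns $Y$ into $T$. By standard anti-concentration (Berry--Esseen), with probability $1-O(c')$ the rebalancing touches only $O(1/\eps)$ positions on each side, so $\Delta(S,T)$ differs from $\Delta(X,Y)$ by at most $O(c'/\eps)$, and the threshold event $\Delta(X,Y) \gtrless N/2$ in the original GHD problem can be recovered from a sufficiently accurate decision about $\Delta(S,T) \gtrless N/2$ (choosing the gap constant $c$ in $f_c$ and hence the implicit gap in $g$ appropriately smaller than $c'$ so that the rebalancing noise does not flip the answer, except with probability $O(c')$). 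Combining with Lemma~\ref{lem:icost}, the information cost of $M$ would bound the one-way information cost of the classical GHD problem, forcing $I(S; M) \geq \Omega(N) = \Omega(1/\eps^2)$.

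The delicate part will be carrying out the reduction at the level of \emph{information cost} rather than merely communication complexity, since the statement is about $\IC_{\zeta, \delta_0}(g)$. I would argue as follows: let $M$ be a $\delta_0$-error message for $(g, \zeta)$ with $I(S; M) = \IC_{\zeta, \delta_0}(g)$. Compose $M$ with Alice's local rebalancing map $X \mapsto S$, producing a message $M'$ that is a (randomized) function of $X$. By the data-processing inequality $I(X; M') \leq I(S; M) + O(\log N)$ (the additive term accounts for the small amount of side-information recording how many bits were flipped, which is $O(\log N)$ bits and can be absorbed without harm since the target lower bound is $\Omega(1/\eps^2) \gg \log N$). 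Then $M'$ is an $O(\delta_0 + c')$-error protocol for the unconstrained GHD problem, and the classical lower bound yields $I(X; M') = \Omega(N)$.

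The hardest conceptual step is ensuring that the rebalancing reduction preserves both (i) the correctness up to a small additive error $O(c')$, and (ii) the distribution of the \emph{reduced} input matches $\zeta$ exactly so Bob's reconstruction $R(M,T)$ behaves as promised; this needs a careful coupling of the rebalancing operations on Alice's and Bob's sides (they must be performed with independent private randomness so the joint distribution of $(S,T)$ after rebalancing is exactly $\zeta$). If this coupling turns out to be too lossy, an alternative route would be a direct information-theoretic argument: expand $I(S; M) \geq \sum_i I(S_i; M \mid S_{<i})$ by the chain rule and argue by a single-coordinate ``corruption'' reduction that each term is $\Omega(\eps^2)$, along the lines of Jayram--Kumar--Sivakumar; but this requires setting up the correct single-coordinate hard distribution and is the main technical obstacle in that alternative approach.
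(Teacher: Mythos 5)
Your overall strategy---reduce the exact-Hamming-weight distribution $\zeta$ to the unconstrained uniform product distribution $\eta$ via local rebalancing, invoke a known lower bound for one-way Gap-Hamming-Distance under $\eta$, and transfer it through the reduction---is precisely the route the paper takes, and the rebalancing/anti-concentration analysis you sketch is on the right track.

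The substantive concern is your appeal to Indyk--Woodruff, Jayram--Kumar--Sivakumar and Chakrabarti--Regev for a ``one-way information cost $\Omega(N)$.'' Those works establish lower bounds on one-way (or two-way) \emph{communication complexity}, not on \emph{information cost} $I(S;M)$. Since the entire point of Lemma~\ref{lem:icost2} is to feed into the direct-sum argument of Theorem~\ref{thm:GHD2}, which adds up per-coordinate mutual informations $I(M;S_i)$, a raw communication bound does not plug in (it only gives $H(M)\geq\Omega(N)$, not $I(S;M)\geq\Omega(N)$). The paper instead invokes the information-cost lower bound of Braverman, Garg, Pankratov and Weinstein \cite{BGPW13}, which gives $\IC_{\eta,\delta}(h_1)\geq C/\eps^2$ directly. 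Your proposal is repairable---cite \cite{BGPW13}, or else add a one-way message-compression argument to convert the CC bound into an IC bound---but as written this is a genuine gap: the cited results do not state what you need.

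A smaller point: your data-processing step ``$I(X;M')\leq I(S;M)+O(\log N)$'' is true but the justification is off. The chain $X\to S\to M'(X)=M(S)$ is Markov (the rebalancing uses private randomness independent of the message randomness), so plain DPI already gives $I(X;M')\leq I(S;M)$ with \emph{no} additive term; the ``side information about which bits were flipped'' framing is unnecessary. Interestingly, the paper's own proof records the identities of the $\Theta(\eps^{-3/2})$ flipped coordinates as explicit side information and so incurs an $O(\eps^{-3/2}\log(1/\eps))$ loss, which it then argues is lower order---so your cleaner accounting would actually simplify matters. Your fallback suggestion, a per-coordinate chain-rule argument directly on $\zeta$, would run into the obstruction that the coordinates of an exact-weight string are \emph{not} independent, which is exactly what the reduction to $\eta$ is designed to circumvent.
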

\begin{proof}
We use the following lower bound of 
Braverman, Garg, Pankratov and Weinstein \cite{BGPW13} 
for the following $h_c(S,T)$ problem. Like before, Alice
has $S \in \{0,1\}^{\frac{1}{\eps^2}}$, Bob has $T \in \{0,1\}^{\frac{1}{\eps^2}}$, 
and needs to evaluate the partial function
\[
  h_{c}(S,T) = 
  \begin{cases}
    1& \text{if $\Delta(S,T) \geq \frac{1}{2\eps^2} + \frac{c}{\eps}$; }\\
    0& \text{if $\Delta(S,T) \leq \frac{1}{2\eps^2} - \frac{c}{\eps}$. }
  \end{cases}
\]
However, now $\wt(S)$ and $\wt(T)$ may be arbitrary. 
Moreover, $S$ and $T$ are chosen independently and uniformly at random from $\{0,1\}^{\frac{1}{\eps^2}}$. Denote
this by $(S,T) \sim \eta$. 
Now it may be the case that $\left|\Delta(S,T) - \frac{1}{2\eps^2}\right| < \frac{c}{\eps}$,
in which case Bob's output is allowed to be arbitrary. A message $M$ is $\delta$-error for $(h_c, \eta)$ 
if Bob has a reconstruction
function $R$ for which
$$\Pr_{(S,T) \sim \eta, \textrm{ private randomness}} \left [\left (R(M,T) = h_c(S,T) \right )\wedge \left (\left| \Delta(S,T) - \frac{1}{2\eps^2} \right| \geq \frac{c}{\eps} \right )\right ] \geq 1-\delta.$$
It was proved in \cite{BGPW13} that for a sufficiently small constant $\delta > 0$, 
$$\IC_{\eta, \delta}(h_1) = \min_{\text{$\delta$-error $M$ for $(h_1,\eta)$}} I(S ; M) \geq \frac{C}{\eps^2},$$ for an absolute
constant $C > 0$. We show how to apply this result to prove the lemma.

An immediate corollary of this result is that 
$\IC_{\eta, \delta}(g) = \min_{\text{$\delta$-error $M$ for $(g,\eta)$}} I(S ; M) \geq \frac{C}{\eps^2}$. 
Indeed, if $M$ is $\delta$-error for $(g, \eta)$, then it is also $\delta$-error
for $(h_1, \eta)$. 

Now let $M$ be a $\delta$-error protocol for $(g, \zeta)$. Consider the following randomized protocol $M'$
for $g$ with inputs distributed according to $\eta$. 
Given $S$, Alice computes $s =$ wt($S$). If $s < \frac{1}{2\eps^2}$, Alice randomly chooses
$\frac{1}{2\eps^2} - s$ coordinates in $S$ that are equal to $0$ and replaces them with a $1$, otherwise she randomly
chooses $s - \frac{1}{2\eps^2}$ coordinates in $S$ that are equal to $1$ and replaces them with a $0$. Let $S'$ be
the resulting vector. Alice sends $M(S')$ to Bob, i.e., $M'(S)= M(S')$. 
Given the message $M(S')$ and his input $T$, Bob first computes $t = \wt(T)$. 
If $t < \frac{1}{2\eps^2}$,
Bob randomly chooses $\frac{1}{2\eps^2} - t$ coordinates in $T$ which are equal to $0$ and replaces them with a $1$, otherwise
he randomly chooses $t-\frac{1}{2\eps^2}$ coordinates in $T$ which are equal to $1$ and replaces them with a $0$. Let $T'$
be the resulting vector. Suppose $R$ is such that 
$\Pr_{(S', T') \sim \zeta, \textrm{ private randomness}} [R(M(S'), T') = g(S', T')] \geq 1-\delta.$ Bob outputs $R(M(S'),T')$.

We now lower bound $\Pr[g(S', T') = g(S, T)]$, where the probability is over $(S,T) \sim \eta$ and the random choices of Alice
and Bob for creating $S', T'$ from $S,T$, respectively. First, the number of coordinates changed by Alice or Bob
is $r = \Theta(1/\eps)$ with arbitrarily large constant probability. Since $S$ and $T$ are independent and uniformly random, 
after performing this change, the Hamming distance on these $r$ coordinates is $\frac{r}{2} \pm O(\sqrt{r})$ with
arbitrarily large constant probability. Finally, $\left|\Delta(S', T') - \frac{1}{2\eps^2} \right| = \omega(\sqrt{r})$
with arbitrarily large constant probability. Hence, with arbitrarily large constant probability, $g(S',T') = g(S,T)$.
It follows that $\Pr[g(S', T') = g(S, T)] \geq 1-\gamma$ for an arbitrarily small constant $\gamma > 0$, and
therefore if $R'$ describes the above reconstruction procedure of Bob, then
$\Pr_{(S, T) \sim \eta, \textrm{ private randomness}}[R'(M'(S),T) = g(S, T)] \geq 1-\gamma - \delta$. 

Hence, $M'$ is a $(\delta+\gamma)$-error protocol for $(g, \eta)$. We now bound $I(M' ; S)$ in terms of $I(M ; S')$. 
Let $J$ be an indicator random variable for the event 
$\wt(S)\in \left [\frac{1}{2\eps^2}-\frac{1}{\eps^{3/2}}, \frac{1}{2\eps^2} + \frac{1}{\eps^{3/2}} \right ]$. 
Then $\Pr[J = 1] = 1-o(1)$,
where $o(1) \rightarrow 0$ as $\eps \rightarrow 0$. Since conditioning on a random variable $Z$ can change the mutual
information by at most $H(Z)$, we have 
\begin{eqnarray}\label{eqn:derive1}
I(M' ; S) \leq I(M' ; S \mid J) + H(J) \leq I(M' ; S \mid J = 1) + 1.
\end{eqnarray}
$S$ is a probabilistic function of $S'$, which if $J = 1$, is obtained by changing at most $1/\eps^{3/2}$ randomly
chosen coordinates $A_1, \ldots, A_{1/\eps^{3/2}}$ of $S'$ from $0$ to $1$ or from $1$ to $0$. By the data processing inequality
and the chain rule for mutual information, 
\begin{eqnarray}\label{eqn:derive2}
I(M' ; S \mid J = 1) & \leq & I(M' ; S', A_1, \ldots, A_{1/\eps^{3/2}} \mid J = 1) \nonumber \\
& = & I(M' ; S' \mid J = 1) + \sum_{\ell=1}^{1/\eps^{3/2}} I(M' ; A_{\ell} \mid J = 1, A_1, \ldots, A_{\ell-1}) \nonumber \\
& \leq & I(M' ; S' \mid J = 1) + O \left (\frac{\log(1/\eps)}{\eps^{3/2}} \right ).
\end{eqnarray}
Observe that the joint distribution of $M'(S')$ and $S'$ is independent of $J$, and moreover is equal to the joint distribution
of $M(S')$ and $S' \sim \zeta$. We can take $M$ to be a $\delta$-error protocol for $(g, \zeta)$ for which
$I(M(S') ; S') = \IC_{\zeta, \delta}(g)$. Combining this with (\ref{eqn:derive1}) and (\ref{eqn:derive2}), 
$I(M' ; S ) \leq \IC_{\zeta, \delta}(g) + O \left (\frac{\log(1/\eps)}{\eps^{3/2}} \right ).$
Now since $M'$ is a $(\delta+\gamma)$-error protocol for $(g, \eta)$, we have $I(M' ; S) \geq \IC_{\eta, \delta+\gamma}(g) \geq  \frac{C}{\eps^2}$,
provided $\delta$ and $\gamma$ are sufficiently small constants. It follows that
$ \IC_{\zeta, \delta}(G) 
  \geq \frac{C}{\eps^2} - O \left (\frac{\log(1/\eps)}{\eps^{3/2}} \right )
  \geq \frac{C}{2\eps^2}
$, as desired. 
\end{proof}
\begin{corollary}\label{cor:prim}
For sufficiently small constants $\delta, c > 0$,  $\IC_{\mu, \delta}(f_c) = \Omega(1/\eps^2)$. 
\end{corollary}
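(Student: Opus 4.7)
The plan is simply to chain together the two lemmas that precede the corollary. By Lemma~\ref{lem:icost}, we have $\IC_{\mu,\delta}(f_c) \geq \IC_{\zeta,\delta+O(c)}(g)$, where the $O(c)$ term is the anti-concentration loss from dropping the gap condition on $\Delta(S,T)$. By Lemma~\ref{lem:icost2}, there exists an absolute constant $\delta_0>0$ such that $\IC_{\zeta,\delta_0}(g) \geq \Omega(1/\eps^2)$.

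First I would fix the constant $\delta_0$ supplied by Lemma~\ref{lem:icost2}. Then I would pick $\delta$ and $c$ sufficiently small so that $\delta + O(c) \leq \delta_0$, where the hidden constant in the $O(\cdot)$ is precisely the one appearing in the statement of Lemma~\ref{lem:icost}. With this choice, monotonicity of the information cost in the error parameter (a smaller error requirement can only make the minimum over $\delta$-error protocols larger) gives
\[
  \IC_{\zeta,\delta+O(c)}(g) \geq \IC_{\zeta,\delta_0}(g) \geq \Omega(1/\eps^2).
\]
Combining with Lemma~\ref{lem:icost} yields $\IC_{\mu,\delta}(f_c) \geq \Omega(1/\eps^2)$, which is exactly the claim.

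There is no real obstacle here; the corollary is a direct composition of the two lemmas, and the only thing to verify is that the constants can be chosen consistently. Since $\delta_0$ is an absolute constant and the $O(c)$ overhead in Lemma~\ref{lem:icost} has a fixed hidden constant, taking both $\delta$ and $c$ to be, say, at most a small enough fraction of $\delta_0$ is always possible, completing the argument.
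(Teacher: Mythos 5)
Your proof is correct and follows exactly the paper's approach, which is to combine Lemmas~\ref{lem:icost} and~\ref{lem:icost2}. You merely make explicit the choice of constants and the monotonicity of $\IC$ in the error parameter, which the paper leaves implicit.
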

\begin{proof}
This follows by combining Lemmas \ref{lem:icost} and \ref{lem:icost2}. 
\end{proof}

\subsubsection{$n$-fold Version of Gap-Hamming-Distance.}
We now consider the $n$-fold problem in which Alice is given $n$ strings
$S_1, \ldots, S_n \in \{0,1\}^{1/\eps^2}$, and Bob has an index
$I \in [n]$ together with one string $T \in \{0,1\}^{1/\eps^2}$. Here $(S_I, T) \sim \zeta$, while $S_j$ for $j \neq I$, are
chosen independently and uniformly at random from all Hamming weight $\frac{1}{2\eps^2}$ vectors. 
Thus the joint distribution of $S_1, \ldots, S_n$ 
is $n$ i.i.d. strings drawn uniformly from $\{0,1\}^{1/\eps^2}$ subject to each of their Hamming weights being $\frac{1}{2\eps^2}$. 
Here $I$ is drawn independently and uniformly from $[n]$. 
We let $\nu$ denote the resulting input distribution. 

We consider the one-way two-party model in which Alice sends a single, possibly randomized message $M$ of her inputs
$S_1, \ldots, S_n$, and Bob needs to evaluate $h(S_1, \ldots, S_n, T) = f_c(S_I, T)$.
We say $M$ is $\delta$-error for $(h, \nu)$ if Bob has a reconstruction
function $R$ for which 
$$\Pr_{\textrm{inputs}\sim \nu, \textrm{ private randomness}} \left [\left (R(M,T, I) = f_c(S_I, T) \right )\wedge \left (\left| \Delta(S_I,T) - \frac{1}{2\eps^2} \right| \geq \frac{c}{\eps} \right )\right ] \geq 1-\delta.$$
Let $\IC_{\nu, \delta}(h) = \min_{\text{$\delta$-error $M$ for $(h, \nu)$}} I(S_1, \ldots, S_n ; M)$.

\begin{theorem} \label{thm:GHD2}
For a sufficiently small constant $\delta > 0$, $\IC_{\nu, \delta}(h) = \Omega(n/\eps^2)$. In particular, the distributional 
one-way communication complexity of $h$ under input distribution $\nu$ is $\Omega(n/\eps^2)$. 
\end{theorem}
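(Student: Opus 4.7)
The plan is to prove Theorem \ref{thm:GHD2} by a direct sum reduction from the single-copy lower bound of Corollary \ref{cor:prim}. Intuitively, because the $n$ strings $S_1,\dots,S_n$ are mutually independent under $\nu$ (each uniform on Hamming weight $\tfrac{1}{2\veps^2}$), any $\delta$-error protocol $M$ for the $n$-fold problem must carry at least the single-copy information about each coordinate, so the total information is $\Omega(n)$ times the single-copy lower bound.

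First I would establish the information-theoretic decomposition. For any $\delta$-error message $M$ for $(h,\nu)$, the chain rule gives
\[
I(S_1,\dots,S_n;M)=\sum_{i=1}^n I(S_i;M\mid S_{<i}) \ \geq\ \sum_{i=1}^n I(S_i;M),
\]
where the inequality uses that $S_i$ is independent of $S_{<i}$, so conditioning on $S_{<i}$ can only reduce the conditional entropy $H(S_i\mid M)$. Therefore $\operatorname{avg}_i I(S_i;M)\le \IC_{\nu,\delta}(h)/n$. In parallel, letting $\delta_i$ denote the error of $M$ when Bob's index equals $i$, the assumption that $M$ has overall error $\delta$ over $\nu$ (where $I$ is uniform) gives $\operatorname{avg}_i \delta_i\le \delta$.

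Second, I would exhibit an embedding of the single-copy problem at a carefully chosen coordinate $i^*$. Given $(S,T)\sim\mu$, Alice privately samples $S_j$ for $j\neq i^*$ independently and uniformly among strings of Hamming weight $\tfrac{1}{2\veps^2}$, sets $S_{i^*}=S$, and runs $M$ on $(S_1,\dots,S_n)$. Bob simulates the $n$-fold reconstruction with index $i^*$ and string $T$. By construction the resulting joint distribution is exactly $\nu$, so the induced single-copy protocol $M'$ has error $\delta_{i^*}$ on $(f_c,\mu)$, and its information cost is $I(S;M')=I(S_{i^*};M)$ (private randomness does not contribute). By a Markov/pigeonhole argument applied to the two averages above, there exists $i^*$ with simultaneously $\delta_{i^*}\le 4\delta$ and $I(S_{i^*};M)\le 4\,\IC_{\nu,\delta}(h)/n$. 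Choosing $\delta$ small enough that $4\delta$ still lies below the threshold of Corollary~\ref{cor:prim} yields $4\,\IC_{\nu,\delta}(h)/n=\Omega(1/\veps^2)$, i.e.\ $\IC_{\nu,\delta}(h)=\Omega(n/\veps^2)$. The communication bound follows since $|M|\ge H(M)\ge I(S_1,\dots,S_n;M)$.

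I expect the main obstacle to be the bookkeeping around the reduction, specifically ensuring that the embedded single-copy instance truly satisfies the promise structure of $(f_c,\mu)$ and that the partial-function ``gap'' event $|\Delta(S_{i^*},T)-\tfrac{1}{2\veps^2}|\ge c/\veps$ is preserved with the correct probability. A subtle point is that the error of $M$ under $\nu$ is defined jointly with this gap event, so one has to unpack the definition to confirm that conditioning on $I=i^*$ still leaves the conditional distribution of $(S_{i^*},T)$ equal to $\mu$, and that the per-coordinate error $\delta_{i^*}$ matches the error Corollary~\ref{cor:prim} requires. Once these distributional identities are verified, the constant-factor loss in $\delta$ is harmless and the direct sum goes through.
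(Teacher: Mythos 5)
Your proof is correct and takes essentially the same direct-sum route as the paper: the same chain-rule decomposition $I(M;S_1,\dots,S_n)\ge\sum_i I(M;S_i)$ using independence of the $S_i$, and the same embedding of a single-copy $(f_c,\mu)$ instance at a chosen coordinate (Alice privately samples the other $S_j$, Bob sets $I=i$). The one cosmetic difference is the averaging step: you run two Markov bounds in parallel to isolate one $i^*$ with both $\delta_{i^*}\le 4\delta$ and $I(S_{i^*};M)\le 4\,\IC_{\nu,\delta}(h)/n$, whereas the paper notes that at least $n/2$ indices are ``good'' ($\delta_i\le 2\delta$) and simply sums $\sum_{\text{good }i}I(M;S_i)\ge (n/2)\cdot\IC_{\mu,2\delta}(f_c)$; the paper's version is a touch shorter because it needs only one Markov argument, but both give $\Omega(n/\eps^2)$. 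One small inaccuracy in your closing discussion: under $\nu$, conditioning on $I=i^*$ makes $(S_{i^*},T)$ distributed as $\zeta$ (independent uniform), not $\mu$; the reduction still works because $\mu$ is $\zeta$ conditioned on the gap event and the error under $\nu$ is defined \emph{conjoined} with that gap event, so $\Pr_\mu[\text{correct}]\ge\Pr_\zeta[\text{correct}\wedge\text{gap}]\ge 1-\delta_{i^*}$, and the marginal of $S_{i^*}$ (hence $I(S_{i^*};M)$) agrees under both.
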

\begin{proof}
Say an index $i \in [n]$ is {\it good} if
$$\Pr_{\textrm{inputs}\sim \nu, \textrm{ private randomness}} \left [\left (R(M,T, I) = f_c(S_I, T) \right )\wedge \left (\left| \Delta(S_I,T) - \frac{c}{2\eps^2} \right| \geq \frac{1}{\eps} \right ) \mid I = i\right ] \geq 1-2\delta.$$
By a union bound, there are at least $n/2$ good $i \in [n]$. By the chain rule for mutual information and 
using that the $S_i$ are independent and conditioning does not increase entropy, 
\begin{eqnarray*}
I(M ; S_1, \ldots, S_n) \geq \sum_{i=1}^n I(M ; S_i) \geq \sum_{\textrm{good }i} I(M ; S_i).
\end{eqnarray*}
We claim that for each good $i$, $I(M ; S_i) \geq  \IC_{\mu, 2\delta}(f_c)$. Indeed, consider the following protocol $M_i$ for $f_c$ under distribution $\mu$. 
Alice, given her input $S$ for $f_c$, uses her private randomness to sample $S_j$ for all $j \neq i$ independently and uniformly
at random from $\{0,1\}^{1/\eps^2}$ subject to each of their Hamming weights being $\frac{1}{2\eps^2}$. Bob sets $I = i$ and
uses his input $T$ for $f_c$ as his input for $h$. Since $i$ is good, it follows that $M_i$ is $2\delta$-error
for $(f_c, \zeta)$. Hence $I(M ; S_i) \geq \IC_{\mu, 2\delta}(f_c)$, which by Corollary \ref{cor:prim}, is $\Omega(1/\eps^2)$
provided $\delta > 0$ is a sufficiently small constant. Hence, $\IC_{\nu, \delta} = \Omega(n/\eps^2)$. Since 
$\IC_{\nu, \delta}(h) \leq I(M ; S_1, \ldots, S_n)$ for each $\delta$-error $M$ for $(h, \nu)$, and 
$I(M ; S_1, \ldots, S_n) \leq H(M)$ which is less than the length of $M$, the communication complexity lower bound follows. 
\end{proof}

\subsection{Cut-sparsifiers Lower Bound}
\label{sec:sparsifierLB}


We now prove Theorem~\ref{thm:sparsifierLB}.
Fix $n$ and $\eps$, and assume that every $n$-vertex graph has a $(1+\gamma\eps/2)$-cut sparsifier with at most $s$ edges, 
for a small constant $\gamma>0$ to be determined later. 
We wish to prove a lower bound on $s$.
Consider then the random graph $G$ constructed 
in the proof of Theorem~\ref{thm:sketchLB},
as a disjoint union of graphs $\aset{G_j: j\in[\veps^2n/2]}$, 
each being a bipartite graph with $1/\eps^2$ vertices in each side.
By our assumption above, such $G$ (always) has a subgraph $H$ 
which is a $(1+\gamma\eps/2)$-cut sparsifier having at most $s$ edges.
By Theorem~\ref{thm:sketchLB}, answering all possible cut queries correctly 
(in the sense of approximation factor $1\pm\eps$ with probability $1$) 
requires sketch size $\Omega(n/\eps^2)$ bits.
In fact, by inspecting the proof (specifically, of Lemma~\ref{lem:BobRecovers})
the above holds even if the correctness 
(1) holds only for cut queries contained in a single $G_j$,
i.e., queries $S\cup T$ for $S\subset L(G_j)$ and $T\subset R(G_j)$;
and (2) allows for each cut value 
an additive error of $\gamma/\veps^3$, where $\gamma=10^{-4}$.
The idea now is to encode $H$ using $m\approx s$ bits in a way 
that suffices to correctly answer all such cut queries
i.e., in the context of Lemma~\ref{lem:BobRecovers},
Alice will encode $H$ and send it as her sketch to Bob.
The sketch-size bound $m\geq \Omega(n/\eps^2)$ we proved for $G$ 
would then imply a similar bound on $s$.

Consider then the sparsifier $H$, which is an edge-weighted graph,
while the edges of $G$ all have unit weight.
Observe that $H$ must be a union of disjoint graphs $H_j$ on 
$L(G_j)\cup R(G_j)$ for $j\in[\veps^2n/2]$, 
because $H$ must preserve the corresponding cut, which has value zero.
Let $s_j$ denote the number of edges in $H_j$.
It will be convenient to consider each such graph separately,
so fix for now some $j\in[\veps^2n/2]$.

Consider first the case $s_j\leq \gamma^2/(6\eps^4)$,
and let us show how to encode $H_j$.
Construct from $H_j$ another graph $H'_j$ by rounding 
every non-integral edge weight
to one of its two nearby integers independently at random,
in an unbiased manner.
Specifically, each edge weight $w>0$ is rounded upwards to $w'=\ceil{w}$ 
with probability $w-\floor{w}$, and downwards to $w'=\floor{w}$ otherwise.
Now consider a fixed cut query $S\cup T$ in $G_j$,
denoting by $\delta_H(S\cup T)$ its cut value in $H$, and similarly for $H'$.
Then $\EX[\delta_{H'}(S\cup T)] = \delta_H(S\cup T)$,
and since the number of edges participating in this cut (in $H_j$) 
is at most $s_j\leq \gamma^2/(6\eps^4)$, 
by Hoeffding's inequality for $t=\gamma/(2\veps^3)$,
\[
  \Pr\Big[ \abs{\delta_{H'}(S\cup T) - \delta_H(S\cup T)} > t \Big]
  \leq e^{-2t^2/s_j} 
  \leq e^{-3/\eps^2}.
\]
Applying a union bound over at most $2^{2/\eps^2}$ possible cut queries $S\cup T$, 
we see that there exists $H'_j$ (it is in fact obtained with high probability)
such that for every cut query, the cut value in $H'_j$ is within additive $\gamma/(2\veps^3)$ from the cut value in $H_j$,
which in turn is within additive $\gamma\eps/2\cdot 1/\eps^4 = \gamma/(2\eps^3)$
from the cut value in $G_j$.
Hence, such $H'_j$ approximates all the cut values in $G_j$
sufficiently well for our intended application,
and Alice's sketch will thus encode $H'_j$ instead of $H_j$.
To simplify the description, let us include in $H'_j$ also edges of weight zero,
and then $H'_j$ has exactly $s_j$ edges (same as in $H_j$).
We further claim that the total edge-weight in $H'_j$ is at most $2/\eps^4$.
To see this, observe that 
(i) the total edge-weight in $H'_j$ (and similarly for $G_j$) 
is exactly twice the expected cut value of a random query in that graph;
and (ii) this expected cut value in $H'_j$ differs from the respective
expectation in $G_j$ by at most additive $\gamma/(2\veps^3)$.
It follows that the total edge-weight in $H'_j$ 
is at most $1/\eps^4$ larger than that in $G_j$,
which in turn is at most $1/\eps^4$.

The encoding of $H'_j$ has two parts, 
which describe separately the edges of $H'_j$ (without their weights), 
and their weights (assuming the edges are known).
Since $H'_j$ has $1/\eps^2$ vertices in each side, 
the number of possibilities for $s_j$ edges (without their weights)
among $\binom{2/\eps^2}{2} \leq 2/\eps^4$ vertex pairs 
is at most $\binom{2/\eps^4}{s_j}$.
Given the identity of $s_j$ edges in $H'_j$, 
the number of possibilities for their weights 
(recall the weights are integral and add up to at most $2/\eps^4$) 
is at most $\binom{s_j+2/\eps^4}{s_j} \leq \binom{4/\eps^4}{s_j}$.
We conclude that $H'_j$ can be encoded, on its two parts, 
using $O(\log \binom{4/\eps^4}{s_j})$ bits.

The second case $s_j > \gamma^2/(6\eps^4)$ is very simple,
and just encodes the original $G_j$ (instead of encoding $H_j$),
which trivially provides the value of every cut query inside $G_j$ exactly.
A straightforward encoding of $G_j$ takes $1/\eps^4$ bits.

Concatenating these encodings over all $j\in[\veps^2n/2]$,
yields a sketch that can approximate the value of all the needed cut queries 
(those that are inside a single $G_j$) within additive error $\gamma/\veps^3$.
It remains to bound the size of this sketch.
The number of $j$'s that fall into the second case $s_j > \gamma^2/(6\eps^4)$ 
is at most 
$\frac{\sum_j s_j}{\gamma^2/(6\eps^4)} = \frac{6\eps^4 s}{\gamma^2}$, 
and thus the total size of their encodings is at most $6s/\gamma^2$ bits.
For $j$'s that fall into the first case,
we use the fact $\binom{p}{k}\cdot \binom{p'}{k'} \leq \binom{p+p'}{k+k'}$,
and get that the total size of their encodings is at most
$ \sum_j O(\log \binom{4/\eps^4}{s_j}) 
  \leq  O(\log \binom{2n/\eps^2}{s}) 
  = O(s \log (\eps^{-2}n/s))
$ 
bits.
Altogether, there is a sketch of size $m=O(s(\gamma^{-2}+\log (\eps^{-2}n/s)))$ 
bits that encodes all the relevant cuts in $G$ within the desired accuracy.
Recalling that $\gamma$ is a constant and 
our sketch-size lower bound $m\ge \Omega(n/\eps^2)$
(from Lemma~\ref{lem:BobRecovers} and Theorem~\ref{thm:GHD}),
we conclude that the number of edges in $H$ is $s\geq \Omega(n/\eps^2)$,
which proves Theorem~\ref{thm:sparsifierLB}.


\section{Symmetric Diagonally Dominant Matrices, ``For Each'' Model}
\label{sec:sdd-each}

In this section, we consider sketching SDD matrices with ``for each'' guarantee. That is, we want a sketch $\sk{A}$ of an $n \times n$ SDD matrix $A$, which can be used to produce a $(1 + \eps)$-approximation of $x^TAx$ for any fixed $x \in \mathbb{R}^{n}$ with constant probability (say $0.9$). Note that we can always use standard amplification argument (run the query on $O(\log n)$ independent copies of sketch, and return the median outcome) to boost the success probability to $1 - 1/n^{100}$.

Recall that we have a reduction from SDD matrices to Laplacian matrices (see Appendix \ref{app:sdd-each}).  
We thus only need to sketch Laplacian matrices, which we first do 
for \emph{cut queries} $x \in \{0, 1\}^n$ in Section~\ref{sec:cut-each},
and then for \emph{spectral queries} $x \in \R^n$ 
in Section~\ref{sec:spectral-each}
(for the main results, 
see Theorems~\ref{thm:upper} and~\ref{thm:improved-general}, respectively).

\subsection{Laplacian Matrices with Cut Queries}
\label{sec:cut-each}

Given a Laplacian matrix $L$, let $G = G(L) = (V, E, w)$  be the corresponding graph with $n=|V|$ vertices.
In this section we will construct a sketch for $G$ that can produce 
the weight of each \emph{cut} up to factor $1+\eps$ with constant probability.
Our work plan has three parts, each uses the previous one as a building block.
We start with an important special case in Section~\ref{sec:cut-special},
then extend it to all graphs with polynomial edge weights in Section~\ref{sec:cut-polyweight}, 
and finally extend that to all edge weights in Section~\ref{sec:cut-largeWeights}.

We then turn to proving a nearly matching lower bound in
Section~\ref{sec:nEpsLB}.

\subsubsection{Special Graphs and Special Queries} \label{sec:cut-special}
We start by sketching a class of special graphs with a special set of queries, which is sufficient to illustrate our basic ideas.

\begin{definition}[S1-graph]
We say an undirected weighted graph $G = (V, E, w)$ is an \emph{S1-graph} (reads ``simple type-$1$ graph'') if it satisfies the followings.
\begin{enumerate}
\item All edge weights are within factor $2$ of each other,  
i.e., there is $\gamma > \eps^2$ such that all $e \in E$ satisfy $w(e) \in [\gamma, 2 \gamma)$.

\item The expansion constant of $G$ is $\Gamma_G\geq\frac{1}{\eps}$.

\end{enumerate}
\end{definition}

\begin{algorithm}[t]
\KwIn{An S1-Graph $G = (V, E, w)$; a quality control parameter $\eps \in [1/n,1/30]$ }
\KwOut{A $(1+21\eps, 1/9)$-cut-sketch $\sk{G}$ of $G$ }
Let $s=1/\eps$\;
$\sk{G} \gets \emptyset$\;
Add $\{\delta_u(G)\ |\ u \in V\}$ to $\sk{G}$\;

\For {$u \in V$}{
  $E_u \gets \{ (u, v)\ |\ v \in V \}$\;
  Sample (with replacement) $s$ edges from $E_u$, where each time the probability of sampling $e = (u, v)\in E_u$ is $p_e = {1}/{d_u(G)}$\;
Add the sampled $s$ edges to $\sk{G}$\;
}

\Return{$\sk{G}$}\;
\caption{{\bf Cut-S1}($G, \eps$)}
\label{alg:cutsketch-basic-simple}
\end{algorithm}

We consider a special set of cut queries where $w(S, \bar{S}) \le 5$.

The sketching algorithm for S1-graph and the special cut queries is described in Algorithm~\ref{alg:cutsketch-basic-simple}, which is fairly simple: we first add all weighted degrees of vertices to the sketch, and then for each vertex we sample a set of adjacent edges and store them in the sketch.

We first show the correctness of Algorithm~\ref{alg:cutsketch-basic-simple}. Let $Y_u^v$ be the random variable denoting the number of times edge $(u, v)$ is sampled at Line $6$ in Algorithm \ref{alg:cutsketch-basic-simple}. It is easy to see that
\begin{equation}
\label{eq:cut-b-1}
\E[Y_u^v] = \frac{s }{d_u(G)} \quad \text{and} \quad \var{Y_u^v} = s \left(1 - \frac{1}{d_u(G)}\right)\frac{1}{d_u(G)} \leq \frac{s }{d_u(G)}.
\end{equation}
Given a cut $(S,\bar{S})$ such that $|S|\leq |V|/2$ and $w(S,\bar{S})\leq 5$, we can approximate the cut weight $w(S,\bar{S})$ by the following estimator:
\begin{equation}
  \label{eq:cut-estimator}
  I_G = \sum_{u \in S} \Big[\delta_u(G)-\frac{d_u(G)}{s}\sum_{v \in S,\ (u,v)\in E} Y_u^v w(u,v)\Big].
\end{equation}

\begin{lemma}
  \label{lem:cutS1-Estimator}
 Let $G = (V, E, w)$ be an S1-Graph and $(S,\bar{S})$ be a cut of $G$ such that $|S|\leq |V|/2$ and $w(S,\bar{S})\leq 5$, then $I_G$ (defined in Equation (\ref{eq:cut-estimator})) is an unbiased estimator of $w(S,\bar{S})$. Furthermore, it gives a $(1 + 21\eps, 1/9)$-approximation to $w(S,\bar{S})$ for such cut $(S,\bar{S})$.
\end{lemma}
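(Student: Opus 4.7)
The plan is to prove two claims separately: (i) $\E[I_G] = w(S,\bar S)$, and (ii) a variance bound of the form $\var{I_G} \le O(\eps^2\, w(S,\bar S)^2)$. Given both, Chebyshev's inequality applied at deviation $21\eps \cdot w(S,\bar S)$ delivers the $(1+21\eps,1/9)$-approximation guarantee.

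For (i), using $\E[Y_u^v]=s/d_u(G)$ from \eqref{eq:cut-b-1}, the inner expectation satisfies $\E\bigl[\tfrac{d_u(G)}{s}\sum_{v\in S,(u,v)\in E}Y_u^v w(u,v)\bigr]=w(u\to S)$, where I abbreviate $w(u\to S):=\sum_{v\in S,(u,v)\in E}w(u,v)$. Since $\delta_u(G)-w(u\to S)=w(u,\bar S)$, summing over $u\in S$ gives $\E[I_G]=\sum_{u\in S}w(u,\bar S)=w(S,\bar S)$.

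For (ii), the variance decomposes as $\var{I_G}=\sum_{u\in S}\var{X_u}$ by independence of the samples at different vertices. I would write the inner sum as $\sum_{i=1}^{s}w(u,Z_{u,i})\indic[Z_{u,i}\in S]$ with $Z_{u,i}$ i.i.d.\ uniform over the neighbors of $u$; then a direct second-moment computation, combined with the weight bound $w(e)\in[\gamma,2\gamma)$, yields $\var{X_u}\le \tfrac{2\gamma\, d_u(G)\, w(u\to S)}{s}$. Using $\gamma d_u(G)\le\delta_u(G)$ and the identity $\sum_{u\in S}\delta_u(G)=2w(S,S)+w(S,\bar S)$, this aggregates to
\[
  \var{I_G}\;\le\;\frac{2}{s}\sum_{u\in S}\delta_u(G)\,w(u\to S)\;\le\;\frac{2}{s}\cdot\bigl(\max_{u\in S}w(u\to S)\bigr)\cdot\bigl(2w(S,S)+w(S,\bar S)\bigr).
\]

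The crux, and the main obstacle, is to convert the last expression into something of order $\eps^2 w(S,\bar S)^2$. Here I would crucially invoke the expansion hypothesis $\Gamma_G\ge 1/\eps$: since every edge has weight at least $\gamma$, this gives $|S|\le \eps\cdot|\partial(S,\bar S)|\le \eps\, w(S,\bar S)/\gamma$, which in turn implies $\max_u w(u\to S)\le 2\gamma(|S|-1)\le 2\eps\, w(S,\bar S)$ and $w(S,S)\le\binom{|S|}{2}\cdot 2\gamma\le\eps^2 w(S,\bar S)^2/\gamma$. Combined with the hypotheses $\gamma>\eps^2$ and $w(S,\bar S)\le 5$, this forces $2w(S,S)+w(S,\bar S)\le 11\, w(S,\bar S)$. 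Plugging in $s=1/\eps$ then gives $\var{I_G}\le 44\eps^2\, w(S,\bar S)^2$, and Chebyshev's inequality bounds the failure probability by $44/441<1/9$. The difficult point is precisely this last estimate: without the expansion hypothesis one only has the cheap bound $|S|\le 5\eps/\gamma$, which does \emph{not} scale with $w(S,\bar S)$ and produces a variance bound too weak to make Chebyshev go through at deviation $\Theta(\eps)\cdot w(S,\bar S)$.
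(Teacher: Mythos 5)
Your proof is correct and takes essentially the same route as the paper's: unbiasedness is immediate, and the variance bound combines the edge-weight range, the expansion hypothesis $|S|\le\eps\,|\partial(S,\bar S)|$, the identity $\sum_{u\in S}\delta_u(G)=2w(S,S)+w(S,\bar S)$, and the assumptions $\gamma>\eps^2$, $w(S,\bar S)\le 5$ to reach $\var{I_G}\le 44\eps^2 w(S,\bar S)^2$, exactly as in the paper (you pull out $\max_u w(u\to S)$ where the paper instead bounds $d_u(G(S))\le|S|$, but this is a cosmetic reorganization of the same chain of inequalities). A small merit of your write-up: representing the $s$ samples at a vertex $u$ as i.i.d.\ draws $Z_{u,i}$ is cleaner than the paper's manipulation, which informally writes $\var{\sum_v Y_u^v w(u,v)}$ as $\sum_v \var{Y_u^v}w(u,v)^2$ despite the $Y_u^v$ being multinomially dependent (the resulting inequality still holds since the covariances are nonpositive, but your version sidesteps the issue).
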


\begin{proof}
Since $\E[Y_u^v] = \frac{s}{d_u(G)}$ (by (\ref{eq:cut-b-1})), it follows that
$$\E[I_G] = \sum_{u \in S} \Big[\delta_u-\sum_{v \in S,\ (u,v)\in E} w(u,v)\Big]=w(S,\bar{S}).$$
We next analyze the variance of $I_G$. Recall that $\abs{\partial(A,B)}=\sum_{e=(u,v)\in E}\indic_{\{u\in A, v\in B\}}$ is the number of edges between sets $A$ and $B$.
\begin{eqnarray}
  && \var{I_G} \nonumber\\
   & = & \var{\sum_{u\in S}\frac{d_u(G)}{s} \sum_{v\in S,\ (u,v)\in E } Y_u^v w(u,v)} \nonumber \\
             & = &  \sum_{u\in S}\frac{(d_u(G))^2}{s^2} \sum_{v\in S,\ (u,v)\in E} \var{Y_u^v}(w(u,v))^2 \nonumber \\
             &\leq & \sum_{u\in S}\frac{(d_u(G))^2}{s^2} \sum_{v\in S,\ (u,v)\in E} \frac{s }{d_u(G)}(2\gamma)^2 \quad \quad \quad \quad \quad \quad \quad \quad \quad \quad (\text{by (\ref{eq:cut-b-1}) and $w(u,v)\in [\gamma,2\gamma)$}) \nonumber \\
             & = & 4\eps\gamma^2\sum_{u\in S}d_u(G) d_u(G(S))\quad \quad \quad   (\text{$s=1/\eps$ and $G(S)$ is the subgragh of $G$ induced by vertices in $S$}) \nonumber \\
             &\leq  & 4\eps\gamma^2|S|\sum_{u\in S}d_u(G) \quad \quad \quad \quad \quad \quad \quad \quad \quad \quad (\text{$d_u(G(S))\leq |S|$})\nonumber \\
             &\leq  & 4\eps\gamma^2|S|\cdot\Big[\abs{\partial(S,\bar{S})}+2|S|^2 \Big]\quad \quad \quad (\text{$\abs{\partial(S,\bar{S})}=\sum_{u\in S}d_u(G)-2\abs{\partial(S,S)}$ and $\abs{\partial(S,S)}\leq |S|^2$})\nonumber\\
             &\leq  & 4\eps^2\gamma^2(\abs{\partial(S,\bar{S})})^2\Big[1+2\eps^2 \abs{\partial(S,\bar{S})} \Big]\quad \quad \quad \quad \   (\text{$|S|\leq \eps \abs{\partial(S,\bar{S})}$})\nonumber\\
             &\leq  & 4\eps^2(w(S,\bar{S}))^2\Big[1+2\gamma \abs{\partial(S,\bar{S})} \Big]\quad \quad \quad \quad \quad \quad (\text{$\gamma \abs{\partial(S,\bar{S})}\leq w(S,\bar{S})$ and $\gamma\geq \eps^2$})\nonumber\\
             &\leq  & 44\eps^2(w(S,\bar{S}))^2.\quad \quad \quad \quad \quad \quad \quad \quad \quad \quad \quad \quad \quad (\text{$\gamma \abs{\partial(S,\bar{S})} \leq w(S,\bar{S})\leq 5$}) 
\end{eqnarray}
Now by a Chebyshev's inequality, with probability at least $8/9$, we have
$$
  \abs{I_G- w(S,\bar S))}
  \leq 3  \sqrt{44\eps^2 ( w(S,\bar S))^2}
  \leq 21 \eps\,  w(S,\bar S).
$$
\end{proof}

We summarize our result for S1-graph as below.

\begin{theorem}
There is a sketching algorithm which given an S1-graph, outputs a $(1+\eps, 1/9)
$-cut-sketch of size $\tilde{O}(n/\eps)$ for any cut $(S,\bar{S})$ such that $S\subseteq V$ and $w(S,\bar{S})\leq 5$.
\end{theorem}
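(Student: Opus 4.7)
The plan is to combine Algorithm \textbf{Cut-S1} with Lemma~\ref{lem:cutS1-Estimator}, using a constant-factor rescaling of $\eps$ to convert the $(1+21\eps)$ guarantee of the lemma into the stated $(1+\eps)$ guarantee. Concretely, I would run \textbf{Cut-S1} on $G$ with accuracy parameter $\eps' := \eps/21$, producing a sketch consisting of (i) the weighted degrees $\{\delta_u(G) : u \in V\}$ and (ii) for each $u \in V$, a sample of $s = 1/\eps'$ neighboring edges drawn uniformly with replacement. The reconstruction function, given $\sk{G}$ and a query $S \subseteq V$, first replaces $S$ by $\bar S$ whenever $|S|>|V|/2$ (using $w(S,\bar S)=w(\bar S,S)$) so as to fall in the regime handled by the lemma, and then outputs the estimator $I_G$ defined in~\eqref{eq:cut-estimator}.

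The size analysis is a direct bit count. Since all edge weights (and hence vertex weighted degrees) are polynomially bounded, each $\delta_u(G)$ takes $O(\log n)$ bits, for a total of $O(n\log n)$ bits in part (i). Each of the $ns = O(n/\eps)$ sampled edges in part (ii) is encoded by an endpoint identifier and its weight, each requiring $O(\log n)$ bits, contributing $O((n/\eps)\log n)$ bits. Adding these gives the claimed $\tilde O(n/\eps)$ bound. For correctness, Lemma~\ref{lem:cutS1-Estimator} directly guarantees that with probability at least $8/9$, $I_G$ lies in $(1\pm 21\eps')\cdot w(S,\bar S)$ whenever $|S|\le |V|/2$ and $w(S,\bar S)\le 5$; substituting $\eps'=\eps/21$ yields the desired $(1+\eps, 1/9)$-approximation.

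The only point requiring care is that the defining S1-graph hypotheses --- $\Gamma_G\ge 1/\eps$ and $\gamma>\eps^2$ --- must be read with respect to the rescaled parameter $\eps'$ rather than $\eps$, since the variance bound in Lemma~\ref{lem:cutS1-Estimator} exploits both the expansion lower bound and the uniform-scale weight condition. This is either absorbed into the implicit constants in the S1 definition or handled by strengthening the hypothesis by a constant factor; I anticipate no substantive obstacle beyond this book-keeping, as the estimator and its analysis are already in place, and the sketch is essentially a direct degree-plus-sample summary whose cost is dominated by the $O(n/\eps)$ sampled edges.
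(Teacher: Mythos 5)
Your proof is correct and follows essentially the same route as the paper: invoke Lemma~\ref{lem:cutS1-Estimator} for the estimator's accuracy and count bits (degrees plus $s=1/\eps$ sampled edges per vertex) for the size. In fact the paper's own proof is even terser --- it simply says correctness follows immediately from the lemma and sketches the size count --- without performing the $\eps\mapsto\eps/21$ rescaling you do. You are right to flag the resulting subtlety: the S1-graph definition's conditions $\Gamma_G\ge 1/\eps$ and $\gamma>\eps^2$ are $\eps$-dependent, so after rescaling to $\eps'=\eps/21$ the hypothesis $\Gamma_G\ge 1/\eps$ no longer directly gives $\Gamma_G\ge 1/\eps'$. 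The paper sidesteps this by silently accepting the $21\eps$ slack (all downstream invocations, e.g.\ Theorem~\ref{thm:cut-upper}, only claim $1+O(\eps)$ accuracy), so there is no real obstruction; your proposal is correct modulo the same constant-factor imprecision present in the paper's own statement.
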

\begin{proof}
The correctness immediately follows from Lemma~\ref{lem:cutS1-Estimator}. We only need to show the size of the sketch. Note that in Algorithm~\ref{alg:cutsketch-basic-simple} we only store all weighted degrees of vertices, and sample $1/\eps$ edges for each vertex. Thus the size of sketch can be bounded by $\tilde{O}(n/\eps)$.
\end{proof}



\subsubsection{Graphs with Polynomial Weights}\label{sec:cut-polyweight}

We now show how to sketch general graphs with polynomial integer weights, say, in $[1, n^5]$, by extending Algorithm~\ref{alg:cutsketch-basic-simple}.  
In Section~\ref{sec:cut-largeWeights} we will extend the construction to graphs with general weights.

\bigskip

\noindent{\bf Sketching.}
Our sketching algorithm consists of two components. The first component is a standard $1.2$-cut sparsifier
(recall that a $(1+\eps)$-cut sparsifier is a sparse graph $H$ on the same vertex-set as $G$ that approximates every cut in $G$ within a factor of $1+\eps$).
We can use the construction of Bencz{\'u}r and Karger \cite{BK96},
or subsequent constructions \cite{SS11,BSS14,FHHP11,KP12}
(some of which produce a spectral sparsifier, which is only stronger);
any of these methods will produce a graph $H$ with $\tO(n)$ edges.

The second component is the main ingredient of the sketch, which preprocesses a given graph into a bunch of S1-graphs.  Let $\tilde C=\{1.4^i\mid 0\le i\le \log_{1.4}n^5\}$ be the set of size
$O(\log n)$ such that each cut value in $G$ is $1.4$-approximated by
some value $c\in \tilde C$.  For each value $c\in \tilde C$ we
construct a data structure $D_{c}$ as follows.
\begin{enumerate}
\item By scaling all the edge weights, let us assume $c=1$.
We discard all edges $e$ whose (scaled) weight $w(e)>5$. Note that those edges are too heavy to be included into any cut whose weight is at most $5$.

\item Apply the {\em importance sampling}: We sample each
(remaining) edge $e\in E$ independently with probability $p_e=
\minn{w(e)/\eps^2,1}$, and assign each sampled edge $e$ with a new weight
$\tilde w(e)= w(e)/p_e$.  Notice that $\tilde w(e)\in [\eps^2,5]$.
(It may be convenient to consider the non-sampled edges as having
weight $\tilde w(e)= 0$.)  Let $\tilde E$ be the set of
sampled edges. We further partition $\tilde E$ into $l=O(\log \frac{1}{\eps})$
classes according to the (new) edge weights, namely, $\tilde
E=L_1\cup\cdots\cup L_l$ where $L_i = \aset{e\in \tilde{E}:\ \tilde w(e)\in
  (5\cdot 2^{-i}, 5\cdot 2^{-i+1}] }$.

\item For each class $L_i$, we recursively partition the graph $(V, L_i)$ as
follows: For each connected component $P=(V_P,E_P)$ in $(V, L_i)$ which contains
a subset $S'\subset V_P$ of size $\card{S'}\leq \card{V_P}/2$ such that
$\abs{\partial(S',\bar{S'})}/\card{S'}< 1/\eps$ (where $\bar{S'}=V_P\setminus S'$), we replace $P$ with its two vertex-induced subgraphs $P(S')$ and $P(\bar{S'})$ in $(V, L_i)$, and store all edges in the cut $(S',\bar{S'})$ to $Q_i$. Once the recursive partitioning process is finished, denoting the
resulting components by $\calP_i$,  we store in the sketch all
the edges in $Q_{i}$ connecting different connected components of $\calP_i$.

\end{enumerate}
This preprocessing step is described in Algorithm~\ref{alg:cut-preprocessing}. Note that each component $P \in \calP_i$ is an S1-graph.

\begin{algorithm}[t]
\KwIn{A graph $G = (V, E, w)$ such that for any $e \in E$, $w(e) \in [1, n^5]$; a parameter $c > 0$; a quality control parameter $\eps \in [1/n,1/30]$}
\KwOut{
A set $\mathcal{P}$ of edge disjoint components of $G$ such that for each $P \in \mathcal{P}$, $\Gamma_P > 1/\eps$; and a graph $Q$ induced by the rest of the edges in $G$.
}
$\mathcal{P} \gets \emptyset$, $Q \gets \emptyset$\;
$\forall e \in E, w(e) \leftarrow w(e)/c$\;
Discard any edge $e \in E$ with $w(e)>5$\;
Sample each edge $e\in E$ independently with probability $p_e \gets \min\{w(e)/\eps^2,1\}$, and rescale its weight to $\tilde{w}(e) \gets w(e)/p_e$ if the edge is sampled. Let $\tilde{E}$ denote the set of sampled edges\;
Partition $\tilde{E}$ to $L_1\cup\cdots\cup L_l$ with $l=O(\log \frac{1}{\eps})$, where $L_i = \aset{e\in \tilde{E}:\ \tilde w(e)\in
  (5\cdot 2^{-i}, 5\cdot 2^{-i+1}] }$\;
\ForEach{$L_i$}{
$\mathcal{P}_i \gets \{ (V, L_i) \}$, $Q_i \gets \emptyset$\;
\While {$\exists$ a connected component $P=(V_P,E_P) \in \mathcal{P}_i$ such that the expansion constant  $\Gamma_P < 1/\eps$} {
  Find an arbitrary cut $(S', \bar{S'})$ (where $|S'|\leq |V_P|/2$) in $P$ such that $\abs{\partial(S',\bar{S'})}/|S'| < 1/\eps$\;
  Replace $P$ with its two subgraphs $P(S')$ and $P(\bar{S'})$ in $\mathcal{P}_i$\;
  Add all edges in the cut $(S', \bar{S'})$ into $Q_i$\;
}
$\mathcal{P}=\mathcal{P}\cup \mathcal{P}_i$, $Q=Q\cup Q_i$\;
}
\Return{$(\mathcal{P}$, $Q$)}\;
\caption{{\bf Cut-Preprocessing}($G, c,\eps$)}
\label{alg:cut-preprocessing}
\end{algorithm}

\bigskip

\noindent{\bf Estimation.}
Given a query subset $S\subset V$,
we first use the graph sparsifier $H$ to compute $\tilde c$,
a $1.2$-approximation to the desired cut value $w(S,\bar S)$,
and then use the data structure $D_{c}$ to answer the query,
where $c\in \tilde C$ is an approximation to $\tilde c/(1.4)^2$,
such that $c \in [\tilde c/(1.4)^3,\tilde c/1.4]$. This implies that $w(S,\bar S)\in [c,4c] $. Thus, by rescaling $c$ to $1$, we only need to estimate the (resclaled) cut weight $w(S,\bar{S})$, which is between $1$ and $4$.

The contribution to the cut $(S,\bar{S})$ from edges in each class $L_i$ consists of two parts: (i) the total weight of cut edges
between $S$ and $\bar S$, i.e., $\sum_{e \in Q_i \cap \partial(S, \bar{S})} \tilde{w}(e)$; and (ii) the weight of cut edges inside each
component $P\in \calP_i$, which can be estimated using the sketches $\sk{P}$ for each $P\in \calP_i$. We construct the estimator as follows.
\begin{equation}  \label{eq:cut-I_p}
  \hat{w}(S,\bar{S}) = \sum_{i\in [l]}\Big(\sum_{P\in \calP_i} I_P
  +\sum_{e\in Q_i\cap\partial(S,\bar{S})}\tilde{w}(e)\Big),
\end{equation}
where $I_P$, defined in Equation~(\ref{eq:cut-estimator}), is the estimator of the cut weight in the component $P\in \calP_i$.

In the rest of this section we show $\hat{w}(S,\bar{S})$ is a $1+O(\eps)$-approximation of the (scaled) cut weight $w(S,\bar{S})$ with constant probability. The following lemma guarantees that the importance sampling step preserves the cut weight.

\begin{lemma}
  \label{lem:cut-preprocess1}
 Let $\tilde{G}=(V,\tilde{E},\tilde{w})$ be the graph after the important sampling step (Line~4 of Algorithm~\ref{alg:cut-preprocessing}). Then for any $S \subset V$ such that $w(S,\bar{S})\in [c,4c]$, with probability at least $8/9$, $|\tilde{w}(S,\bar{S})-w(S,\bar{S})|\leq 3\eps w(S,\bar{S})$, where $w(S,\bar{S})$ and $\tilde{w}(S,\bar{S})$ are the cut weight of $(S,\bar{S})$ in $G$ and $\tilde{G}$, respectively.
\end{lemma}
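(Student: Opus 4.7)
\medskip

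\noindent\textbf{Proof proposal.} The plan is a straightforward application of Chebyshev's inequality to the random variable $\tilde{w}(S,\bar{S}) = \sum_{e \in \partial(S,\bar{S})} \tilde{w}(e)$, viewed as a sum of independent random variables (one per edge crossing the cut). After the rescaling step, we have $c=1$ and $w(S,\bar{S}) \in [1,4]$, and each edge $e \in E$ (with $w(e) \le 5$; heavier ones are discarded and cannot lie in any valid cut here) is sampled independently with $p_e = \min\{w(e)/\eps^2, 1\}$ and reweighted to $\tilde{w}(e) = w(e)/p_e$ when sampled.

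First I would check unbiasedness: $\E[\tilde{w}(e)] = p_e \cdot (w(e)/p_e) = w(e)$ (trivially if $p_e = 1$), so by linearity $\E[\tilde{w}(S,\bar{S})] = w(S,\bar{S})$. Next I would bound the per-edge variance. Edges with $w(e) \ge \eps^2$ have $p_e = 1$ and contribute variance $0$. For the remaining ``light'' edges, $p_e = w(e)/\eps^2 < 1$ and the reweighted value is $\tilde{w}(e) = \eps^2$ when sampled and $0$ otherwise, so
\[
\var{\tilde{w}(e)} \;=\; p_e(1-p_e)\,\eps^4 \;\le\; p_e\,\eps^4 \;=\; \eps^2 w(e).
\]
Summing over the (independently sampled) edges of $\partial(S,\bar{S})$ yields
\[
\var{\tilde{w}(S,\bar{S})} \;\le\; \eps^2 \sum_{e \in \partial(S,\bar{S})} w(e) \;=\; \eps^2\, w(S,\bar{S}).
\]

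Finally, Chebyshev's inequality gives
\[
\Pr\bigl[\,|\tilde{w}(S,\bar{S}) - w(S,\bar{S})| \ge 3\eps\, w(S,\bar{S})\,\bigr]
\;\le\; \frac{\eps^2\, w(S,\bar{S})}{9\eps^2\, w(S,\bar{S})^2}
\;=\; \frac{1}{9\, w(S,\bar{S})}
\;\le\; \frac{1}{9},
\]
since $w(S,\bar{S}) \ge c = 1$ (after rescaling). This is the desired bound.

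I expect this proof to be essentially routine; there is no serious obstacle. The only point requiring a modicum of care is to observe that the ``heavy'' edges with $p_e = 1$ are deterministic and thus contribute no variance, so that the variance is controlled entirely by the light edges, whose reweighted mass is $\eps^2$ each; this is precisely what makes the per-edge variance bound $\eps^2 w(e)$ and yields a $w(S,\bar{S})\cdot \eps^2$ total variance, enough for a Chebyshev-based $(1\pm 3\eps)$ concentration whenever $w(S,\bar{S}) \ge 1$.
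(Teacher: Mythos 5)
Your proof is correct and follows essentially the same route as the paper: unbiasedness of the Horvitz--Thompson reweighting, a per-edge variance bound of $\eps^2 w(e)$ (split into the $p_e=1$ and $p_e<1$ cases), summation to $\var{\tilde{w}(S,\bar S)}\le \eps^2 w(S,\bar S)$, and Chebyshev combined with $w(S,\bar S)\ge 1$ after rescaling. Your explicit observation that the light-edge contribution is a scaled Bernoulli with value $\eps^2$ is just a slightly more hands-on phrasing of the paper's $w(e)^2/p_e - w(e)^2 \le \eps^2 w(e)$ bound.
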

\begin{proof}
Since $w(S,\bar{S})\in [c,4c]$, after
rescaling (Line~2) $c$ to $1$, the cut weight $w(S,\bar S)$ becomes a value in $[1, 4]$. Also note that the discarded edges (Line~3) will not affect the final cut estimation since they can not be part of the cut $(S,\bar S)$, given that their weights are more than $5$.

We now analyze the effect of the importance sampling step (Line~4) on the weight
of the cut. It is easy to see that $\EX[\tilde w(S,\bar S)] = w(S,\bar S)$, since every edge $e$ that was not discarded satisfies $\EX[\tilde w(e)] = w(e)$ (and more generally, it is a Horvitz-Thompson estimator),
and its variance is
\[
  \var {\tilde{w}(S,\bar S)}
   = \sum_{e\in \partial(S,\bar S)} \var{\tilde {w}_e}
  = \sum_{e\in \partial(S,\bar S)} w(e)^2/p_e-w(e)^2
  \leq \sum_{e\in \partial(S,\bar S)} \eps^2 w(e)
  = \eps^2 w(S, \bar S),
\]
where the inequality is verified separately for $p_e=1$ and for $p_e=w(e)/\eps^2$.
Thus, by a Chebyshev's inequality, with probability at least $8/9$, we have
$\abs{\tilde w(S,\bar S) - w(S,\bar S)}
  \leq 3 \eps \sqrt{w(S, \bar S)}
  \leq 3 \eps \cdot w(S, \bar S)$.
\end{proof}

Then, it suffices to show that the constructed estimator $\hat{w}(S,\bar{S})$ can approximate $\tilde{w}(S,\bar{S})$ well.
\begin{lemma}
  \label{lem:cut-preprocess2}
For any $S \subset V$ such that $w(S,\bar{S})\in [1,4]$, with probability at least $8/9$, $|\hat{w}(S,\bar{S})-\tilde{w}(S,\bar{S})|\leq 21\eps \tilde{w}(S,\bar{S})$.
\end{lemma}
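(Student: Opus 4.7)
The plan is to combine the per-component guarantees of Lemma~\ref{lem:cutS1-Estimator} with independence across components, and finish with a single Chebyshev bound.

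First I would decompose the estimator. The cross-component edges $Q_i \cap \partial(S,\bar S)$ contribute exactly to $\hat w$ (they appear in the sketch with their true $\tilde w$-weights), so all randomness lives in the $I_P$ terms. By Lemma~\ref{lem:cutS1-Estimator}, each $I_P$ is an unbiased estimator of the internal cut weight $w_P(S\cap V_P, \bar S\cap V_P)$ of the S1-graph $P$, and these internal cuts together with the $Q_i$-edges partition $\partial(S,\bar S)$ in $\tilde G$. Hence $\EX[\hat w(S,\bar S)] = \tilde w(S,\bar S)$.

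Next I would compute $\var{\hat w(S,\bar S)}$. The variables $\{I_P\}$ are mutually independent: within a single class $L_i$ the components $\calP_i$ are vertex-disjoint and Cut-S1 draws samples per vertex, and across distinct classes the samples are drawn with fresh randomness. Hence $\var{\hat w} = \sum_i\sum_{P\in\calP_i} \var{I_P}$. For each $P$ the preprocessing guarantees the S1-graph hypotheses ($\gamma\ge \eps^2$ and $\Gamma_P\ge 1/\eps$), and WLOG $|S\cap V_P|\le |V_P|/2$ by applying the estimator on the smaller side. Applying the same chain of inequalities used in the proof of Lemma~\ref{lem:cutS1-Estimator}, together with $w_P \le \tilde w(S,\bar S) \le (1+3\eps)\cdot 4 \le 5$ (from Lemma~\ref{lem:cut-preprocess1} and the assumption $w(S,\bar S)\in[1,4]$), yields
\[
  \var{I_P} \;\le\; 4\eps^2 w_P^2\bigl[1 + 2\gamma|\partial_P(S\cap V_P,\bar S\cap V_P)|\bigr] \;\le\; 44\eps^2 w_P^2.
\]

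Summing over components, since the $w_P$ are non-negative and $\sum_{i,P} w_P \le \tilde w(S,\bar S)$, the elementary inequality $\sum x_k^2 \le (\sum x_k)^2$ gives $\sum_{i,P} w_P^2 \le \tilde w(S,\bar S)^2$, whence $\var{\hat w} \le 44\eps^2 \tilde w(S,\bar S)^2$. Chebyshev's inequality with threshold $3\sigma$ then yields $|\hat w - \tilde w| \le 3\sqrt{44}\,\eps\,\tilde w \le 21\eps\,\tilde w$ with probability at least $8/9$, as claimed.

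The main obstacle I expect is verifying the hypotheses of Lemma~\ref{lem:cutS1-Estimator} uniformly across all components: I must confirm that each $P\in\calP_i$ genuinely satisfies the S1-graph conditions after importance sampling and partitioning, that $w_P\le 5$ so the final variance bound of Lemma~\ref{lem:cutS1-Estimator} applies, and that independence of the $I_P$'s really holds (which requires independent sampling invocations of Cut-S1 per component and per class). Everything else is a routine Chebyshev calculation with the loose but sufficient bound $\sum w_P^2 \le (\sum w_P)^2$.
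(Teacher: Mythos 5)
Your proof is correct and follows essentially the same approach as the paper's: observe that the $Q_i$-edges are stored exactly, invoke Lemma~\ref{lem:cutS1-Estimator} per component for unbiasedness and the variance bound $\var{I_P}\le 44\eps^2 w_P^2$, sum variances via independence, bound by $(\sum w_P)^2 \le \tilde w(S,\bar S)^2$, and finish with Chebyshev. Your handling is if anything slightly more careful than the paper's (you correctly write the final step as an inequality $\sum_{i,P} w_P \le \tilde w$ where the paper informally writes equality, and you explicitly flag the WLOG $|S\cap V_P|\le |V_P|/2$ needed for Lemma~\ref{lem:cutS1-Estimator}).
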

\begin{proof}
First, it is clear that our estimator $\hat{w}(S,\bar{S})$ captures exactly the contribution of cut edges in all $Q_i$'s $(i\in[l])$, since the sketch stores all edges in each $Q_i$. We thus only need to show the contribution of cut edges in each $P\in \calP_i (i\in[l])$, denoted as $\tilde{w}_i(S\cap V_P,\bar{S}\cap V_P)$, can be approximated well by the estimator $I_P$ (Equation~(\ref{eq:cut-estimator})).

Let $\gamma_i= 5\cdot 2^{-i}\geq \eps^2$ (since $\tilde w(e)\in [\eps^2,5]$), and then all edges $e\in L_i$ have weight $\tilde w(e)\in [\gamma_i, 2\gamma_i]$. Also note that by the stopping condition of the recursive partition (Line~8 of Algorithm~\ref{alg:cut-preprocessing}), each returned $P\in \calP_i$ satisfies $\Gamma_P\geq 1/\eps$. Therefore each $P\in \calP_i$ is an S1-graph.

Now observe that
$$\tilde{w}_i(S\cap V_P,\bar{S}\cap V_P)\leq \tilde w(S,\bar S)\leq (1+3\eps)\, w(S,\bar S)\le 5,$$ together with the fact that $P$ is an S1-graph, we know, according to Lemma~\ref{lem:cutS1-Estimator}, that $I_P$ defined by Equation~(\ref{eq:cut-estimator}) is an unbiased estimator of $\tilde{w}_i(S\cap V_P,\bar{S}\cap V_P)$ with the variance $\var{I_P}\leq 44\eps^2(\tilde{w}_i(S\cap V_P,\bar{S}\cap V_P))^2$.
It follows that $\hat{w}(S,\bar{S})$ is an unbiased estimator of $\tilde{w}(S,\bar{S})$, since
\begin{eqnarray}
\EX[\hat{w}(S,\bar{S})] &=& \sum_{i\in [l]}\Big(\sum_{P\in \calP_i} \EX[I_P]
  +\sum_{e\in Q_i\cap\partial(S,\bar{S})}\tilde{w}(e)\Big)=\sum_{i\in [l]}\Big(\sum_{P\in \calP_i} \tilde{w}_i(S\cap V_P,\bar{S}\cap V_P)
  +\sum_{e\in Q_i\cap\partial(S,\bar{S})}\tilde{w}(e)\Big)\nonumber\\
&=&\tilde{w}(S,\bar{S}).
\end{eqnarray}
And the variance
\begin{eqnarray}
  \var{\hat{w}(S,\bar{S})}  & = & \sum_{i\in [l]} \sum_{P\in \calP_i} \var{I_P}\leq 44\eps^2\sum_{i\in [l]}\sum_{P\in \calP_i} (\tilde{w}_i(S\cap V_P,\bar{S}\cap V_P))^2\nonumber\\
  &\leq& 44\eps^2\Big(\sum_{i\in [l]}\sum_{P\in \calP_i} \tilde{w}_i(S\cap V_P,\bar{S}\cap V_P)\Big)^2=44\eps^2 (\tilde{w}(S,\bar{S}))^2.
\end{eqnarray}
Thus by a Chebyshev's inequality, with probability at least $8/9$, we have that
$
  \abs{\hat{w}(S,\bar{S})- \tilde w(S,\bar S))}
  \leq 3  \sqrt{44\eps^2 (\tilde w(S,\bar S))^2}
  \leq 21 \eps\, \tilde w(S,\bar S)
$.
\end{proof}

After the preprocessing (Algorithm~\ref{alg:cut-preprocessing}), we only need to run Algorithm~\ref{alg:cutsketch-basic-simple} on each resulting S1-graph and on special queries (i.e., $w(S, \bar{S}) \le 5$). The overall algorithm for general graph with polynomial weights is described in Algorithm~\ref{alg:cut-sketch-basic-general}.

\begin{algorithm}[t]
\KwIn{$G = (V, E, w)$ with all weights in $[1, n^5]$; a quality control parameter $\eps \in [1/n,1/30]$}
\KwOut{A $(1+27\eps, 2/9)$-cut-sketch $\sk{G}$ of $G$}
$\sk{G} \gets \emptyset$\;
Build a $1.2$-cut sparsifier $H$ of $G$, and add $H$ into $\sk{G}$\;
Let $\tilde C=\{1.4^i\mid 0\le i\le \log_{1.4}n^5\}$\;
\ForEach{$c\in \tilde C$}{
$D_c\gets \emptyset$\;
$\{\mathcal{P},Q\}\gets${\bf Cut-Preprocessing}($G, c$), and add $Q$ into $D_c$\;
\ForEach{$P\in \mathcal{P}$}{
$\sk{P}\gets${\bf Cut-S1}($P, \eps$), and add $\sk{P}$ in $D_c$\;
}
Add $D_c$ into $\sk{G}$\;
}
\Return{$\sk{G}$}\;
\caption{{\bf Cut-Basic}($G, \eps$)}
\label{alg:cut-sketch-basic-general}
\end{algorithm}

The following theorem summarizes the results of this section.

\begin{theorem}
\label{thm:cut-upper}
Given a weighted graph $G=(V,E,w)$ on $n$ vertices, where the non-zero
weights are in the range $[1,W]$ with $W=n^5$, and $1/n\le \eps\le 1/30$, there
exists a cut sketch of size $\tilde O(n/\eps)$
bits. Specifically, for every query $S\subset V$, the sketch produces
a $1+O(\eps)$ approximation to $w(S,\bar S)$, with probability at
least $7/9$.
\end{theorem}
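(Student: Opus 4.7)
The sketch is produced by Algorithm~\ref{alg:cut-sketch-basic-general}, and the plan is to verify separately its correctness and its size. For correctness, I will argue that on a query $S$, the $1.2$-cut sparsifier $H$ allows us to locate the correct scale $c\in\tilde C$ so that $w(S,\bar S)\in[c,4c]$. After rescaling by $c$, the true (scaled) cut value lies in $[1,4]$, so all edges that we discarded at Line~3 of Algorithm~\ref{alg:cut-preprocessing} are too heavy to cross $(S,\bar S)$ and can safely be ignored. Lemma~\ref{lem:cut-preprocess1} then says that the post-importance-sampling cut value $\tilde w(S,\bar S)$ lies in $(1\pm 3\eps)\, w(S,\bar S)$ with probability at least $8/9$, and Lemma~\ref{lem:cut-preprocess2} says that $\hat w(S,\bar S)$ lies in $(1\pm 21\eps)\tilde w(S,\bar S)$ with probability at least $8/9$. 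Multiplying these bounds gives a $(1\pm 27\eps)$-approximation with probability at least $1-2/9=7/9$ via a union bound; since $\eps$ is a small constant factor inside $O(\eps)$, this is the desired $1+O(\eps)$ guarantee.

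For the size bound, I will separately account for the three pieces stored in $\sk{G}$: the cut sparsifier $H$, the cross-edge sets $Q_i$, and the S1-sketches of the components in $\calP_i$. The sparsifier $H$ has $\tilde O(n)$ edges, each with weight polynomial in $n$, so $H$ fits in $\tilde O(n)$ bits. The per-component S1-sketch, by the bound from Section~\ref{sec:cut-special}, uses $\tilde O(|V_P|/\eps)$ bits, and because the components in any single $\calP_i$ are vertex-disjoint (the recursive partition only cuts and never duplicates vertices), summing over $P\in\calP_i$ gives $\tilde O(n/\eps)$ bits per class $L_i$, and $\tilde O(n/\eps)$ bits total when summed over the $O(\log(1/\eps))$ classes and the $O(\log n)$ scales in $\tilde C$.

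The main step, and what I expect to be the actual obstacle, is bounding $|Q_i|$. I will use a standard recursion-tree charging argument: each time Line~8 of Algorithm~\ref{alg:cut-preprocessing} finds a sparse cut $(S',\bar{S'})$ with $|S'|\le|V_P|/2$ and $|\partial(S',\bar{S'})|<|S'|/\eps$, we charge the $|\partial(S',\bar{S'})|$ cut edges to the vertices of the smaller side $S'$, at a cost of at most $1/\eps$ edges per vertex. A vertex can be on the smaller side of a recursive split at most $O(\log n)$ times (since the smaller side halves the component size), so each vertex accumulates charge at most $O(\log n)/\eps$ across all levels of the recursion within a single $L_i$. Summing over the $O(\log(1/\eps))$ classes $L_i$ and the $O(\log n)$ choices of $c$ in $\tilde C$ yields $|Q|\le\tilde O(n/\eps)$ across the whole sketch, which with polynomial-bounded weights translates to $\tilde O(n/\eps)$ bits.

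Combining these contributions shows the total sketch size is $\tilde O(n/\eps)$ bits, completing the proof of Theorem~\ref{thm:cut-upper}.
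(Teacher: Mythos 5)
Your proposal follows essentially the same path as the paper: you invoke Lemmas~\ref{lem:cut-preprocess1} and~\ref{lem:cut-preprocess2} with a union bound to get the $(1\pm 27\eps)$ accuracy with probability $\geq 7/9$, and you bound the sketch size by the same three-part accounting (sparsifier, $Q_i$, S1-sketches), using the same charging argument for $|Q_i|$ in which each vertex can land on the smaller side of a sparse-cut split at most $O(\log n)$ times, contributing at most $1/\eps$ cut edges each time. The proof is correct and matches the paper's argument.
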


\begin{proof}
Given a query subset $S\subset V$,
we use the data structure $D_{c}$ with $c\in[\tilde c/(1.4)^3,\tilde c/1.4]$, where
$\tilde c$ is a $1.2$-approximation to the cut weight $w(S,\bar S)$. Thus we have $w(S,\bar S)\in [c,4c]$. Since we rescale $c$ to $1$, we here show the constructed estimator is $1+O(\eps)$ approximation of the (sclaled) cut weight $w(S,\bar{S})$.

Then, by Lemma~\ref{lem:cut-preprocess1} and~\ref{lem:cut-preprocess2},
we have that with probability at least $7/9$,
the estimator $\hat{w}(S,\bar{S})$ (using the data structure $D_c$)
is a $1+O(\eps)$ approximation of $\tilde w(S,\bar S)$,
which in turn is a $1+O(\eps)$ approximation of $w(S,\bar S)$. More precisely,
\[
  \abs{\hat{w}(S,\bar{S})-w(S,\bar S)}
  \leq 3\eps\, w(S,\bar S) + 21 \eps (1+3\eps)\, w(S,\bar S)
  \leq 27 \eps\, w(S,\bar S).
\]

We next bound the sketch size.  The sparsifier $H$ has
$\tO(n)$ edges.  By construction, we have $O(\log n)$ possible cut
values $\tilde C$ and for each one, we have $l=O(\log
\tfrac{1}{\eps})\leq O(\log n)$ edge weight classes.  For each weight
class $L_i$, the sketch stores:
\begin{itemize}
\item $O(\tfrac{1}{\eps}n\log n)$
edges in $Q_i$: each step in the recursive
partition contributes $\abs{\partial(S',\bar{S'})}/\card{S'}<
1/\eps$ edges per vertex in $S'$, and each vertex appears in the
smaller subset $S'$ at most $\log n$ times.
\item At most $n/\eps$
sampled edges: for each non-isolated vertex we sample
$s=1/\eps$ incident edges.
\end{itemize}
Summing up, we have $\tO(n/\eps)$ edges,
each requiring $O(\log n)$ bits. Therefore the size of our cut sketch is $\tilde O(n/\eps)$ bits.

\end{proof}

\subsubsection{Graphs with General Edge Weights}
\label{sec:cut-largeWeights}

We now build on the results of Section~\ref{sec:cut-polyweight} for polynomial weights to show the upper bound for general edge weights.
That is, assume that there is a sketching algorithm,
which we shall call the ``basic sketch'',
for the case where all edge weights are in a polynomial range,
say for concreteness $[1,n^5]$
(which by scaling is equivalent to the range $[b,n^5b]$ for any $b>0$),
which uses space $\tO(n/\eps)$.
We may assume the success probability of this sketch is at least $1-1/n^8$,
e.g., by using the standard ``$O(\log n)$ repetitions and then taking the median'',
thereby increasing the sketch size by at most $O(\log n)$ factor.
As before, we may assume $\eps>1/n$, as
otherwise the theorem is trivial.

\bigskip

\noindent{\bf Sketching.}
The sketch has two components: (i)
the first component is essentially a maximum-weight spanning tree $T$
computed using Kruskal's algorithm; see Algorithm~\ref{alg:cut-MST}; and (ii) the second component is a set of cut sketches of the graphs reduced according to those tree edges in $T$, which are constructed via the sketching algorithm introduced in Section~\ref{sec:cut-polyweight}; see Algorithm~\ref{alg:cut-sketch-largeweight}, where $G_j, G'_j$ are defined.

\begin{algorithm}[t]
\KwIn{A graph $G = (V, E, w)$}
\KwOut{A maximum-weight spanning tree $T$ }
$E_T \gets \emptyset$, $T \gets (V,E_T)$\;
$\pi \gets$ an order of edges $e\in E$ in decreasing weight (break ties arbitrarily if any)\; 
\ForEach{$e \in E$ in the ordering $\pi$}
{Add $e$ into $E_T$ if this will not introduce a cycle in $T$, where edges in $E_T$ are sorted in the order of insertion (which is also their ordering according to $\pi$)\;}
\Return{$T$}\;
\caption{{\bf MST}($G$)}
\label{alg:cut-MST}
\end{algorithm}

\begin{algorithm}[t]
\KwIn{A graph $G = (V, E, w)$ with general weights; a quality control parameter $\eps\in [1/n,1/30]$}
\KwOut{A $(1+\eps, \delta)$-cut-sketch $\sk{G}$ of $G$}
$\sk{G} \gets \emptyset$\;
$T\gets${\bf MST}($G$), denoted as $T=\{e_1,...,e_{n-1}\}$, and add $T$ into $\sk{G}$\;
\For{$j=1$ to $n-1$}
{\If {there is no earlier iteration $k<j$
with $w(e_k)/w(e_j)<2$ that is sketched and stored}
{Remove all edges $e\in E$ of weight $w(e) < w(e_j)/n^3$\;
Change all edges $e\in E$ of weight $w(e)\geq n^2\cdot w(e_j)$ to have
infinite weight, and denote the resulting graph as $G_j$\;
Contract all edges of infinite weight in $G_j$ (keeping parallel edges and removing
self-loops), and denote the resulting graph as $G_j'$\;
}
\ForEach{connected component $P$ of $G'_j$ of size at least $2$}{
$\sk{P}\gets${\bf Cut-Basic}($P, \eps$), and add $\sk{P}$ into $\sk{G}$\;
\tcc{(Note that $\forall e$ in $P$, $w(e) \in [n^{-3}\cdot w(e_j), n^2\cdot w(e_j)]$.)}
}

}
\Return{$\sk{G}$}\;
\caption{{\bf Cut-Sketch}($G,\eps$)}
\label{alg:cut-sketch-largeweight}
\end{algorithm}

\bigskip

\noindent{\bf Estimation.}
Given a query subset $S\subset V$,
find the smallest $j\in[n-1]$ such that $e_j$ is in the cut $\partial(S,\bar S)$;
such $j$ exists because $\aset{e_1,\ldots,e_{n-1}}$ forms a spanning tree (we assume the graph is connected, since otherwise we can sketch each connected component separately).
We further show in Lemma \ref{lem:heaviest} below that $e_j$
is the heaviest edge in this cut, hence $w_{G}(S,\bar S) / w(e_j) \in [1,n^2]$.
Now find the largest $k\leq j$ for which we sketched and stored in $\sk{G}$;
by construction $w(e_k)/w(e_j)\in[1,2)$. Lemma \ref{lem:GkApprox} below proves that the cut values in $G$ and in $G_k$
are almost the same.

Next, compute the connected components of the graph $(V,\aset{e_1,\ldots,e_k})$,
and observe they must be exactly the same as the connected components of $G_k$.
Obviously, the value of the cut $(S,\bar S)$ in $G_k$
is just the sum, over all connected components $P$ in $G_k$,
of the contribution to the cut from edges inside that component,
namely $w_{G_k}(S\cap V_P, \bar S\cap V_P)$ with $V_P$ as the vertex set of $P$.
Recall that $G'_k$ has essentially the same cuts as $G_k$
and we can thus estimate each such term $w_{G_k}(S\cap V_P, \bar S\cap V_P)$
using the sketch we prepared for $G'_k$
(more precisely, using the sketch of the respective component $V'$ of $G'_k$,
unless $\card{V'}=1$ in which case that term is trivially $0$).
To this end, we need to find out which vertices of $G_k$
were merged together to form $G'_k$,
which can be done using $e_1,\ldots,e_{n-1}$ as follows.
Find the largest $k^*$ such that $w(e_{k^*}) \geq n^2\cdot w(e_k)$,
and compute the connected components of the graph $(V,\aset{e_1,\ldots,e_{k^*}})$.
Lemma \ref{lem:ConnComp} below proves that these connected components
(or more precisely the partition of $V$ they induce)
are exactly the subsets of vertices that are merged in $G_k$ to create $G'_k$.
Now that knowing the vertex correspondence between $G_k$ and $G'_k$,
we estimate the cut value $w_{G_k}(S\cap V', \bar S\cap V')$
by simply using the estimate for the corresponding cut value in $G'_k$,
where the latter is obtained using the basic sketch prepared for $G'_k$. Thus, the estimator is
\begin{equation}\label{eq:cut-estimator-largeweight}
\hat{w}(S,\bar{S})=\sum_{P\in G'_k} \hat{w}(S\cap V_P,\bar{S}\cap V_P),
\end{equation}
where $\hat{w}(S\cap V_P,\bar{S}\cap V_P)$ is defined in Equation~(\ref{eq:cut-I_p}) and can be computed via using the sketch $\sk{P}$.

To show the performance of the estimator $\hat{w}(S,\bar{S})$, we first present three lemmas.

\begin{lemma} \label{lem:heaviest}
Fix $S\subset V$ and let $e'\in E$ be the first edge,
according to the ordering $\pi$, that in the cut $\partial(S,\bar S)$.
Then this $e'$ is the first edge in the sequence $e_1,\ldots,e_{n-1}$
that in the cut $\partial(S,\bar S)$.
\end{lemma}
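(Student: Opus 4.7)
The plan is to derive Lemma~\ref{lem:heaviest} directly from the cut property of maximum-weight spanning trees, applied to Kruskal's construction in Algorithm~\ref{alg:cut-MST}. Since $\pi$ orders edges in decreasing weight (breaking ties consistently with the ordering used by Kruskal), the edge $e'$ is by definition the heaviest edge of $E$ that lies in $\partial(S,\bar S)$, with ties broken by $\pi$.

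First I would show that $e'\in T$. Consider the state of the partial forest $(V, E_T)$ just before Algorithm~\ref{alg:cut-MST} processes $e'$. Every edge already examined precedes $e'$ in the ordering $\pi$, and by the choice of $e'$ none of those earlier edges lies in $\partial(S,\bar S)$. Hence every edge currently in $E_T$ has both endpoints on the same side of the cut, which implies that the two endpoints of $e'$ belong to different connected components of the current forest. In particular, adding $e'$ creates no cycle, so Kruskal's rule inserts $e'$ into $E_T$. Therefore $e'=e_j$ for some $j\in[n-1]$.

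Next I would argue that $e'$ is the \emph{first} edge in the sequence $e_1,\ldots,e_{n-1}$ crossing the cut. By construction, the edges of $T$ appear in $E_T$ in the same order as they do in $\pi$, so the sequence $e_1,\ldots,e_{n-1}$ is just the restriction of $\pi$ to $T$. Any $e_i$ with $i<j$ precedes $e'$ in $\pi$; since $e'$ is the first edge of $E$ that belongs to $\partial(S,\bar S)$, no such $e_i$ can lie in the cut. This yields the claim.

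I do not expect any real obstacle here: the statement is essentially the well-known cut property of MSTs, and the only thing to be careful about is the tie-breaking convention, which is already handled consistently by $\pi$ and Kruskal's algorithm so that ``first in $\pi$'' and ``heaviest in the cut'' mean the same thing.
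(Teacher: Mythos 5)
Your proof is correct and follows essentially the same approach as the paper: observe that when $e'$ is reached in Kruskal's pass, no earlier edge crosses the cut, so the two endpoints of $e'$ lie in different components and $e'$ is added to $T$; and since the $e_i$'s appear in $\pi$-order, no earlier $e_i$ can cross the cut. You are slightly more explicit than the paper about the second step (that $e'$ is the first tree edge crossing the cut), but the underlying argument is the same.
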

\begin{proof}
Let $e'\in E$ be the first edge, according to the ordering $\pi$,
that in the cut $\partial(S,\bar S)$.
Clearly, $e'$ is the heaviest edge in this cut.
Now observe that in the construction of $T$ (i.e., $e_1,\ldots,e_{n-1}$),
when $e'$ is considered, $T$ has no edge between $S$ and $\bar S$,
hence the endpoints of $e'$ lie in different connected components,
and $e'$ must be added to $T$.
\end{proof}
\begin{lemma} \label{lem:GkApprox}
Consider a query $S\subset V$ and let $k\in[n-1]$ be the value computed
in the estimation algorithm.
Then the ratio between the value of $w(S,\bar S)$ in the graph $G_k$
and that in the graph $G$ is in the range $[1-\tfrac1n,1]\subset [1-\eps,1]$,
formally
\[
   1-\tfrac1n \leq \frac{w_{G_k}(S,\bar S)}{w_{G}(S,\bar S)} \leq 1.
\]
\end{lemma}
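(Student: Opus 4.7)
The plan is to track the two modifications used to build $G_k$ from $G$ separately: (i) deleting every edge $e$ with $w(e) < w(e_k)/n^3$ and (ii) raising every edge $e$ with $w(e) \geq n^2 w(e_k)$ to infinite weight. I will show that (ii) has no effect on the cut value $w(S,\bar S)$ at all, while (i) removes at most a $1/n$ fraction of it. Combining these gives the two inequalities in the statement.

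For (ii), the key input is Lemma \ref{lem:heaviest}, which tells me that the first edge of the decreasingly-ordered spanning tree that crosses $\partial(S,\bar S)$, call it $e_j$, is the \emph{heaviest} edge in the cut. By construction of $k$ in the estimation algorithm, $k\leq j$ and $w(e_k)/w(e_j)<2$; in particular $w(e_k)\geq w(e_j)$ because the MST is scanned in decreasing order. Therefore every edge $e$ in $\partial(S,\bar S)$ satisfies $w(e)\leq w(e_j)\leq w(e_k) < n^2\, w(e_k)$, so no edge in the cut is promoted to infinite weight. Consequently the ``infinite-weight'' step leaves $w(S,\bar S)$ unchanged, which immediately yields the upper bound $w_{G_k}(S,\bar S)\leq w_G(S,\bar S)$.

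For (i), I bound the total weight of removed edges crossing the cut. The cut has at most $\binom{n}{2}\leq n^2/2$ edges, and each removed edge contributes strictly less than $w(e_k)/n^3 \leq 2\, w(e_j)/n^3$. Hence the weight deleted from the cut is at most $(n^2/2)\cdot 2w(e_j)/n^3 = w(e_j)/n$. Since $e_j\in \partial(S,\bar S)$, we have $w_G(S,\bar S)\geq w(e_j)$, so the deleted weight is at most $w_G(S,\bar S)/n$, giving
\[
  w_{G_k}(S,\bar S) \;\geq\; w_G(S,\bar S) - \tfrac{1}{n}\, w_G(S,\bar S) \;=\; (1-\tfrac1n)\, w_G(S,\bar S).
\]
The final inclusion $[1-1/n,1]\subseteq [1-\eps,1]$ uses the standing assumption $\eps\geq 1/n$. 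There is no real obstacle here; the only subtle point is making sure that $e_j$ is the heaviest edge in the cut (so that (ii) is vacuous), which is exactly what Lemma \ref{lem:heaviest} provides.
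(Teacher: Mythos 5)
Your proof is correct and follows essentially the same route as the paper: split the construction of $G_k$ into the two modifications (deleting light edges, promoting heavy edges to infinite weight), show the second is vacuous for the cut $(S,\bar S)$, and bound the loss from the first by $\tfrac{1}{n}\,w_G(S,\bar S)$. One small improvement worth noting: you rule out the infinite-weight step by observing via Lemma~\ref{lem:heaviest} that $e_j$ is the heaviest cut edge and $w(e_j)\le w(e_k)<n^2 w(e_k)$, which is cleaner than the paper's route through $w_G(S,\bar S)\le n^2 w(e_k)$, and your arithmetic in the deletion bound ($\tbinom{n}{2}\cdot 2w(e_j)/n^3 = w(e_j)/n \le w_G(S,\bar S)/n$) avoids the constant-tracking slip in the paper's chain of inequalities.
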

\begin{proof}
The edges in $G_k$ are obtained from the edges of $G$, by either
(1) removing edges $e$ whose weight is $w(e)<w(e_k)/n^3$; or
(2) changing edges $e$ with $w(e)\ge n^2\cdot w(e_k)$ to have infinite weight.
The first case can only decrease any cut value, 
while the second case can only increase any cut value.

Recall that the estimation process finds $j$ such that
$w_{G}(S,\bar S)/w(e_j) \in [1,n^2]$, and then finds $k\leq j$,
which we said always satisfies $w(e_k)/w(e_j) \in [1,2)$.
Thus, $w_{G}(S,\bar S)/w(e_k) \in (\half,n^2]$.
So one direction of the desired inequality follows by observing
that edges in $G$ that fall into case (1) have the total weight at most
\[
  \binom{n}{2} w(e_k)/n^3
  \leq \frac{2}{n} w(e_k)
  \leq \frac{1}{n} w_{G}(S,\bar S).
\]
The other direction follows by observing that edges $e$ that fall into case (2)
have (in $G$) weight $w(e) > n^2\cdot w(e_k) \ge w_{G}(S,\bar S)$,
and therefore do not belong to the cut $(S,\bar S)$.
\end{proof}

\begin{lemma} \label{lem:ConnComp}
Fix $w^*>0$, let $E^*=\aset{e\in E: w(E) \geq w^*}$,
and find the largest $i^*\in[n-1]$ such that $w(e_{i^*}) > w^*$.
Then the graphs $(V,E^*)$ and $(V,\aset{e_1,\ldots,e_{i^*}})$
have exactly the same connected components
(in terms of the partition they induce of $V$).
\end{lemma}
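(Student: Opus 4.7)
The plan is to exploit that $T=\{e_1,\ldots,e_{n-1}\}$ is produced by Kruskal's algorithm on the ordering $\pi$ (decreasing weight). Since $\pi$ respects weights, every edge of $E^\ast$ appears before any edge of $E\setminus E^\ast$, so after Kruskal finishes scanning all of $E^\ast$, the tree edges added so far are exactly $\{e_1,\ldots,e_{i^\ast}\}$ (the ``$>$ vs.\ $\geq$'' mismatch in the statement of the lemma is immaterial for the application, since one can break ties in $\pi$ consistently with the thresholding).

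The first containment is trivial: each $e_i$ with $i\le i^\ast$ has $w(e_i)>w^\ast$ by definition of $i^\ast$, so $\{e_1,\ldots,e_{i^\ast}\}\subseteq E^\ast$, hence any two vertices connected in $(V,\{e_1,\ldots,e_{i^\ast}\})$ are also connected in $(V,E^\ast)$.

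For the reverse containment I would invoke the standard Kruskal invariant: for every prefix $E'$ of the edges in the order $\pi$, letting $T'\subseteq T$ be the set of edges of $T$ already selected while scanning $E'$, the graphs $(V,E')$ and $(V,T')$ have the same connected components. This is proved by induction on $|E'|$: when the next edge $e=(u,v)$ is scanned, either the endpoints lie in different $T'$-components (so $e$ is added to $T$ and both $(V,E')$ and $(V,T')$ merge the same two components), or they lie in the same $T'$-component (so $e$ is rejected, both graphs' components are unchanged, and by the induction hypothesis $u,v$ were already in the same $(V,E')$-component). Applying the invariant to $E'=E^\ast$, so that $T'=\{e_1,\ldots,e_{i^\ast}\}$, gives the desired equality of connected components.

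No step presents a substantive obstacle; the only care needed is bookkeeping around edges of weight exactly $w^\ast$, which is avoided by fixing the ordering $\pi$ and treating ``$\geq$'' and ``$>$'' consistently throughout (as is done in the calling algorithm, where the threshold $w^\ast=n^2 w(e_k)$ is used both for contraction in $G_k$ and for computing $k^\ast$).
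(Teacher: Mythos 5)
Your proof is correct and follows essentially the same route as the paper's: both arguments rest on the observation that $\{e_1,\ldots,e_{i^*}\}$ is precisely what Kruskal's algorithm produces when run on the edge set $E^*$ alone, and hence is a spanning forest of $(V,E^*)$. The only difference is that you make the final step explicit by proving the prefix invariant (that $(V,E')$ and $(V,T')$ share connected components for every $\pi$-prefix $E'$) by induction, whereas the paper simply appeals to the fact that Kruskal's output is a maximum-weight spanning forest of $(V,E^*)$; you also flag the $\geq$ versus $>$ discrepancy between the definitions of $E^*$ and $i^*$, which the paper's proof silently glosses over but which, as you note, is harmless in the calling context where a consistent threshold is used.
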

\begin{proof}
It is easy to see that executing our construction of $T$ above on the set $E^*$
gives the exact same result as executing it for $E$ but stopping once we reach
edges of weight smaller than $w^*$.
The latter results with the edges $e_1,\ldots,e_{i^*}$,
while the former is clearly an execution of Kruskal's algorithm,
i.e. computes a maximum weight forest in $E^*$.
\end{proof}

The next two lemmas show the performance of Algorithm~\ref{alg:cut-sketch-largeweight}.
\begin{lemma} \label{lem:combine}(\textbf{Accuracy Guarantee})
With high probability, $\hat{w}(S,\bar{S})$ defined in Equation~(\ref{eq:cut-estimator-largeweight}) is $1+O(\eps)$ approximation of $w_G(S,\bar{S})$.
\end{lemma}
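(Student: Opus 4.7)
The plan is to combine the three preceding lemmas with the per-component correctness guarantee of the basic sketch (Theorem~\ref{thm:cut-upper}) and a union bound. First, I would verify that the contracted graph $G'_k$ preserves the cut value \emph{exactly}. By Lemma~\ref{lem:heaviest} and the choice of $k$, we have $w_G(S,\bar{S}) \in [w(e_k)/2,\, n^2 w(e_k)]$. Hence every edge $e$ that was turned into an infinite-weight edge in $G_k$ satisfies $w(e) \geq n^2 w(e_k) \geq n^2 \cdot \half w(e_j) > w_G(S,\bar{S})$, so no such edge can lie in $\partial(S,\bar{S})$. Contracting these infinite-weight edges therefore only merges vertices on a common side of the cut, giving $w_{G'_k}(S',\bar{S'}) = w_{G_k}(S,\bar{S})$, where $S'$ is the image of $S$ under contraction. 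Combined with Lemma~\ref{lem:GkApprox}, we get $w_{G'_k}(S',\bar{S'}) = (1\pm 1/n)\, w_G(S,\bar{S})$, so it suffices to approximate the cut in $G'_k$.

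Next, applying Lemma~\ref{lem:ConnComp} with $w^{*} = n^2 w(e_k)$ shows that the partition of $V$ induced by the connected components of $G'_k$ is recoverable from the stored tree $T$ alone, so the estimation procedure correctly identifies each component $P$ and its vertex set $V_P$. Since all contracted edges lie on one side of $(S,\bar{S})$, the cut decomposes cleanly into per-component contributions,
\[
    w_{G'_k}(S',\bar{S'}) = \sum_{P\in G'_k} w_P\bigl(S'\cap V_P,\ \bar{S'}\cap V_P\bigr),
\]
which matches the structure of the estimator~(\ref{eq:cut-estimator-largeweight}).

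Then I would apply Theorem~\ref{thm:cut-upper} to each component $P$ of $G'_k$. By construction of Algorithm~\ref{alg:cut-sketch-largeweight}, every non-zero edge weight inside $P$ lies in $[n^{-3}\, w(e_j),\, n^2\, w(e_j)]$, a polynomial range, so after rescaling the hypotheses of Theorem~\ref{thm:cut-upper} are met. Taking the standard median of $O(\log n)$ independent copies when constructing each sketch $\sk{P}$ boosts the per-query success probability to $1 - n^{-8}$ at the cost of only a $\log n$ factor in space. A union bound over the at most $n$ components of $G'_k$ and over the at most $n$ candidate values of $k$ (all of which are determined before queries arrive) shows that, with probability $1 - O(n^{-6})$, every sub-estimate $\hat{w}(S\cap V_P,\bar{S}\cap V_P)$ is simultaneously within factor $1 + O(\eps)$ of the corresponding true value.

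Since all summands are non-negative, a multiplicative $(1 \pm O(\eps))$ error is preserved under summation, and chaining with the $1 \pm 1/n$ factor above (and recalling $\eps \geq 1/n$) yields
\[
    \hat{w}(S,\bar{S}) = \sum_{P\in G'_k} \hat{w}(S\cap V_P,\bar{S}\cap V_P) = (1\pm O(\eps))\, w_{G'_k}(S',\bar{S'}) = (1\pm O(\eps))\, w_G(S,\bar{S}),
\]
which is the claim. The one step that needs the most care is the exactness of the contraction, since the whole multi-step reduction hinges on showing that no infinite-weight edge crosses the queried cut; once that is in place, the remaining arguments are a clean application of the earlier lemmas, Theorem~\ref{thm:cut-upper}, and probability amplification.
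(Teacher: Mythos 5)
Your proof is correct and follows essentially the same route as the paper's: chain Lemmas~\ref{lem:heaviest}, \ref{lem:GkApprox}, and \ref{lem:ConnComp} to reduce to per-component cut estimation in $G'_k$, apply the amplified basic sketch (Theorem~\ref{thm:cut-upper}) to each component, and union-bound over the $O(n)$ non-adaptive queries. The one place you are actually more careful than the paper is in spelling out why contraction preserves the queried cut exactly (the infinite-weight edges cannot cross $\partial(S,\bar S)$ because $e_j$ is the heaviest cut edge by Lemma~\ref{lem:heaviest}); the paper attributes this to Lemma~\ref{lem:ConnComp} which really only identifies the merged vertex classes, so your elaboration is a welcome tightening rather than a deviation.
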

\begin{proof}
Lemma~\ref{lem:heaviest} and \ref{lem:GkApprox} show that $(1-\eps)w_G(S,\bar{S})\leq w_{G_k}(S,\bar{S})\leq w_G(S,\bar{S})$. Lemma~\ref{lem:ConnComp} show that $w_{G_k}(S,\bar{S})$ is the same as the cut weight of $(S,\bar{S})$ in $G_k'$. Since we build the sketch of each connected component of size at least $2$ in $G_k'$, we can obtain a $1+O(\eps)$ approximation, \whp (say $1-1/n^8$), of the contribution to cut weight of each component in $G_k'$, i.e., the estimator $\hat{w}(S\cap V_P,\bar{S}\cap V_P)$ defined in Equation~(\ref{eq:cut-I_p}). Since the number of component is $O(n)$ (because they correspond to disjoint subsets of $V$), applying the union bound over the events of an error in any of the
basic estimates used along the way, we know that the constructed estimator $\hat{w}(S,\bar{S})$ defined in Equation~(\ref{eq:cut-estimator-largeweight}) is $1+O(\eps)$ approximation of $w_G(S,\bar{S})$, \whp. (The union bound is applicable because these basic sketch are queried
in a non-adaptive manner, or alternatively, because we make at most
one query to every basic sketch that is constructed independently of the others.)
\end{proof}

\begin{lemma} \label{lem:size}(\textbf{Size Analysis})
The total size of the sketch is at most $\tO(n/\eps \cdot \log\log W)$,
where we assume all non-zero edge weights are in the range $[1,W]$.
\end{lemma}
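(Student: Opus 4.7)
The plan is to bound $|\sk{G}|$ by decomposing it according to Algorithm~\ref{alg:cut-sketch-largeweight} into (i) the encoded spanning tree $T$, and (ii) the collection of basic sketches $\sk{P}$ built across all triggering iterations of the outer loop, and then bound each piece separately.

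First I would handle the spanning tree $T$. Inspection of the estimation procedure shows that $T$ is accessed only to enumerate edges in the $\pi$-order until hitting the cut $\partial(S,\bar S)$, and to compare weights up to a factor of $2$ when selecting the sketch index in $\tilde C$. It therefore suffices to store each of the $n-1$ tree edges using $O(\log n)$ bits for its endpoints together with $O(\log\log W)$ bits for a power-of-two quantization $\lfloor \log_2 w(e)\rfloor$ of its weight. This contributes $\tilde O(n\log\log W)$ bits to $\sk{G}$.

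Next I would handle the basic sketches. Whenever the \emph{if}-condition in Algorithm~\ref{alg:cut-sketch-largeweight} triggers, the reduced graph $G_j'$ retains only edges whose original weights lie in the polynomial range $[w(e_j)/n^3,\, n^2 w(e_j))$ of width $n^5$. Theorem~\ref{thm:cut-upper} thus applies to every connected component $P\subseteq G_j'$ with $|V(P)|\geq 2$, producing $\sk{P}$ of size $\tilde O(|V(P)|/\eps)$ bits; summing over all components of a single $G_j'$ yields $\tilde O(|V(G_j')|/\eps)\leq \tilde O(n/\eps)$ bits for that iteration.

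The final task is to count the triggering iterations. The \emph{if}-condition forces consecutive triggering weights to decrease by a factor of at least $2$, which by itself gives a naive count of $O(\log W)$ triggers. The main obstacle will be to sharpen this to $O(\log\log W)$, as the claimed bound demands; I expect this to require collapsing consecutive triggering iterations whose reduced graphs $G_j'$ are equivalent for cut-estimation purposes (because neither the set of contracted edges nor the set of discarded edges changes in a cut-relevant way), and then charging each genuinely distinct $G_j'$ to its own coarse weight scale in a doubly-exponential hierarchy. Combined with the cost of $T$, this would yield the total bound of $\tilde O(n/\eps \cdot \log\log W)$ bits.
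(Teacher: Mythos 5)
Your handling of the spanning tree component and of the cost of a single triggering iteration both match the paper. However, the final step of your plan --- sharpening the count of triggering iterations from $O(\log W)$ to $O(\log\log W)$ --- is the wrong target and cannot work. The number of distinct triggering iterations genuinely can be $\Theta(\log W)$: take a path whose edge weights are $1,2,4,\ldots,W$; then no two tree edges are within a factor of $2$, so every iteration triggers. No notion of ``equivalence of reduced graphs'' saves this, since each $G'_j$ removes and contracts a strictly different set of edges. And indeed the statement does \emph{not} claim the number of triggers is $O(\log\log W)$; the $\log\log W$ factor comes solely from encoding the $n-1$ tree weights in part (i), while part (ii) is bounded by $\tO(n/\eps)$ on its own, uniformly in $W$.

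The missing idea is a \emph{telescoping} argument across triggering iterations, not a count of them. The paper's key refinement is to charge iteration $j$ not $\tO(n/\eps)$ but $\tO((n_j - m_j)/\eps)$, where $n_j$ is the number of vertices of $G'_j$ and $m_j$ the number of its connected components (each component of size $s\geq 2$ costs $\tO((s-1)/\eps)$). Then it shows that for triggering iterations $M=5\log_2 n$ apart, every edge of $G'_{j_i}$ has weight at least $n^2\cdot w(e_{j_{i+M}})$, so all of $G'_{j_i}$'s components are contracted to single vertices in $G'_{j_{i+M}}$, giving $n_{j_{i+M}}\leq m_{j_i}$. Splitting the triggering indices into $M$ residue classes and telescoping within each class yields $\sum_i (n_{j_i}-m_{j_i}) \leq Mn = O(n\log n)$, and hence part (ii) costs $\tO(n/\eps)$ total regardless of how many iterations trigger. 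Your proposal as written would, at best, prove a bound of the form $\tO(n\log W/\eps)$, which is substantially weaker when $W$ is superpolynomial.
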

\begin{proof}
The first component of the sketch is just a list of $n-1$ edges with
their edge weights, hence its size is $O(n\log (\log W/\eps))$ (we can
store a $1+\eps/2$ approximation to each weight using space $\log
(\log W/\eps)$).

The second component of the sketch has $n-1$ parts,
one for each $G'_j$ where $j\in[n-1]$.
For some of these $j$ values, we compute and store the basic sketch
for every connected components of $G'_j$ that is of size at least $2$.
Denoting by $n_j$ the number of vertices in $G'_j$,
and by $m_j$ the number of connected components in $G'_j$,
the storage requirement for each $G'_j$
is at most $\tO(\tfrac{n_j-m_j}{\eps})$,
because each connected component of size $s\ge 2$
requires storage $\tO(\tfrac{s}{\eps}) \leq \tO(\tfrac{s-1}{\eps})$,
and these sizes (the different $s$ values) add up to at most $n$.

Denote the values of $j$ for which we do store a basic sketch for $G'_j$
by $j_1<j_2<\cdots<j_p$,
where by construction $w(e_{j_i}) / w(e_{j_{i+1}}) \geq 2$.
Summing over these values of $j$, the second component's storage requirement
is at most
\begin{equation} \label{eq:storage1}
\tO(\tfrac{1}{\eps} \sum_{i\in[p]}(n_{j_i}-m_{j_i}) ).
\end{equation}
To ease notation, let $M= 5\log_2 n$,
and consider the graphs $G'_{j_i}$ and $G'_{j_{i+M}}$ for some $i\in[ p-M-1]$.
Observe that every edge in $G'_{j_i}$ has weight at least
$w(e_{j_i})/n^3 \geq 2^M \cdot w(e_{j_{i+M}})/n^3 = n^2 \cdot w(e_{j_{i+M}})$
(because edges of smaller weight are removed);
thus, in $G_{j_{i+M}}$, these same edges have infinite weight,
and then to create the reduced form $G_{j_{i+M}}$, these edges are contracted.
It follows from this observation that every connected component in $G'_{j_i}$
becomes in $G_{j_{i+M}}$ a single vertex, hence $n_{j_{i+M}} \leq m_{j_i}$
(we do not obtain equality since additional contractions may occur).
Using this last inequality, for every $i^*\in[M]$,
we can bound the following by a telescopic sum
\[
  \sum_{i=i^*,i^*+M,i^*+2M,\ldots} (n_{j_i}-m_{j_i})
  \leq \sum_{i=i^*,i^*+M,i^*+2M,\ldots} (n_{j_i}-n_{j_{i+M}})
  \leq n_{j_{i^*}}
  \leq n,
\]
and therefore
\[
  \sum_{i\in[p]}(n_{j_i}-m_{j_i}) )
  \leq \sum_{i^*\in[M]} \sum_{i=i^*,i^*+M,i^*+2M,\ldots} (n_{j_i}-m_{j_i})
  \leq M\cdot n.
\]
Plugging this last inequality into \eqref{eq:storage1},
we obtain that the second component's storage requirement is at most
$M\cdot \tO(n/\eps)$, which is still bounded by $\tO(n/\eps)$.
\end{proof}

Based on Lemma~\ref{lem:combine} and \ref{lem:size}, we conclude the following theorem.
\begin{theorem}
\label{thm:upper}
Fix an integer $n$ and $\eps\in(1/n,1/30)$.
Then every $n$-vertex graph $G=(V,E,w)$ with edge weights in the range $[1,W]$
admits a cut sketch of size $\tilde O(n\eps^{-1}\cdot \log\log W)$ bits
with the ``for each'' guarantee. 
Specifically, for every query $S\subset V$ (equivalently, $x \in \{0,1\}^n$),
the sketch can produce \whp a $1+O(\eps)$ approximation to $w(S,\bar S)$.
\end{theorem}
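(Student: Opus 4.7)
The plan is to analyze Algorithm~\ref{alg:cut-sketch-largeweight} (Cut-Sketch) in two steps: accuracy and size, which correspond to Lemmas~\ref{lem:combine} and~\ref{lem:size} respectively, each of which uses Theorem~\ref{thm:cut-upper} (the polynomial-weight case) as a black-box on the reduced graphs $G'_j$. For the accuracy part, I would fix a query $S\subset V$ and follow the estimation procedure: find the first $j\in[n-1]$ such that $e_j\in\partial(S,\bar S)$ (existence is guaranteed as $\{e_1,\ldots,e_{n-1}\}$ is a spanning tree on each component), and then the largest stored index $k\le j$, which satisfies $w(e_k)/w(e_j)\in[1,2)$ by construction. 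Lemma~\ref{lem:heaviest} shows $e_j$ is the heaviest cut-edge, so $w_G(S,\bar S)/w(e_k)\in(\tfrac12,n^2]$; Lemma~\ref{lem:GkApprox} shows $w_{G_k}(S,\bar S)=(1\pm\tfrac1n)w_G(S,\bar S)$; and Lemma~\ref{lem:ConnComp} shows that the vertex-partition obtained by contracting infinite-weight edges in $G_k$ is exactly the connected-component partition of $(V,\{e_1,\ldots,e_{k^*}\})$, which is recoverable from the stored MST. Hence summing the per-component estimators $\hat w(S\cap V_P,\bar S\cap V_P)$ produced by Cut-Basic on each component $P$ of $G'_k$ yields an estimator $\hat w(S,\bar S)$ equal (up to the contracted structure) to $w_{G_k}(S,\bar S)$, plus the error of those basic sketches.

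For each component $P$ of $G'_k$, Theorem~\ref{thm:cut-upper} applies because by construction the non-zero edge weights in $P$ lie in the polynomial range $[n^{-3}w(e_k),n^2 w(e_k)]$; amplifying the per-query failure probability of each basic sketch to $1-1/n^8$ costs only an $O(\log n)$ factor in space. A union bound over the at most $O(n)$ components (hit by at most one query each, so non-adaptively) gives that every component estimate is $(1\pm O(\eps))$-accurate with high probability, and hence so is their sum, so $\hat w(S,\bar S)=(1\pm O(\eps))w_G(S,\bar S)$ as desired.

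For the size part I would account for the two pieces separately. The MST $T$ stores $n-1$ edges with weights; storing a $(1+\eps/2)$-approximation of each weight takes $O(\log(\log W/\eps))$ bits, for a total of $\tilde O(n\log\log W)$ bits. For the basic sketches, let $j_1<\cdots<j_p$ be the stored indices; by the ``$w(e_k)/w(e_j)<2$'' condition, consecutive stored weights differ by a factor of at least $2$. Letting $n_{j_i}$ and $m_{j_i}$ be the number of vertices and components of $G'_{j_i}$, Theorem~\ref{thm:cut-upper} gives a per-graph size of $\tilde O((n_{j_i}-m_{j_i})/\eps)$ since each component of size $s\ge 2$ needs $\tilde O(s/\eps)\le \tilde O((s-1)/\eps)$ space. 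The key observation is that for $M=O(\log n)$, every edge surviving in $G'_{j_i}$ has weight $\ge w(e_{j_i})/n^3\ge n^2 w(e_{j_{i+M}})$ and hence gets contracted in $G_{j_{i+M}}$, giving $n_{j_{i+M}}\le m_{j_i}$. Grouping the indices into $M$ arithmetic-progression residue classes and telescoping within each class bounds $\sum_i(n_{j_i}-m_{j_i})\le Mn=\tilde O(n)$, so the total basic-sketch size is $\tilde O(n/\eps)$. Combining the two pieces yields $\tilde O(n\eps^{-1}\log\log W)$ bits.

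The step I expect to be the main obstacle is actually just being careful about the telescoping argument and the union bound over components: one must ensure that the queries to the basic sketches are non-adaptive (which they are, since $k$ and the vertex-partition are determined from the MST alone, independently of the basic sketches' internal randomness), and that the $1/n$ multiplicative slack from Lemma~\ref{lem:GkApprox} is absorbed into the $O(\eps)$ error using $\eps\ge 1/n$. Everything else is bookkeeping on top of Theorem~\ref{thm:cut-upper} and the three structural lemmas.
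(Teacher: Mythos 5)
Your proposal follows the paper's own proof essentially verbatim: it decomposes the argument into the accuracy guarantee (Lemma~\ref{lem:combine}, built from Lemmas~\ref{lem:heaviest}, \ref{lem:GkApprox}, \ref{lem:ConnComp}, an $O(\log n)$-amplified invocation of Theorem~\ref{thm:cut-upper} per component, and a union bound over the $O(n)$ non-adaptively queried components) and the size bound (Lemma~\ref{lem:size}, with the approximate-weight MST encoding and the same telescoping trick $n_{j_{i+M}}\le m_{j_i}$ over $M=O(\log n)$ residue classes). The reasoning and the final bound $\tilde O(n\eps^{-1}\log\log W)$ match the paper's.
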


\subsubsection{A Tight Lower Bound}
\label{sec:nEpsLB}

The following theorem shows that our sketch from Theorem \ref{thm:upper}
achieves optimal space up to a poly-logarithmic factor,
even for unweighted graphs.

\begin{theorem} \label{thm:nEpsLB}
Fix an integer $n$ and $\eps\in[2/n,1/2]$. Suppose $\sk{\cdot}$
is a sketching algorithm that outputs at most $s=s(n,\eps)$ bits,
and $\est$ is an estimation algorithm, such that together
for every $n$-vertex graph $G$,
\[
  \forall S\subset V, \qquad
  \Pr\Big[\est(S,\sk{G})\in (1\pm \eps)\cdot\card{\partial(S,\bar S)}\Big]\ge 9/10,
\]
where $\partial(S,\bar S) = \{(u,v)\in E\ | u\in S, v\in \bar S\}$. 
Then $s \ge \Omega(n/\eps)$.
\end{theorem}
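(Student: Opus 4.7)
The plan is to prove Theorem~\ref{thm:nEpsLB} by a reduction from the standard one-way randomized Indexing problem, whose communication complexity is $\Omega(N)$ for any constant error bounded away from $1/2$. I will set $N = \Theta(n/\eps)$ and give a protocol in which Alice feeds a graph built from her Indexing input into the sketching algorithm, and Bob uses the resulting sketch to recover any one indexed bit via three non-adaptive cut queries.

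First, I fix a small absolute constant $c$ (say $c = 1/10$) and a $d$-regular graph $H$ on the vertex set $[n]$ with $d = \lfloor c/\eps \rfloor$; such $H$ exists since the assumption $\eps \ge 2/n$ gives $d \le n-1$, and it has $N = nd/2 = \Theta(n/\eps)$ edges. Given Alice's string $X \in \{0,1\}^N$, indexed by the edges of $H$, I define $G(X) := ([n], \{e \in E(H) : X_e = 1\})$, which is unweighted and has maximum degree at most $d$. Alice then runs the alleged sketching algorithm $\ske$ on $G(X)$ and sends $\ske(G(X))$ to Bob; the one-way message length is exactly the sketch size $s$.

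Given his Indexing index $i \in [N]$, Bob identifies the corresponding edge $e^* = (a^*, b^*) \in E(H)$ and evaluates $\est$ on three pre-specified queries: $S_1 = \{a^*\}$, $S_2 = \{b^*\}$, and $S_3 = \{a^*, b^*\}$. Their true cut values in $G(X)$ are $d_{a^*}(G(X))$, $d_{b^*}(G(X))$, and $d_{a^*}(G(X)) + d_{b^*}(G(X)) - 2\indic[e^* \in E(G(X))]$, each bounded above by $2d \le 2c/\eps$. By the ``for each'' guarantee, each individual estimate lies within a $(1+\eps)$-factor of the truth with probability at least $9/10$, so a union bound over the three pre-chosen queries yields a joint success probability at least $7/10$. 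Conditioned on this event, every estimate has additive error at most $\eps \cdot 2d \le 2c = 1/5$, and hence the inclusion-exclusion combination $\widehat v := \tfrac12\bigl(\est(S_1,\ske(G(X))) + \est(S_2,\ske(G(X))) - \est(S_3,\ske(G(X)))\bigr)$ approximates $\indic[e^* \in E(G(X))] = X_i \in \{0,1\}$ to additive error strictly less than $1/2$; rounding to the nearest integer recovers $X_i$ exactly.

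This yields a randomized one-way protocol for Indexing on $N$ bits, with success probability at least $7/10 > 2/3$ and Alice's message length equal to $s$, so the classical $\Omega(N)$ lower bound on one-way Indexing forces $s = \Omega(N) = \Omega(n/\eps)$. The main technical point that I expect to emphasize is the dual role of the host graph $H$: choosing $H$ to be $d$-regular with $d = \Theta(1/\eps)$ simultaneously (i) makes the subset-selecting input $X$ carry $\Theta(n/\eps)$ independent bits of Indexing-hard information encoded purely by edge presence, and (ii) caps every queried cut value at $O(1/\eps)$, so that the $(1+\eps)$-multiplicative approximation translates into additive error well below the unit gap needed to decode a single edge indicator by subtraction. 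Without this degree cap, the $(1+\eps)$-error on $\est(S_3, \cdot)$ alone would exceed $1$ and would wash out the single-bit target, so calibrating the host graph to trade encoding capacity against cut magnitude is the crux of the argument.
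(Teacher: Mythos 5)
Your proposal is correct and follows essentially the same route as the paper's proof: both encode $\Theta(n/\eps)$ bits as edge presence/absence in a bounded-degree graph (with degree $\Theta(1/\eps)$ so that all queried cut values are $O(1/\eps)$), and both recover a single edge indicator via the inclusion--exclusion of the three cuts $\{u\}$, $\{v\}$, $\{u,v\}$, applying a union bound over these three queries to get success probability at least $7/10$. The only cosmetic difference is the host graph: you use an arbitrary $d$-regular graph on $[n]$, whereas the paper takes a disjoint union of $n/(2D)$ complete bipartite blocks of size $D\times D$ with $D=\Theta(1/\eps)$; likewise you phrase the final step as an explicit reduction to one-way Indexing, while the paper invokes the equivalent information-theoretic encoding argument directly. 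Neither difference is substantive.
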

\begin{proof}
We will show how to encode a bit-string of length $l= n/(8\eps)$ into a
graph, so that, given its sketch $\sk{G}$, one can reconstruct any bit
of the string with constant probability.
Standard information-theoretical argument would
then imply that $s\geq\Omega(l) = \Omega(n/\eps)$.

Given a string $x\in \{0,1\}^l$, we embed it into a bipartite graph
$G$ on with $n/2$ vertices on each side,
and vertex degrees bounded by $D= 1/(4\eps)$ as follows.
Partition the vertices on each side into disjoint blocks of $D$,
and let the $i$-th block on the left side and on the right side
form a (bipartite) graph which we call $G_i$, for $i=1,\ldots,n/(2D)$.
Then partition the string $x$ in $n/(2D)$ blocks,
each block is of length $D^2$ and describes the adjacency matrix
of some bipartite $G_i$.

We now show that evaluating a bit from the string $x$ corresponds to testing
the existence of some edge $(u,v)$ from some graph $G_i$, which we can do
using the $1+\eps$ approximating sketch only.
Formally, let $\delta(S)$ be the cut
value of the set $S$, i.e., $\card{\partial(S,\bar S)}$,
and observe that
\[
  \delta(\{u\})+\delta(\{v\})-\delta(\{u,v\}) =
  \begin{cases}
    2 & \text{if $(u,v)$ is an edge in $G$;} \\
    0 & \text{otherwise}.
  \end{cases}
  \]
Since the considered values of $\delta(\cdot)$ are bounded by $D$,
the sketch estimates each such value with additive error at most $\eps D=1/4$,
which is enough to distinguish between the two cases.
Furthermore, since we query the sketch only $3$ times, the
probability of correct reconstruction of the bit is at least $7/10$.
The lower bound follows.
\end{proof}


\subsection{Laplacian Matrices with Spectral Queries}
\label{sec:spectral-each}

In this section, we construct sketches for a Laplacian matrix with spectral queries. 
We first design sketches of size $\tilde{O}(n / \eps^{5/3})$,
in Section~\ref{sec:basic},
and then improve it to size $\tilde O( {n}/{\eps^{8/5}} )$ 
in Section~\ref{sec:improve}.
In each of these, we will start with an algorithm for a class of special graphs, and then extend it to general graphs.

\subsubsection{A Basic Sketching Algorithm}
\label{sec:basic}
In this section, we described a sketch of size $\tilde{O}(n / \eps^{\frac{5}{3}})$ for a Laplacian matrix with spectral queries.  Let $\alpha = c_\alpha \eps^{-\frac{5}{3}}$ be a parameter we will use in this section, where $c_\alpha > 0$ is a large enough constant.
We will start with an algorithm for a class of special graphs, and then extend it to general graphs.

\paragraph{Special Graphs}
\label{sec:basic-simple}
In this section we consider a class of special graphs, defined as follows.

\begin{definition}[S2-graph]
We say an undirected weighted graph $G = (V, E, w)$ is an \emph{S2-graph} (reads ``simple type-$2$ graph'') if it satisfies the followings.
\begin{enumerate}
\item All weights $\{w(e)\ |\ e \in E\}$ are within a factor of $2$,  i.e. for any $e \in E$, $w(e) \in [\gamma, 2 \gamma)$ for some $\gamma > 0$.

\item The Cheeger's constant $h_G > \alpha\eps^2 = c_\alpha \eps^{\frac{1}{3}}$.

\end{enumerate}
\end{definition}

Let $\mathcal{S}(G) = \{v\in V\ |\ \delta_v \leq \gamma\alpha\}$, $\mathcal{L}(G) = \{v \in V\ |\ \delta_v > \gamma\alpha\}$. For $u\in \mathcal{L}(G)$, let $\delta_u^{\L}(G) = \sum_{v\in\mathcal{L}(G)}w(u,v)$. We will omit ``$(G)$" when there is no confusion. The algorithm for sketching S2-graph is described in Algorithm~\ref{alg:sketch-basic-simple}.  When we say ``add an edge to the sketch" we always mean ``add the edge together with its weight".

\begin{algorithm}[t]
\KwIn{An S2-Graph $G = (V, E, w)$; a quality control parameter $\eps$}
\KwOut{a $(1+\eps, 0.001)$-spectral-sketch $\sk{G}$ of $G$ }
$\sk{G} \gets \emptyset$\;
Add $\{\delta_u\ |\ u \in V\}$ to $\sk{G}$\;
\For {$u \in \mathcal{S}$}{
  Add all of $u$'s adjacent edges to $\sk{G}$\;
}

\For {$u \in \mathcal{L}$}{
  Add $\delta_u^\L$ to $\sk{G}$\;
  $E_u \gets \{ (u, v)\ |\ v \in \L \}$\;
  Sample (with replacement) $\alpha$ edges from $E_u$, where each time the probability of sampling $e = (u, v)\in E_u$ is $p_e = {w(e)}/{\delta_u^\L}$\;
Add the sampled $\alpha$ edges to $\sk{G}$\;
}

\Return{$\sk{G}$}\;
\caption{{\bf Spectral-S2}($G, \eps$)}
\label{alg:sketch-basic-simple}
\end{algorithm}

Let $Y_u^v$ be the random variable denoting the number of times edge $(u, v)$ is sampled at Line $8$ in Algorithm \ref{alg:sketch-basic-simple}. It is easy to see that
\begin{equation}
\label{eq:b-1}
\E[Y_u^v] = \frac{\alpha w(u,v)}{\delta_u^\L} \quad \text{and} \quad \var{Y_u^v} = \alpha \left(1 - \frac{w(u,v)}{\delta_u^\L}\right)\frac{w(u,v)}{\delta_u^\L} \leq \alpha \frac{w(u,v)}{\delta_u^\L}.
\end{equation}
Given a vector $x \in \mathbb{R}^{n}$, we use the following expression as an estimator of $x^TLx$:
\begin{equation}
  \label{eq:estimator}
  I_G = \sum_{u \in V} \delta_u x_u^2 - \sum_{u \in \mathcal{S}}\sum_{v \in V} x_u x_v w(u,v)
        - \sum_{u \in \mathcal{L}}\sum_{v \in \mathcal{S}}x_u x_v w(u,v)
        - \sum_{u \in \mathcal{L}}\frac{\delta_u^\L}{\alpha} \sum_{v \in \mathcal{L}} x_u x_v Y_u^v.
\end{equation}

\begin{lemma}
  \label{lem:S2-Estimator}
 Let $G = (V, E, w)$ be an S2-Graph and $L = L(G)$ be the (unnormalized) Laplacian of $G$, then $I_G$ (defined in Equation (\ref{eq:estimator})) is an unbiased estimator of $x^TLx$. Furthermore, it gives a $(1 + \eps, 0.001)$-approximation to $x^TLx$.
\end{lemma}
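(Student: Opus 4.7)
The proof proceeds in three stages: unbiasedness, variance bound, and a Chebyshev step. For unbiasedness, I expand $x^T L x=\sum_u\delta_u x_u^2-\sum_{u\neq v}w(u,v)x_u x_v$ and partition the off-diagonal double sum by whether each endpoint lies in $\S$ or $\L$. The first three summands of $I_G$ match the $\S\times V$ and $\L\times\S$ parts exactly (deterministically), and the expected value of the random term equals $\sum_{u,v\in\L}x_u x_v w(u,v)$ by the identity $\E[Y_u^v]=\alpha w(u,v)/\delta_u^\L$ recorded in (\ref{eq:b-1}). This yields $\E[I_G]=x^T L x$.

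For the variance, observe that only the last summand in $I_G$ is random and the samples drawn for distinct $u\in\L$ are mutually independent. For each $u\in\L$, the vector $(Y_u^v)_{v\in\L}$ is multinomial with $\alpha$ trials and probabilities $p_v=w(u,v)/\delta_u^\L$, so the multinomial variance bound $\var{\sum_v x_v Y_u^v}\le\alpha\sum_v p_v x_v^2$, combined with independence across $u$, gives
\[
\var{I_G}\le\frac{1}{\alpha}\sum_{u,v\in\L}\delta_u^\L\, w(u,v)\, x_u^2\, x_v^2.
\]
At this point I invoke the two S2-graph hypotheses: $w(u,v)\le 2\gamma$, and $v\in\L$ forces $\delta_v>\gamma\alpha$, so $w(u,v)\le 2\delta_v/\alpha$. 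Combined with $\delta_u^\L\le\delta_u$, the double sum factorises as a product of two copies of $\sum_u\delta_u x_u^2$ and I obtain
\[
\var{I_G}\le\frac{2}{\alpha^2}\|D^{1/2}x\|_2^4=\frac{2}{c_\alpha^2}\,\eps^{10/3}\,\|D^{1/2}x\|_2^4,
\]
matching the bound advertised in the technical overview.

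The final ingredient converts $\|D^{1/2}x\|_2^4$ into $(x^T L x)^2$ via Cheeger. For $x$ that is $D$-orthogonal to $\mathbf{1}$, the variational characterisation of $\lambda_1(\tilde L)$ gives $x^T L x\ge\lambda_1(\tilde L)\|D^{1/2}x\|_2^2\ge (h_G^2/2)\|D^{1/2}x\|_2^2$ by Lemma~\ref{lem:cheeger}, and since the S2-graph assumption forces $h_G>c_\alpha\eps^{1/3}$ this yields $\|D^{1/2}x\|_2^4\le 4(x^T L x)^2/(c_\alpha^4\eps^{4/3})$. Chaining the two bounds gives $\var{I_G}\le (8/c_\alpha^6)\,\eps^2\,(x^T L x)^2$; choosing $c_\alpha$ sufficiently large (say $c_\alpha^6\ge 8000$) and applying Chebyshev's inequality yields $\Pr[|I_G-x^T L x|\ge\eps\, x^T L x]\le 0.001$, as claimed.

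The subtle point I expect to be the main obstacle is that $I_G$ is not literally shift-invariant as a random variable (only its expectation $x^T L x$ is), so $\|D^{1/2}x\|_2^2$ is not automatically controlled by $x^T L x/\lambda_1(\tilde L)$ unless $x$ is $D$-orthogonal to $\mathbf{1}$. I plan to resolve this by having the estimation procedure first compute $\mu=(\sum_u\delta_u x_u)/(\sum_u\delta_u)$ from the stored degrees $\{\delta_u\}_{u\in V}$ and then apply the estimator to $x-\mu\mathbf{1}$; since $L\mathbf{1}=0$, this does not alter the target $x^T L x$, while it activates Cheeger's bound on the shifted vector and validates the variance-to-$(x^T L x)^2$ conversion above.
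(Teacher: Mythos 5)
Your proof is correct and follows essentially the same route as the paper: unbiasedness via the $\E[Y_u^v]$ identity, a variance bound of the form $\tfrac{2}{\alpha^2}\|D^{1/2}x\|_2^4$ obtained from $w(u,v)\le 2\gamma\le 2\delta_v/\alpha$, conversion to $(x^TLx)^2$ via Cheeger, and then Chebyshev with a sufficiently large constant $c_\alpha$. The one genuine difference is that you explicitly flag and repair a gap that the paper's write-up glosses over: the inequality $\hat x^T\hat x\le\lambda_1(\tilde L)^{-1}\hat x^T\tilde L\hat x$ requires $\hat x=D^{1/2}x$ to be orthogonal to the kernel of $\tilde L$ (spanned by $D^{1/2}\mathbf 1$), which the paper invokes without justification. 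Your fix -- having the estimation procedure first replace $x$ by $x-\mu\mathbf 1$ with $\mu=(\sum_u\delta_u x_u)/(\sum_u\delta_u)$, computable from the stored degrees -- is exactly right: it leaves the target $x^TLx$ unchanged since $L\mathbf 1=0$, while it makes the Rayleigh-quotient step valid on the shifted vector. Without some such shift the lemma as literally stated can fail (e.g.\ for $x=\mathbf 1$, where $x^TLx=0$ but the sampled term has nonzero variance). Minor stylistic point: you bound the variance via the multinomial variance inequality $\var{\sum_v c_v Y_u^v}\le\alpha\sum_v p_v c_v^2$, whereas the paper writes an equality as if the $(Y_u^v)_v$ were independent; your version is the cleaner of the two, and both lead to the identical bound.
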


\begin{proof}
Since $\E[Y_u^v] = \frac{\alpha w(u,v)}{\delta_u^\L}$ (by (\ref{eq:b-1})), it is straightforward to show that
$$\E[I_G] = \sum_{u \in V} \delta_u x_u^2 - \sum_{u \in \mathcal{S}}\sum_{v \in V} x_u x_v w(u,v)
            - \sum_{u \in \mathcal{L}}\sum_{v \in V}x_u x_v w(u,v)
            = \sum_{(u, v) \in E}  (x_u - x_v)^2 w(u,v) = x^T L x. $$
Now let us compute the variance of $I_G$. Note that if $\var{I_G} = O\left(\eps^2 (x^TLx)^2\right)$, then by taking constant $c_\alpha$ in $\alpha = c_\alpha \eps^{-\frac{5}{3}}$ large enough, a Chebyshev's inequality immediately yields the lemma. The variance of $I_G$
\begin{eqnarray}
  \var{I_G}  & = & \var{\sum_{u\in\L}\frac{\delta_u^\L}{\alpha} \sum_{v\in\L} x_ux_vY_u^v} \nonumber \\
             & = &  \sum_{u\in\L}\frac{(\delta_u^\L)^2}{\alpha^2} \sum_{v\in\L} x_u^2x_v^2\var{Y_u^v} \nonumber \\
             &\leq & \sum_{u\in\L}\frac{(\delta_u^\L)^2}{\alpha^2} x_u^2\sum_{v\in\L} x_v^2\frac{\alpha w(u,v)}{\delta_u^\L} \quad \quad \quad (\text{by (\ref{eq:b-1})}) \nonumber \\
             & = & \frac{1}{\alpha}\sum_{u\in\L}\delta_u^\L x_u^2\sum_{v\in\L} x_v^2 w(u,v) \nonumber \\
             &\leq  & \frac{1}{\alpha}\sum_{u\in\L}\delta_u^\L x_u^2\sum_{v\in\L} x_v^2 \frac{2\delta_v}{\alpha} \quad \quad  (w(u,v) \le 2 \gamma \le \frac{2\delta_v}{\alpha} \text{ by def. of S2-graph and def. of $\L$}) \nonumber \\
             &\leq & \frac{2}{\alpha^2} \sum_{u \in V} \delta_ux_u^2\sum_{v \in V} \delta_v x_v^2 \quad \quad \quad \quad (\delta_u^\L \le \delta_u \text{ by definitions}) \nonumber \\
             & = & \frac{2}{\alpha^2}\norm{D^{1/2}x}_2^4, \label{eq:c-1}
\end{eqnarray}
where $D = \text{diag}(\delta_1, \delta_2, \ldots, \delta_n)$.  The normalized Laplacian of $G$ can be written as $\tilde{L} = D^{-\frac{1}{2}} L D^{-\frac{1}{2}}$. Define $\hat{x} = D^{1/2}x$, we have
$$
  \|\hat{x}\|_2^2 = \hat{x}^T \hat{x} \leq \frac{1}{\lambda_1(\tilde L)}\hat{x}^T \tilde L \hat{x}
                  \overset{by~(\ref{eq:cheeger})}{\leq} \frac{2}{h_G^2}(x^TLx) \overset{\text{property of S2-graph}}{<} \frac{2}{\alpha^2\eps^4}(x^TLx),
$$
which together with (\ref{eq:c-1}) gives $\var{I_G} < \frac{8}{\alpha^6\eps^8}(x^TLx)^2 = O\left (\eps^2(x^TLx)^2 \right )$.
\end{proof}

We summarize our result for the S2-Graph $G$ in the following theorem.

\begin{theorem}
\label{thm:basic-simple}
There is a sketching algorithm which given an S2-graph, outputs a $(1+\eps, 0.001)$-spectral-sketch of size $\tilde{O}(n/\eps^{\frac{5}{3}})$.
\end{theorem}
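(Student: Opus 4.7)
The plan is to combine the correctness already established in Lemma~\ref{lem:S2-Estimator} with a direct size accounting for Algorithm~\ref{alg:sketch-basic-simple}. Since Lemma~\ref{lem:S2-Estimator} already shows that the estimator $I_G$ reconstructed from $\sk{G}$ yields a $(1+\eps,0.001)$-approximation to $x^T L x$ for any fixed query $x\in\mathbb{R}^n$, the only remaining task is to verify that the information actually stored by the algorithm fits in $\tilde O(n/\eps^{5/3})$ bits, and in particular that the estimator $I_G$ can indeed be evaluated from this information.

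First I would argue evaluability. The four summands of $I_G$ in Equation~\eqref{eq:estimator} use exactly the quantities the algorithm stores: the weighted degrees $\{\delta_u\}_{u\in V}$ (line~2), the complete adjacency of small vertices $u\in\mathcal{S}$ together with their edge weights (line~4), which suffices for both the second and the third summand since in the third summand one of the endpoints lies in $\mathcal{S}$, and finally the values $\delta^{\mathcal{L}}_u$ and the $\alpha$ sampled $\mathcal{L}$-$\mathcal{L}$ edges for each $u\in\mathcal{L}$ (lines~6--9), which determine the counts $Y_u^v$ needed for the last summand.

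Next I would bound the size. Assuming all weights are polynomially bounded, each stored number takes $O(\log n)$ bits, so it suffices to count stored quantities. The degree list contributes $O(n)$ numbers. For vertices in $\mathcal{S}$ we store every incident edge; by the definition of $\mathcal{S}$ we have $\delta_u \leq \gamma\alpha$, and since every edge has weight at least $\gamma$, the unweighted degree of $u\in\mathcal{S}$ is at most $\alpha$. Hence the total number of edges stored for $\mathcal{S}$-vertices is at most $|\mathcal{S}|\cdot\alpha \leq n\alpha$. For each vertex in $\mathcal{L}$ we store one value $\delta^{\mathcal{L}}_u$ and exactly $\alpha$ sampled edges, contributing at most $n\alpha$ more edges. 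Summing and multiplying by $O(\log n)$ for word length gives an overall sketch size of $\tilde O(n\alpha) = \tilde O(n/\eps^{5/3})$, as claimed.

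The only potential subtlety, which I would address briefly, is that our count of $\mathcal{S}$-incident edges is via the lower bound $\gamma$ on edge weights used only in the S2-graph (factor-$2$ weights) definition; this is why the sketch size does not blow up with the absolute scale $\gamma$. With evaluability, correctness (from Lemma~\ref{lem:S2-Estimator}), and the $\tilde O(n/\eps^{5/3})$ size bound, Theorem~\ref{thm:basic-simple} follows immediately. I do not anticipate a genuine obstacle here; the work has essentially been done in the preceding lemma, and the theorem is best viewed as packaging those ingredients.
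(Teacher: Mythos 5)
Your argument is correct and is essentially the (omitted) proof the paper has in mind: correctness is exactly Lemma~\ref{lem:S2-Estimator}, and the size bound follows from counting $O(n)$ degrees, at most $\alpha$ incident edges per vertex in $\mathcal S$ (since $\delta_u\le\gamma\alpha$ and every edge weight is at least $\gamma$), and $\alpha$ sampled edges plus one value $\delta^{\mathcal L}_u$ per vertex in $\mathcal L$, giving $\tilde O(n\alpha)=\tilde O(n/\eps^{5/3})$ bits. The paper states the theorem without an explicit proof, and your packaging, including the key observation that the weight lower bound $\gamma$ caps the unweighted degree of $\mathcal S$-vertices by $\alpha$, is the right one.
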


\paragraph{General Graphs}
Now let us extend our result to general positively weighted simple graphs $G = (V, E, w)$. We now require $w_{\max}/w_{\min} = \text{poly}(n)$.

%

The following lemma will be used in the analysis of our sketching algorithms.
\begin{lemma}
\label{lem:spielman}
Given an undirected positively weighted graph $G = (V, E, w)$ with ${w_{\max}}/{w_{\min}} = \text{poly}(n)$, there is an algorithm that takes $G$ as the input, and output a graph $\tilde G = (V, \tilde E, \tilde w)$ such that
  \begin{enumerate}
    \item $\tilde G$ is a $(1+\eps, 0.001)$-spectral-sketch of $G$ of size $\tilde{O}({n}/{\eps^2})$ bits.
    \item ${\tilde w_{\max}}/{ \tilde w_{\min}}$ is bounded by $\text{poly} (n)$.
  \end{enumerate}
\end{lemma}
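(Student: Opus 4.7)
The plan is to obtain $\tilde G$ by directly applying an off-the-shelf spectral sparsification algorithm, such as Batson--Spielman--Srivastava~\cite{BSS14} or the effective-resistance sampling construction of Spielman--Srivastava~\cite{SS11}. Run on $G$ with accuracy parameter $\eps$, either algorithm produces a reweighted subgraph $\tilde G$ on $V$ with only $\tilde O(n/\eps^2)$ edges satisfying, with probability at least $0.999$,
\[
(1-\eps)\, x^T L(G)\, x \;\le\; x^T L(\tilde G)\, x \;\le\; (1+\eps)\, x^T L(G)\, x \qquad \text{for all } x\in\R^n.
\]
This ``for all'' guarantee trivially implies the weaker ``for each'' $(1+\eps,0.001)$-spectral-sketch property required by item~(1). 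So the proof reduces to verifying two quantitative claims: the bit-size of the output and the weight ratio $\tilde w_{\max}/\tilde w_{\min}$.

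For item~(2), I will trace the reweighting used by the sparsifier. In effective-resistance sampling each retained edge $e$ receives weight $\tilde w(e)=w(e)/p_e$ where $p_e=\min\{1,\,\Theta(\log n)\cdot w(e)\,R_e/\eps^2\}$ and $R_e$ is the effective resistance of $e$ in $G$. Since $p_e\le 1$, we have $\tilde w(e)\ge w(e)\ge w_{\min}$, so $\tilde w_{\min}\ge w_{\min}$. For the upper bound, I use the standard Rayleigh--monotonicity bound $R_e \geq 1/\min(\delta_u(G),\delta_v(G))$ for $e=(u,v)$; combined with $\delta_u(G),\delta_v(G)\le n\,w_{\max}$, this gives $p_e \ge \Theta(\log n/\eps^2)\cdot w_{\min}/(n\,w_{\max}) = 1/\textrm{poly}(n)$. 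Therefore $\tilde w(e) = w(e)/p_e \le w_{\max}\cdot\textrm{poly}(n)$, and
\[
\tilde w_{\max}/\tilde w_{\min} \;\le\; (w_{\max}/w_{\min})\cdot \textrm{poly}(n) \;=\; \textrm{poly}(n),
\]
establishing item~(2).

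For item~(1), there are $\tilde O(n/\eps^2)$ edges to store. Thanks to item~(2), every retained edge weight lies in a polynomially bounded range, so each weight can be rounded to the nearest power of $1+\eps/n$ and encoded in $O(\log(n/\eps))$ bits; endpoint labels use $O(\log n)$ bits. The rounding perturbs every Laplacian quadratic form by at most a $1+O(\eps/n)$ factor, which is absorbed into the $(1\pm\eps)$ guarantee. The total bit complexity is $\tilde O(n/\eps^2)$, giving item~(1).

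I do not anticipate a serious obstacle; the main care is in the $R_e$ lower bound that certifies a polynomial floor on the sampling probabilities. If one prefers to avoid that calculation, a clean fallback is to simply floor $p_e$ at $1/n^{10}$: the extra edges discarded by this truncation can each contribute at most $n\cdot w_{\max}$ to any Laplacian quadratic form and there are at most $\binom{n}{2}$ of them, so rescaling by $w_{\min}$ shows that their cumulative effect is $1/\textrm{poly}(n)\ll \eps$ times any non-trivial quadratic form, and the spectral approximation is preserved.
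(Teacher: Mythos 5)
Your argument is correct, but it follows a genuinely different route from the paper's. The paper runs the BSS sparsifier and then \emph{post-processes} it: after rescaling so $w_{\min}=1$, it deletes all edges of weight below $1/n^D$ from the sparsifier $H$, and argues this perturbs $x^T L(H)x$ by only $O(n^2/n^D)$ for unit $x$ orthogonal to the all-ones vector, which is negligible because $x^T L(H)x \geq (1-\eps)\lambda_1(L(G)) \geq (1-\eps)/n^2$. You instead avoid post-processing by arguing that a specific sparsifier, the effective-resistance (Spielman--Srivastava) one, \emph{already} outputs polynomially bounded weights: $p_e \le 1$ gives $\tilde w(e)\ge w(e)\ge w_{\min}$, and your bound $R_e \ge 1/\min(\delta_u,\delta_v)$ (Rayleigh monotonicity: short all vertices other than $u$) together with $\delta_u \le n\,w_{\max}$ gives $p_e \ge 1/\mathrm{poly}(n)$, hence $\tilde w(e) \le w_{\max}\cdot\mathrm{poly}(n)$. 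Both routes are sound. The paper's approach is black-box and applies to any spectral sparsifier, in particular to the deterministic BSS construction with the optimal $O(n/\eps^2)$ edge count, at the price of the separate spectral-gap argument; yours is tied to a Bernoulli-sampling construction (with $O(n\log n/\eps^2)$ edges, still within the $\tilde O(n/\eps^2)$ budget) but is more self-contained. Note that your proposed ``fallback''---discarding edges with tiny $p_e$ and bounding the spectral damage---is in essence the paper's post-processing step, and to complete it you would need the same $\lambda_1(L(G))\ge 1/\mathrm{poly}(n)$ lower bound that you leave implicit there.
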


\begin{proof}
We first run the spectral sparsification algorithm by Batson, Spielman and Srivastava \cite{BSS14}, which produces $(1+\eps)$-spectral sparsifier $H$ of $G$ with size of $\tilde{O}({n}/{\eps^2})$ bits with probability $0.999$.

Assume (by rescaling) that the edge weights in $G$ are between $1$ and $n^C$ for a constant $C > 0$. Since $H$ is also a cut sparsifier, all weights in $H$ must be at most $2n^C$, as otherwise $H$ would not preserve a specific cut up to a factor of $2$. Let $\tilde{G}$ be formed from $H$ by removing all edge weights smaller than $1/n^D$ for a sufficiently large constant $D > 0$. Let $x$ be a unit vector orthogonal to $1^n$, and assume the underlying graph with Laplacian $L(\tilde{G})$ is connected (otherwise we could have first split it into connected components). Then $x^T L(\tilde{G}) x \geq x^T L(H) x - O(n^2/n^D).$ Since $x^T L(H) x \geq (1-\eps)x^TL(G) x \geq (1-\eps) \lambda_1(L(G))$, and $\lambda_1(L(G)) \geq 1/n^2$ by \cite{gy}, it follows that $x^T L(\tilde{G}) x \geq (1-\eps) x^T L(H) x$, assuming $D > 0$ is a sufficiently large constant. Since also $x^T L(\tilde{G}) x \leq x^T L(H) x$, it follows that $\tilde{G}$ is a $(1 + O(\eps))$-spectral sparsifier of $G$ with edge weights that are between $1/n^D$ and $2n^C$.
\end{proof}

%

The following observation is due to the linearity of Laplacian.
\begin{observation}
  \label{ob:Laplacian}
  Given any simple graph $G = (V, E, w)$,  let $L$ be its Laplacian. Let $E_1, E_2, \ldots, E_k$ be a disjoint partition of $E$, and let $G_i = (V, E_i, w)$. Let $L_i$ be the Laplacian of $G_i$. We have $x^TLx = \sum_{i=1}^kx^TL_ix$ for any $x\in \mathbb{R}^{n}$.
\end{observation}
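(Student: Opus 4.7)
The plan is to prove this observation by invoking the well-known edge-by-edge decomposition of a graph Laplacian, which writes $L$ as a sum of rank-one contributions, one per edge. Concretely, I would first recall the standard identity $L = \sum_{(u,v) \in E} w(u,v)\, b_{uv} b_{uv}^T$, where $b_{uv} = e_u - e_v$ and $e_u, e_v$ are standard basis vectors of $\R^n$. This identity is immediate from the definition of the Laplacian (diagonal entries equal weighted degrees, off-diagonal entries equal negated edge weights).

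Next I would observe that since $\{E_1,\ldots,E_k\}$ is a disjoint partition of $E$, the sum defining $L$ decomposes as
\[
 L = \sum_{i=1}^k \Big(\sum_{(u,v)\in E_i} w(u,v)\, b_{uv} b_{uv}^T\Big) = \sum_{i=1}^k L_i,
\]
simply because each term $w(u,v)\, b_{uv} b_{uv}^T$ appears in exactly one of the inner sums. Multiplying by $x$ on both sides and taking the quadratic form (i.e., multiplying by $x^T$ on the left and $x$ on the right) then yields $x^T L x = \sum_{i=1}^k x^T L_i x$ for every $x \in \R^n$, as desired.

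As a sanity check one can verify this directly at the level of quadratic forms without matrices: recall $x^T L x = \sum_{(u,v)\in E} w(u,v)(x_u - x_v)^2$, and partitioning the sum over $E$ according to $E_1,\ldots,E_k$ gives exactly $\sum_i x^T L_i x$. There is no real obstacle here; the statement is essentially a restatement of the linearity of the map $G \mapsto L(G)$ with respect to edge sets (with fixed weights), so the only thing to be careful about is ensuring that every edge is counted in precisely one $E_i$, which is guaranteed by the partition assumption.
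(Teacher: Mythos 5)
Your proof is correct, and it makes explicit exactly what the paper treats as self-evident: the paper introduces this Observation with the remark that it is ``due to the linearity of the Laplacian'' and offers no further argument. Both the rank-one edge decomposition $L=\sum_{(u,v)\in E} w(u,v)\,b_{uv}b_{uv}^T$ and the equivalent quadratic-form identity $x^TLx=\sum_{(u,v)\in E} w(u,v)(x_u-x_v)^2$ are the standard ways to see this, and your appeal to disjointness of the $E_i$ is precisely what justifies splitting the sum.
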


Our high-level idea is to reduce general graphs to S2-graphs. Based on Observation~\ref{ob:Laplacian}, we can first partition the edge set $E$ into $E_1, \ldots, E_k\ (k = \Theta(\log n))$ such that for any $e \in E_i$ we have $w(e) \in [2^{i-1}w_{\min}, 2^iw_{\min})$, and then sketch each subgraph $G_i$ separately. Finally, at the time of a query, we simply add all estimators $I_{G_i}$ together. Thus it suffices to focus on a  graph with all weight $w(e) \in [\gamma, 2\gamma) $ for some $\gamma > 0$.

We next partition each subgraph $G_i$ further so that each component $P$ satisfies $h_P \ge c_{\alpha}\eps^{\frac{1}{3}}$. Once this property is established, we can use Algorithm~\ref{alg:sketch-basic-simple} to sketch each $P$ separately. We describe this preprocessing step in Algorithm~\ref{alg:preprocessing}.

\begin{algorithm}[t]
\KwIn{A graph $G = (V, E, w)$ such that for any $e \in E$, $w(e) \in [\gamma, 2\gamma)$; a parameter $h > 0$}
\KwOut{
A set $\mathcal{P}$ of edge disjoint components of $G$ such that for each $P \in \mathcal{P}$, $h_P > h$; and a graph $Q$ induced by the rest of the edges in $G$.
}
$\mathcal{P} \gets \{ G \}$, $Q \gets \emptyset$\;
\While {$\exists$ $P \in \mathcal{P}$ such that Cheeger's constant $h_P \leq h$} {
  Find an arbitrary cut $(S, \overline S)$ in $P$, such that $\Phi(S) \leq h$\;
  Replace $P$ with its two subgraphs $P(S)$ and $P(\overline{S})$ in $\mathcal{P}$\;
  Add all edges in the cut $(S, \overline S)$ into $Q$\;
}

\Return{$(\mathcal{P}$, $Q$)}\;
\caption{{\bf Spectral-Preprocessing}($G, h$)}
\label{alg:preprocessing}
\end{algorithm}

The $Q$ returned by Algorithm~\ref{alg:preprocessing} is a set of cut edges we will literally keep. The following lemma bounds the size of $Q$. The proof is folklore, and we include it for completeness.
\begin{lemma}
  \label{lem:boundQ}
  For any positively weighted graph $G = (V, E, w)$ such that for any $e \in E$, $w(e) \in [\gamma, 2\gamma)$ for some $\gamma > 0$, the number of edges of $Q$ returned by Algorithm \ref{alg:preprocessing} \text{Spectral-Preprocessing}($G, h$) is bounded by $O(h m \log m)$.
\end{lemma}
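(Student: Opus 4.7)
My plan is to prove $|Q| = O(hm\log m)$ via a two-step charging argument that assigns each cut edge added to $Q$ to non-cut edges on the smaller side of the split that produced it. I assume $h \le 1/4$ throughout, since otherwise the claim is trivial because $|Q| \le m \le O(hm\log m)$.

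First I would establish a per-iteration bound. When the algorithm splits a component $P = (V_P, E_P)$ along a cut $(S,\bar{S})$ with $\Phi_P(S) \le h$, I can assume WLOG that $\vol_P(S) \le \vol_P(V_P \setminus S)$. The weight range $w(e) \in [\gamma, 2\gamma)$ yields $|\partial(S,\bar S)| \le w_P(S,\bar S)/\gamma \le h\,\vol_P(S)/\gamma$. Decomposing the volume of the smaller side into a contribution from internal edges and a contribution from cut edges gives
\[
  \vol_P(S) = 2\sum_{e \in E(P(S))} w(e) + \sum_{e \in \partial(S,\bar S)} w(e) \le 4\gamma\,|E(P(S))| + 2\gamma\,|\partial(S,\bar S)|.
\]
Substituting this into $|\partial(S,\bar S)| \le h\,\vol_P(S)/\gamma$ and rearranging (using $h \le 1/4$) yields $|\partial(S,\bar S)| \le 8h \cdot |E(P(S))|$.

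Second, I would charge each of these cut edges to a distinct non-cut edge of the smaller subgraph $P(S)$, so each non-cut edge in the smaller side bears a charge of $O(h)$ at this iteration. Consequently a fixed edge $e \in E$ accumulates a total charge of $O(h)$ times the number of iterations in which $e$ lies inside the smaller-volume side of the split of its current component. Each such event at least halves the volume of the component containing $e$, since $\vol(P(S)) \le \vol_P(S) \le \vol_P(V_P)/2$. Because $e$ itself contributes $2w(e) \ge 2\gamma$ to the volume of whichever component it sits in, while the initial volume satisfies $\vol(G) \le 4\gamma m$, this halving can occur at most $\log_2(2m) = O(\log m)$ times.

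Summing the charge across all $m$ edges yields $|Q| \le m \cdot O(h) \cdot O(\log m) = O(hm\log m)$, as claimed. The main (mild) obstacle is the first step, where one must carefully peel off the cut's own weight from $\vol_P(S)$ before the inequality can be solved for $|\partial(S,\bar S)|$ in terms of $|E(P(S))|$ alone; the volume-halving argument in the second step is a standard tool used in analyses of recursive graph partitioning.
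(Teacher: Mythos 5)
Your proof is correct and follows essentially the same volume-halving charging argument as the paper: bound $|\partial(S,\bar S)|$ in terms of $h$ and the size of the smaller side in each split, then argue that any fixed edge can land on the smaller-volume side only $O(\log m)$ times. The one substantive difference is that the paper charges the cut edges to \emph{all} edges touching $S$ (which implicitly includes the cut edges themselves, since they appear in $\vol_P(S)$), while you charge only to the internal edges $E(P(S))$, which requires your extra algebraic step (peeling off the cut edges' own volume contribution, using $h\le 1/4$) to obtain $|\partial(S,\bar S)|\le 8h\,|E(P(S))|$; your version is a bit more rigorous at the cost of this one-line computation. A small slip: you say you charge each cut edge to a ``distinct'' non-cut edge, but since $|\partial(S,\bar S)|$ need not be at most $|E(P(S))|$, the intended (and correct) statement is to \emph{spread} the charge uniformly, giving each internal edge an average charge of $O(h)$.
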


\begin{proof}
In Algorithm \ref{alg:preprocessing}, we recursively split the graph $G = (V, E, w)$ into connected components until for every component $P = (V_P, E_P)$, its Cheeger's constant $h_P$ = $\inf_{S\subset V_P}\Phi_C(S) > h$. Consider a single splitting step: We find a cut $(S, \bar{S})$ with $\vol_P(S) \le \vol_P(\bar{S})$ in $P$ such that $\Phi_P(S) = \frac{w_P(S, \bar{S})}{\vol_P(S)} \leq h$. We can think in this step each edge in $\vol_P(S)$ contributes at most $h$ edges to $Q$ on average, while edges in $\vol_P(\bar{S})$ contribute nothing to $Q$. We call $S$ the {\em Smaller-Subset}.

By the definition of volume, we have $\vol_P(V_P) = \vol_P(S\cup\overline{S}) = \vol_P(S) + \vol_P(\overline{S}) \geq 2 \vol_P(S)$, and $\vol_G(V) = 2 m$. Thus in the whole recursion process, each edge will appear at most $O(\log m)$ times in Smaller-Subsets, hence will contribute at most $O(h \log m)$ edges to $Q$. Therefore the number of edges of $Q$ is bounded by $O(h m \log m)$ words.
\end{proof}

Now we show the main algorithm for general graphs and analyze its performance. The algorithm is described in Algorithm~\ref{alg:sketch-basic-general}.

\begin{algorithm}[t]
\KwIn{$G = (V, E, w)$ with all weights in $[w_{\min}, w_{\max}]$; a quality control parameter $\eps$}
\KwOut{A $(1+\eps, 0.001)$-spectral-sketch $\sk{G}$ of $G$}
Edge-disjointly partition $G$ into $\mathcal{H} = \{H_1, \ldots, H_{k}\}$ s.t. all edges in $H_i$ have weights in $[2^{i-1}w_{\min}, 2^iw_{\min})$\;
$\sk{G} \gets \emptyset$\;
\ForEach{$H \in \mathcal{H}$} {
  $(\mathcal{P}, Q) \gets $ {Spectral-Preprocessing}($H, \alpha\eps^2$)\;
  Add $Q$ into $\sk{G}$\;
  \For {$P \in \mathcal{P}$} {
    Add \textbf{Spectral-S2}($P, \eps$) into $\sk{G}$\;
  }
}

\Return{$\sk{G}$}\;
\caption{{\bf Spectral-Basic}($G, \eps$)}
\label{alg:sketch-basic-general}
\end{algorithm}

The following lemma summarize the functionality of Algorithm~\ref{alg:sketch-basic-general}.

\begin{lemma}
\label{lem:basic-general}
Given a graph $G = (V, E, w)$, let $\sk{G} \gets$~\text{Spectral-Basic}($G, \eps$), then for any given $x \in \mathbb{R}^{n}$, $\sk{G}$ can be used to construct an unbiased estimator $I_G$ which gives a $(1+\eps, 0.001)$-approximation to $x^T L(G) x$. The sketch $\sk{G}$ uses $\tilde{O}(\eps^{\frac{1}{3}}m + n/\eps^{\frac{5}{3}})$ bits.
\end{lemma}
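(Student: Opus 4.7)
The plan is to establish unbiasedness and the approximation guarantee by exploiting linearity of the Laplacian across the partition produced by Algorithm~\ref{alg:sketch-basic-general}, and then to bound the space by adding the contributions from the explicit cut edges $Q$ and from the S2-sketches of the surviving components.

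First I would argue correctness. By Observation~\ref{ob:Laplacian}, $x^T L(G) x = \sum_{H \in \mathcal{H}} x^T L(H) x$, and within each weight class $H$, applying \textbf{Spectral-Preprocessing} with parameter $h = \alpha\eps^2$ yields an edge-disjoint decomposition $E(H) = Q \sqcup \bigsqcup_{P \in \mathcal{P}} E(P)$, so $x^T L(H) x = x^T L(Q) x + \sum_{P \in \mathcal{P}} x^T L(P) x$. Since $Q$ is stored explicitly, its contribution is evaluated exactly. Each component $P \in \mathcal{P}$ satisfies (i) all its edge weights lie in $[\gamma, 2\gamma)$ (inherited from $H$) and (ii) $h_P > \alpha\eps^2 = c_\alpha\eps^{1/3}$ (by the termination condition of the preprocessing), so $P$ is an S2-graph and Lemma~\ref{lem:S2-Estimator} applies to $I_P$. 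Taking the natural aggregate estimator
\begin{equation*}
  I_G \;=\; \sum_{H \in \mathcal{H}} \Bigl( x^T L(Q_H) x + \sum_{P \in \mathcal{P}_H} I_P \Bigr),
\end{equation*}
linearity of expectation then gives $\E[I_G] = x^T L(G) x$.

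Next I would handle the variance. From the bound derived inside the proof of Lemma~\ref{lem:S2-Estimator} (equation~\eqref{eq:c-1} together with the Cheeger-based bound on $\|D^{1/2}x\|_2^2$ applied to $P$), we have $\var{I_P} \le \frac{8}{\alpha^6 \eps^8} (x^T L(P) x)^2$. Since the $I_P$'s are independent across components and weight classes, and every $x^T L(P) x \ge 0$,
\begin{equation*}
  \var{I_G} \;=\; \sum_{H}\sum_{P \in \mathcal{P}_H} \var{I_P} \;\le\; \frac{8}{\alpha^6 \eps^8} \sum_{H}\sum_{P} \bigl(x^T L(P) x\bigr)^2 \;\le\; \frac{8}{\alpha^6 \eps^8} \bigl(x^T L(G) x\bigr)^2.
\end{equation*}
Plugging $\alpha = c_\alpha \eps^{-5/3}$ gives $\var{I_G} \le O(c_\alpha^{-6}\eps^2)(x^T L(G) x)^2$, and taking $c_\alpha$ large enough lets Chebyshev's inequality deliver the $(1+\eps, 0.001)$-approximation guarantee.

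Finally I would bound the space. There are $k = O(\log(w_{\max}/w_{\min})) = O(\log n)$ weight classes. In class $H$ with $m_H$ edges, Lemma~\ref{lem:boundQ} gives $|Q_H| = O(h m_H \log m_H) = \tilde O(\eps^{1/3} m_H)$, and summing over classes contributes $\tilde O(\eps^{1/3} m)$ words. For the S2-pieces, Theorem~\ref{thm:basic-simple} says each $P$ on $n_P$ vertices uses $\tilde O(n_P/\eps^{5/3})$ bits; since the components within a single class are vertex-disjoint we have $\sum_{P \in \mathcal{P}_H} n_P \le n$, and summing over the $O(\log n)$ classes gives $\tilde O(n/\eps^{5/3})$. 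The two terms together yield the claimed $\tilde O(\eps^{1/3} m + n/\eps^{5/3})$ bit bound.

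The main subtlety I would watch is the variance aggregation step: the naive per-component Chebyshev bound yields a constant failure probability per $P$, which cannot survive a union bound over $\text{poly}(n)$ components. The trick is to resist taking union bounds and instead sum variances directly, relying on independence of the $I_P$ and on the monotonicity $\sum_P (x^TL(P)x)^2 \le (\sum_P x^TL(P)x)^2$; once this is in place, tuning $c_\alpha$ absorbs the $O(\log n)$ loss coming from the number of weight classes.
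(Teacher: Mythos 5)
Your proposal is correct and follows essentially the same route as the paper's proof: express $I_G$ as the sum of explicit $x^T L(Q_H) x$ terms plus the per-component S2-estimators $I_P$, use linearity of the Laplacian for unbiasedness, bound the total variance via independence of the $I_P$ and the PSD inequality $\sum_P (x^T L(P) x)^2 \le (x^T L(G) x)^2$, apply Chebyshev, and account for space by $\tilde O(\eps^{1/3} m_H)$ cut edges (Lemma~\ref{lem:boundQ}) plus $\tilde O(n/\eps^{5/3})$ for the vertex-disjoint S2-sketches, summed over $O(\log n)$ weight classes. One minor remark: your closing sentence about tuning $c_\alpha$ to ``absorb the $O(\log n)$ loss from weight classes'' is unnecessary --- the variance bound already collapses to $O(\eps^2)(x^T L(G) x)^2$ with no $\log n$ factor precisely because of the $\sum (\cdot)^2 \le (\sum \cdot)^2$ step you invoke; the $\log n$ factor appears only in the space bound, where it is absorbed by the $\tilde O$ notation, not in the variance.
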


\begin{proof}
In Algorithm~\ref{alg:sketch-basic-general}, $G$ is partitioned into a set of edge disjoint components $\mathcal{P} = \{P_1, \ldots, P_t\}$, and we build a sketch $\sk{P_i}$ for each $P_i \in \mathcal{P}$ from which we can construct an unbiased estimator $I_{P_i}$ for $x^T L(P_i) x$ with variance bounded by $O(\eps^2 (x^T L(P_i) x)^2)$ according to Lemma~\ref{lem:S2-Estimator}. Moreover, we have stored all edges between these components; let $Q$ be the induced subgraph of these edges.  Our  estimator to $x^T L(G) x$ is
\begin{equation}
\label{eq:est-basic-general}
I_G = \sum_{i = 1}^t I_{P_i} + x^T L(Q) x,
\end{equation}
where $I_{P_i}$ defined in Equation~(\ref{eq:estimator}) is an unbiased estimator of $x^T L(P_i) x$.
By the linearity of Laplacian (Observation~\ref{ob:Laplacian}), $I_G$ is an unbiased estimator of $x^T L(G) x$. Now consider its variance:
\begin{eqnarray*}
\var{I_G} &=& \var{\sum_{1\leq i\leq t} I_{P_i} + x^T L(Q) x} \\
&\leq& O(\eps^2)\sum_{1\leq i \leq t}\left(x^T L(P_i) x\right)^2 \quad \quad \quad \text{(due to the independence of $P_i$'s)}\\
&\leq& O(\eps^2) \left(\sum_{1\leq i \leq t}x^T L(P_i) x+x^T L(Q) x \right)^2 \quad \text{($L(P_i)$ and $L(Q)$ are  positive semidefinite)}\\
&=& O(\eps^2) \left(x^T L(G) x\right)^2.
\end{eqnarray*}
The correctness follows from a Chebyshev's inequality.

Now we bound the size of the sketch $\sk{G}$. Consider a particular $H$ at Line $3$ of Algorithm~\ref{alg:sketch-basic-general}. For each $P = (V_P, E_P) \in \mathcal{P}$, the size of $\sk{P}$ by running {\em Spectral-S2$(P, \eps)$} at Line $7$ is bounded by $\tilde{O}(\abs{V_P}/\eps^{\frac{5}{3}})$ bits (Theorem~\ref{thm:basic-simple}); and for the remaining subgraph $Q = (V_Q, E_Q)$, $\sk{Q}$ is bounded by $\tilde{O}(\eps^{\frac{1}{3}} \abs{E_Q})$ bits (Lemma~\ref{lem:boundQ}). Thus
\begin{eqnarray*}
\text{size}(\sk{P_i}) & = & \text{size}(\sk{Q}) + \sum_{P \in \mathcal{P}} \text{size}(\sk{P}) \\
& \le & \tilde{O}\left(\eps^{\frac{1}{3}} \abs{E_Q} + \sum_{P \in \mathcal{P}}  \abs{V_P} /\eps^{\frac{5}{3}}\right) \\
& \le & \tilde{O}\left(\eps^{\frac{1}{3}} m + n/\eps^{\frac{5}{3}}\right) \text{bits}. \quad \quad (\text{$\{P \in \mathcal{P}\}$ are vertex-disjoint})
\end{eqnarray*}
Since there are $k = \Theta(\log n)$ of $H_i$'s in $\mathcal{H}$, the size of $\sk{G}$ is bounded by
 $\tilde{O}\left(n/\eps^{\frac{5}{3}} + \eps^{\frac{1}{3}} m\right) \cdot \log n = \tilde{O}\left(n/\eps^{\frac{5}{3}} + \eps^{\frac{1}{3}} m\right)$ bits.
\end{proof}

We conclude this section with the following theorem.
\begin{theorem}
\label{thm:basic-general}
There is a sketching algorithm which given an undirected positively weighted graph $G = (V, E, w)$ with $w_{\max}/w_{\min} = \text{poly}(n)$, outputs a $(1+\eps, 0.01)$-spectral-sketch of size $\tilde{O}(n/\eps^{\frac{5}{3}})$.
\end{theorem}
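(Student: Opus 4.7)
The plan is to compose the two tools already in place: first apply the spectral sparsifier of Lemma~\ref{lem:spielman} to reduce the edge count, then apply the basic sketch \textbf{Spectral-Basic} of Lemma~\ref{lem:basic-general} to the sparsified graph. Concretely, given $G=(V,E,w)$ with $w_{\max}/w_{\min} = \operatorname{poly}(n)$, I would first run the algorithm of Lemma~\ref{lem:spielman} with accuracy $\eps/3$ to obtain a graph $\tilde G=(V,\tilde E,\tilde w)$ that is a $(1+\eps/3,0.001)$-spectral-sketch of $G$, has $\tilde m = \tilde O(n/\eps^2)$ edges, and still satisfies $\tilde w_{\max}/\tilde w_{\min}=\operatorname{poly}(n)$. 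Then I would run \textbf{Spectral-Basic}($\tilde G,\eps/3$) to obtain the final sketch $\sk{G}$.

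For the size, Lemma~\ref{lem:basic-general} applied to $\tilde G$ yields a sketch of size
\[
  \tilde O\bigl(\eps^{1/3}\cdot \tilde m + n/\eps^{5/3}\bigr)
  \;=\;\tilde O\bigl(\eps^{1/3}\cdot n/\eps^2 + n/\eps^{5/3}\bigr)
  \;=\;\tilde O\bigl(n/\eps^{5/3}\bigr),
\]
and the sparsifier $\tilde G$ itself contributes $\tilde O(n/\eps^2)$ bits of overhead only if stored; but in the composition we discard $\tilde G$ after sketching, so this does not affect the final bound. Thus the total size is $\tilde O(n/\eps^{5/3})$.

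For correctness, given a query $x\in\R^n$, the estimator is just the one from Lemma~\ref{lem:basic-general} applied to $\tilde G$, which with probability at least $0.999$ returns a value in $(1\pm\eps/3)\cdot x^T L(\tilde G)x$. By the guarantee of Lemma~\ref{lem:spielman}, with probability at least $0.999$ we have $x^T L(\tilde G)x \in (1\pm\eps/3)\cdot x^T L(G)x$ (this guarantee is over the construction of $\tilde G$ and holds for the fixed query $x$ in the ``for each'' sense). By a union bound, with probability at least $0.998 > 0.99$ the estimate lies in $(1\pm\eps/3)^2\cdot x^T L(G)x \subseteq (1\pm\eps)\cdot x^T L(G)x$, which gives the desired $(1+\eps,0.01)$-spectral-sketch.

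I do not foresee a serious technical obstacle here: both ingredients are already established, Lemma~\ref{lem:spielman} explicitly preserves the polynomial weight-ratio hypothesis required by Lemma~\ref{lem:basic-general}, and the error composition is the standard $(1\pm\eps/3)^2\subseteq(1\pm\eps)$ estimate. The only thing worth double-checking is that the ``for each'' guarantee of the sparsifier still yields a valid bound on $x^T L(\tilde G)x$ for the single query $x$ at hand (which it does, since a spectral sparsifier is in fact a ``for all'' object), and that the failure probabilities of the two independent randomized constructions add to at most $0.01$, as arranged by the constants above.
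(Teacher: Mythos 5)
Your proposal is correct and matches the paper's own proof: both first apply the spectral sparsifier of Lemma~\ref{lem:spielman} to reduce to $\tilde{O}(n/\eps^2)$ edges with a polynomial weight ratio, then feed the sparsified graph into \textbf{Spectral-Basic} (Lemma~\ref{lem:basic-general}), getting size $\tilde{O}(\eps^{1/3}\cdot n/\eps^2 + n/\eps^{5/3}) = \tilde{O}(n/\eps^{5/3})$. You are actually somewhat more explicit than the paper in composing the two $(1\pm\eps/3)$ factors and adding the two small failure probabilities; the paper elides this bookkeeping, but your accounting is exactly what makes the stated $(1+\eps,0.01)$ guarantee hold.
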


\begin{proof}
The algorithm is as follows: we first run the spectral sparsification algorithm in \cite{BSS14}, obtaining a graph $\tilde{G} = (V, \tilde{E}, \tilde{w})$. By Lemma~\ref{lem:spielman} we have $|\tilde{E}| = \tilde{O}(n/\eps^2)$ and $\tilde{w}_{\max}/\tilde{w}_{\min} = \text{poly}(n)$. We then run Algorithm~\ref{alg:sketch-basic-general}, getting a $(1+\eps, 0.01)$-spectral-sketch of size
$
\tilde{O}\left(n/\eps^{\frac{5}{3}} + \eps^{\frac{1}{3}} |\tilde{E}|\right) = \tilde{O}(n / \eps^{\frac{5}{3}})
$.
\end{proof}

\subsubsection{An Improved Sketching Algorithm}
\label{sec:improve}
In this section, we further reduce the space complexity of the sketch to $\tilde O( {n}/{\eps^{\frac{8}{5}}} )$. At a high level, such an improvement is achieved by partitioning the graph into more subgroups (compared with a hierarchical partition on weights, and $\mathcal{S}(G)$ and $\mathcal{L}(G)$ for each weight class in the basic approach), in each of which vertices have similar unweighted degrees {\em and} weighted degrees. An estimator based on a set of sampled edges from such groups will have smaller variance. This finer partition, however, will introduce a number of technical subtleties, as we will describe below.

We set the constant $\beta = c_\beta \eps^{-\frac{8}{5}}$ throughout this section where $c_\beta$ is a constant.

\paragraph{Special Graphs}
We first consider a class of simple graphs.
\begin{definition}[S3-graph]
\label{def:S3-graph}
We say an undirected weighted graph $G = (V, E, w)$ is an {\em S3-graph} (reads ``simple type-3 graph") if we can assign directions to its edges in a certain way, getting a directed graph $\vec{G} = (V, \vec{E}, w)$
satisfying the following.
\begin{enumerate}
\item All weights $\{w(e)\ |\ e \in \vec{E}\}$ are within a factor of $2$, i.e., for any $e \in \vec{E}$, $w(e) \in [\gamma, 2\gamma)$ for some constant $\gamma > 0$.

\item For each $u \in V$, $\uwoutdeg{u} \in [2^\kappa\beta, 2^{\kappa+1}\beta)$, where $2^\kappa\beta \le 1/\eps^2$.

\end{enumerate}
We call $\vec{G}$ the {\em buddy} of $G$. Note that $\vec{G}$ is not necessarily unique, and we just need to consider an arbitrary but fixed one. In this section we will assume that we can obtain such a buddy directed graph $\vec{G}$ ``for free", and will not specify the concrete algorithm.  Later when we deal with general graphs we will discuss how to find such a direction scheme.
\end{definition}

We still make use of Algorithm~\ref{alg:preprocessing} to partition the graph $G$ into components such that the Cheeger's constant of each component is larger than $h = \beta \eps^2$. One issue here is that, after storing and removing those cut edges (denoted by $Q$ in Algorithm~\ref{alg:preprocessing}), the second property of S3-graph may not hold, since the out-degree of some vertices in $G$'s buddy graph $\vec{G}$ will be reduced. Fortunately, the following lemma shows that the number of vertices whose degree will be reduced more than half is small, and we can thus afford to store all their out-going edges. The out-degree of the remaining vertices is within a factor of $4$, thus we can still effectively bound the variance of our estimator.

\begin{lemma}
\label{alg:improve-preprocessing}
If we run Algorithm~\ref{alg:preprocessing} on an S3-graph $G = (V, E, w)$ with $h = 2^{-\kappa}$, then
\begin{enumerate}
  \item At most $\tilde O(\beta n)$ cut edges (i.e., $Q$) will be removed from $G$.

  \item There are at most $\tilde{O}(2^{1-{\kappa}}n)$ vertices in $G$'s buddy graph $\vec{G}$ which will reduce their out-degrees by more than a half after the removal of $Q$.
\end{enumerate}
\end{lemma}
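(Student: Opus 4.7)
The plan is to derive both parts from a single edge-count bound: the total number of cut edges removed by Spectral-Preprocessing is small, so by a simple averaging argument, only few vertices can lose too much of their out-degree.

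First I would count edges. Since $G$ is an S3-graph, every vertex in the buddy $\vec{G}$ has out-degree in $[2^\kappa\beta, 2^{\kappa+1}\beta)$, so the number of edges satisfies $m = \sum_{u\in V} \uwoutdeg{u} \le 2^{\kappa+1}\beta\, n$. Applying Lemma~\ref{lem:boundQ} to Spectral-Preprocessing$(G,h)$ with $h=2^{-\kappa}$ immediately gives
\[
  |Q| \;=\; O(h\, m\log m) \;=\; O\bigl(2^{-\kappa}\cdot 2^{\kappa+1}\beta n\cdot \log n\bigr) \;=\; \tilde O(\beta n),
\]
which is part~1.

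For part~2, the key observation is that each edge of $G$ corresponds, in the buddy $\vec{G}$, to a single directed edge with a unique tail, so removing an edge from $G$ reduces the out-degree of exactly one vertex in $\vec{G}$ by one. Hence the total out-degree reduction, summed over all vertices of $\vec{G}$, is exactly $|Q|=\tilde O(\beta n)$. A vertex $u$ whose out-degree drops by more than half must lose more than $\uwoutdeg{u}/2\ge 2^{\kappa-1}\beta$ out-edges, so by a counting/averaging argument the number of such vertices is at most
\[
  \frac{|Q|}{2^{\kappa-1}\beta} \;=\; \frac{\tilde O(\beta n)}{2^{\kappa-1}\beta} \;=\; \tilde O(2^{1-\kappa}n),
\]
which is part~2.

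I do not anticipate a genuine obstacle: both parts reduce to the cut-edge bound of Lemma~\ref{lem:boundQ} together with the S3-graph out-degree lower bound, and the only mild subtlety is remembering that an undirected cut edge corresponds to exactly one directed edge in the buddy graph, so the out-degree reductions add up to $|Q|$ rather than $2|Q|$. A brief note justifying this correspondence, and checking that ``reducing out-degree by more than a half'' forces a loss of at least $2^{\kappa-1}\beta$ out-edges (using the lower bound $\uwoutdeg{u}\ge 2^\kappa\beta$ from the definition of an S3-graph), should be sufficient to close the argument.
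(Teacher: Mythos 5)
Your proposal is correct and follows essentially the same route as the paper: part~1 bounds $m$ via the S3-graph out-degree condition and applies Lemma~\ref{lem:boundQ}, and part~2 is the same counting argument using the lower bound $\uwoutdeg{u}\ge 2^\kappa\beta$ to conclude that a vertex losing more than half its out-degree accounts for at least $2^{\kappa-1}\beta$ removed edges. Your write-up is in fact a bit more careful than the paper's (you make explicit that each removed undirected edge corresponds to exactly one directed out-edge in the buddy graph, and you consistently use unweighted out-degrees $\uwoutdeg{u}$ where the paper's proof text has the typo $\woutdeg{u}$), but it is not a different argument.
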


\begin{proof}
Since $m = O(2^{\kappa} \beta n)$, $h = 2^{-{\kappa}}$ and $2^{\kappa} \beta = \tilde O(\frac{1}{\eps^2})$, Lemma \ref{lem:boundQ} directly gives the first part. For the second part, note that for each vertex $u$ we have $\woutdeg{u} \ge 2^\kappa \beta$, thus we need to remove at least $2^{\kappa - 1} \beta$ edges to reduce $\woutdeg{u}$ to $2^{\kappa - 1} \beta$. Therefore the number of such vertices is at most $\tilde{O}(\beta n/2^{\kappa - 1} \beta) = \tilde{O}(2^{1-{\kappa}}n)$.
\end{proof}

For each component $P = (V_P, E_P)$ after running Algorithm~\ref{alg:preprocessing}, let $\vec{P} = (V_P, \vec{E}_P)$ be its buddy directed graph. Slightly abusing the notation, define $\S(\vec{P}) = \{(u, v) \in \vec{E}_P\ |\ \uwoutdegP{u} < 2^{\kappa - 1}\beta\}$, and $\L(\vec{P}) = \{(u, v) \in \vec{E}_P\ |\ \uwoutdegP{u} \ge 2^{\kappa - 1}\beta\}$. We will again omit ``$(\vec{P})$" or ``$(P)$" when there is no confusion.

The sketch for an S3-graph $G$ is constructed using Algorithm~\ref{alg:improved-simple}.

\begin{algorithm}[t]
\KwIn{An S3-graph $G$ and a parameter $\eps$}
\KwOut{A $(1+\eps, 0.01)$-spectral-sketch of $G$}
$\{\mathcal{P}, Q\} \gets$ {\bf Spectral-Preprocessing}($G, \beta \eps^2$)\;
\ForEach{$P = (V_P, E_P, w) \in \mathcal{P}$}{
Let $\vec{P} = (V_P, \vec{E}_P, w)$ be its buddy directed graph\;

\ForEach{$u \in V_P$}{Add $\windegP{u}$ and $\delta_u(P)$  to $\sk{G}$\;}
Add $\S(\vec{P})$ to $\sk{G}$\;

\ForEach{$u \in V_P$}{
Sample $\beta = c_\beta \eps^{-\frac{8}{5}}$ edges with replacement from $\{(v, u) \in \L\}$, where the probability that $(v, u)$ is sampled is $w(v,u)/\windegP{u}$. Add sampled edges to $\sk{G}$\;
}

Add $Q$ to $\sk{G}$\;
}
\Return $\sk{G}$
\caption{{\bf Spectral-S3}($G, \eps$)}
\label{alg:improved-simple}
\end{algorithm}

It is easy to see that the size of $\sk{G}$ is bounded $\tilde{O}(2^{1-\kappa}\abs{V_P}) \cdot 2^{\kappa-1}\beta + \tilde{O}(\beta n)  + \tilde{O}(\beta n) = \tilde{O}(\beta n)$, where the first term in LHS is due to the definition of $\S(\vec{P})$ and the second property of Lemma~\ref{alg:improve-preprocessing}.

Let $Y_u^v$ be the random variable denoting the number of times (directed) edge $(v, u)$
is sampled when we process the vertex $u$. Clearly, $\E[Y_u^v] = \frac{\beta w(v,u)}{\windegP{u}} $ and $\var{Y_u^v} \leq \frac{\beta w(v,u)}{\windegP{u}}$. For a given $x \in \mathbb{R}^{\abs{V_P}}$, we construct the following estimator for each component $P$ using $\sk{G}$.

\begin{equation}
  \label{eq:estimator-P}
  I_P = \sum_{u \in V_P} x_u^2  \delta_{u}(P) - 2 \sum_{(u, v) \in \S}  x_u x_v w(u,v)
        - 2 \sum_{u \in V_P} \frac{\windegP{u}}{\beta}\sum_{(v, u) \in \L} x_u x_v Y_u^v
\end{equation}
Similar to the analysis in Section~\ref{sec:basic-simple}, it is easy to show that $I_P$ is an unbiased estimator of $x^T L(P) x$ by noticing

\begin{eqnarray*}
   x^T L(P) x
   &=& \sum_{(u, v) \in \vec{E}_P}(x_u - x_v)^2 w(u,v)\\
   & = & \sum_{u \in V_P} x_u^2 \delta_u(P) - 2 \sum_{(u, v) \in \S}  x_u x_v w(u,v)
        - 2 \sum_{(u, v) \in \L}  x_u x_v w(u,v).
\end{eqnarray*}
We next bound the variance
\begin{eqnarray*}
  \var{I_P}  &=& \var{2\sum_{u\in V_P}\frac{\windegP{u}}{\beta} \sum_{(v, u) \in \L} x_ux_vY_u^v}\\
             &=& 4\sum_{u\in V_P}\frac{\left(\windegP{u}\right)^2}{\beta^2} \sum_{(v, u) \in \L} x_u^2x_v^2\var{Y_u^v}
              \\
             &\leq& 4\sum_{u\in V_P}\frac{\left(\windegP{u}\right)^2}{\beta^2} \sum_{(v, u) \in \L} x_u^2x_v^2\frac{\beta w(v,u)}{\windegP{u}}
                \quad \quad \quad \left(\var{Y_u^v} \leq \frac{\beta w(v,u)}{\windegP{u}}\right)\\
             &=& 4\sum_{u\in V_P}x_u^2\frac{\windegP{u}}{\beta} \sum_{(v, u) \in \L} x_v^2w(v,u)\\
             &\le& 4\sum_{u\in V_P}x_u^2\frac{\windegP{u}}{\beta} \sum_{(v, u) \in \L} x_v^2\cdot 2\gamma
             \quad \quad \quad \left( w(v,u) \in [\gamma, 2\gamma)\right)\\\\
             &\leq& \frac{4}{\beta}\sum_{u\in V_P}x_u^2\windegP{u} \sum_{(v, u) \in \mathcal{L}} x_v^2\frac{2\woutdegP{v}}{2^{\kappa-1}\beta}
             \quad \left(\uwoutdegP{v} \geq 2^{\kappa - 1}\beta ~\text{and}~  \woutdegP{v}\geq \gamma \cdot \uwoutdegP{v} \right )\\
             &\leq& \frac{16}{2^{\kappa}\beta^2} \sum_{u\in V_P} \delta_u(P) x_u^2\sum_{v\in V_P} \delta_v(P) x_v^2
                \quad \quad \left(\windegP{u} \leq \delta_u(P) ~\text{and}~ \woutdegP{v} \leq \delta_v(P)\right) \\
             &=& \frac{16}{2^{\kappa}\beta^2} \|\hat x\|_2^4,
\end{eqnarray*}
where $\|\hat x\|_2^4 = \|D^{1/2} x\|_2^4 = \sum_{u\in V_P} \delta_u(P) x_u^2\sum_{v\in V_P} \delta_v(P) x_v^2$.

Similar to before we have $$
  \|\hat{x}\|_2^2 = \hat{x}^T \hat{x} \leq \frac{1}{\lambda_1(\tilde L)}\hat{x}^T \tilde L \hat{x}
                  \overset{by~(\ref{eq:cheeger})}{\leq} \frac{2}{h_G^2}(x^TLx) \overset{h_G > 2^{-\kappa} \text{ by Algorithm~\ref{alg:preprocessing}}}{<} {2} \cdot {2^{2\kappa}} \cdot (x^TLx),
$$
Recall that in an S3-graph, $2^{\kappa} \leq  1/(\beta \eps^2)$, hence
$$\frac{\var{I_P}}{\left(x^T L(P) x\right)^2} = O\left(\frac{2^{3\kappa}}{\beta^2}\right) = O\left(1/(\beta^{5}\eps^{6})\right) =  O(\eps^2).$$ Setting constant $c_\beta$ large enough in $\beta = c_\beta \eps^2$, by a Chebyshev's inequality, $I_P$ is a $(1+\eps, 0.01)$-approximation to $x^T L(P) x$.  

\begin{theorem}
\label{thm:improved-simple}
There is a sketching algorithm which given an S3-graph, outputs a $(1+\eps, 0.01)$-spectral-sketch of size $\tilde{O}(n/\eps^{\frac{8}{5}})$.
\end{theorem}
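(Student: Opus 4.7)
The plan is to analyze Algorithm~\ref{alg:improved-simple} directly, establishing correctness of the per-component estimator $I_P$ from equation~(\ref{eq:estimator-P}) and then bounding the overall sketch size. The output sketch has three contributions per component $P$: the cut edges $Q$ produced by Spectral-Preprocessing (stored exactly), all in-edges whose source lies in $\S(\vec P)$ (i.e., incident to the vertices whose buddy out-degree dropped below $2^{\kappa-1}\beta$ after removing $Q$), and $\beta = c_\beta \eps^{-8/5}$ sampled in-edges per vertex $u$, drawn with probability proportional to $w(v,u)/\windegP{u}$. Unbiasedness of $I_P$ is a direct calculation: applying $\E[Y_u^v] = \beta w(v,u)/\windegP{u}$ term by term recovers the directed decomposition $x^T L(P) x = \sum_u x_u^2 \delta_u(P) - 2\sum_{(u,v) \in \vec E_P} x_u x_v w(u,v)$.

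The heart of the argument is the variance bound, which I would carry out along the exact chain already indicated above the statement. Each sampled contribution has variance at most $\beta w(v,u)/\windegP{u}$; summing, using that $v \in \L$ implies $\uwoutdegP{v} \geq 2^{\kappa-1}\beta$ and hence $w(v,u) \leq 2\gamma \leq 2\woutdegP{v}/(2^{\kappa-1}\beta)$, and replacing $\windegP{u}, \woutdegP{v}$ by the full weighted degrees collapses the double sum to $\|D^{1/2} x\|_2^4$ up to a factor $O(2^{-\kappa}/\beta^2)$. Cheeger's inequality (Lemma~\ref{lem:cheeger}), combined with the preprocessing guarantee $h_P > 2^{-\kappa}$, then yields $\|D^{1/2} x\|_2^2 \leq (2/h_P^2)\,x^T L(P) x \leq 2 \cdot 2^{2\kappa}\, x^T L(P) x$. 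Plugging in and using the S3-graph constraint $2^\kappa \leq 1/(\beta\eps^2)$ gives $\var{I_P} = O(2^{3\kappa}/\beta^2)(x^T L(P) x)^2 = O(1/(\beta^5 \eps^6))(x^T L(P) x)^2 = O(\eps^2)(x^T L(P) x)^2$ for $c_\beta$ chosen large enough. Since the $I_P$ are independent across components and the $Q$-contribution $x^T L(Q) x$ is exact, summing over $P \in \mathcal{P}$ and invoking Chebyshev (exactly as in the proof of Lemma~\ref{lem:basic-general}) gives the $(1+\eps, 0.01)$-approximation to $x^T L(G) x$.

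For the size, each of the three pieces is $\tilde O(\beta n) = \tilde O(n/\eps^{8/5})$ bits. First, $|Q| = \tilde O(\beta n)$ by Lemma~\ref{alg:improve-preprocessing}(1). Second, by Lemma~\ref{alg:improve-preprocessing}(2) there are at most $\tilde O(2^{1-\kappa} n)$ vertices whose out-degree in $\vec P$ shrank by more than half; each such vertex contributes at most $2^{\kappa+1}\beta$ edges to $\S(\vec P)$, for a total of $\tilde O(\beta n)$. Third, the explicit sampling stores exactly $\beta$ edges per vertex.

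The step I expect to require the most care is reconciling the buddy structure with preprocessing. Removing the cut set $Q$ can lower out-degrees in $\vec P$, which breaks the $\uwoutdegP{v} \geq 2^{\kappa-1}\beta$ hypothesis used in the variance estimate. The fix is built into the algorithm: vertices whose out-degree has collapsed are collected in $\S(\vec P)$, and \emph{all} their in-edges are stored exactly, so the random part of $I_P$ involves only samples from $\L$, where the lower bound on out-degree still holds. Lemma~\ref{alg:improve-preprocessing}(2) is precisely what keeps the explicit storage of $\S(\vec P)$ within the $\tilde O(\beta n)$ budget, so these two mechanisms must be balanced carefully to make both the variance analysis and the size bound go through simultaneously.
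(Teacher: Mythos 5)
Your proposal is correct and follows essentially the same route the paper takes: the same decomposition into $Q$, the exactly-stored $\S(\vec{P})$ edges, and the sampled $\L$ edges; the same variance chain bounding by $O(2^{-\kappa}/\beta^2)\|D^{1/2}x\|_2^4$ and then invoking Cheeger with $h_P > 2^{-\kappa}$; and the same size accounting via Lemma~\ref{alg:improve-preprocessing}. The only cosmetic difference is that you bound the per-vertex contribution to $\S(\vec{P})$ by the original out-degree cap $2^{\kappa+1}\beta$ rather than the tighter post-removal bound $2^{\kappa-1}\beta$ the paper uses, which does not change the $\tilde O(\beta n)$ conclusion.
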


\paragraph{General Graphs}
To deal with general graphs $G = (V, E, w)$ for which the only requirement is $w_{\max}/w_{\min} \le \text{poly}(n)$, we try to ``partition" it to poly$\log{n}$ subgraphs, each of which is an S3-graph. We note that the partition we used here is {\em not} a simple vertex-partition or edge-partition, as will be evident shortly.

Our first step is to assign each edge in $E$ a direction so that in the later partition step we can partition $G$ to S3-graphs and simultaneously get their buddy directed graphs. The algorithm is described in Algorithm~\ref{alg:assign-direction}.

\begin{algorithm}[t]
\KwIn{A graph $G = (V, E)$ and a parameter $t$}
\KwOut{$\vec{G}=(V, \vec{E})$, a directed graph by assigning each edge in $G$ a direction}
Arbitrarily assign a direction to each edge in $E$, getting $\vec{E}$\;
\While{$\exists (u, v) \in \vec{E}$ s.t. $\uwoutdeg{u} \geq t$ and $\uwoutdeg{v} < t-1$}{
  Change the direction of $(u, v)$\;
}
\Return{$\vec{G}= (V, \vec{E})$}\;
\caption{{\bf Assign-Direction}($G, t$)}
\label{alg:assign-direction}
\end{algorithm}

\begin{lemma}
  \label{lem:assign-direction}
  Given $G = (V, E)$ and $s>1$ as input,  \text{Assign-Direction}$(G, s)$ (Algorithm~\ref{alg:assign-direction}) will finally stop and return $\vec{G} = (V, \vec{E})$ with the property that
  for each $(u, v) \in \vec{E}$, $\uwoutdeg{u} < s$ or $\uwoutdeg{v} \geq s - 1$.
\end{lemma}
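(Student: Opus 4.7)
The plan is to split the argument into two parts: (i) the property of the returned directed graph $\vec{G}$ and (ii) the termination of the while loop. Part (i) is essentially immediate from the stopping condition, while part (ii) is the real content.

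For (i), I would simply observe that the algorithm exits the while loop only when no directed edge $(u,v)\in\vec{E}$ satisfies the conjunction $\uwoutdeg{u}\ge s$ and $\uwoutdeg{v}<s-1$. Taking the negation of this conjunction edge-by-edge yields precisely the claimed property: for every $(u,v)\in\vec{E}$, either $\uwoutdeg{u}<s$ or $\uwoutdeg{v}\ge s-1$. So once termination is established, the lemma follows.

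For (ii), I would introduce the integer-valued potential $\Phi(\vec{G}):=\sum_{u\in V}\bigl(\uwoutdeg{u}\bigr)^2$, and show that each flip strictly decreases $\Phi$ by at least $2$. Concretely, when the algorithm flips an edge $(u,v)$, let $a=\uwoutdeg{u}$ and $b=\uwoutdeg{v}$ just before the flip, so by the loop guard $a\ge s$ and $b\le s-2$. After the flip, $u$'s out-degree becomes $a-1$ and $v$'s out-degree becomes $b+1$, while all other out-degrees are unchanged. Hence
\[
\Delta\Phi \;=\; (a-1)^2+(b+1)^2-a^2-b^2 \;=\; 2(b-a+1) \;\le\; 2(s-2-s+1) \;=\; -2.
\]
Since $\Phi$ is a non-negative integer bounded above by its initial value (which is at most $n\cdot|E|^2$, finite), the algorithm can execute at most $\Phi(\vec{G}_0)/2$ iterations before $\Phi$ can no longer decrease, at which point the loop must exit. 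This proves termination.

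I do not expect any real obstacle here: the termination condition is designed so that flipping an edge moves mass from a high out-degree vertex to a strictly lower one, with a gap of at least two, which is exactly what makes the convex potential $\sum_u d_u^{\mathtt{out}}(\vec{G})^2$ strictly decrease. If one wanted a tighter iteration bound, one could instead use $\Phi'=\sum_u \binom{\uwoutdeg{u}}{2}$, which gives the same conclusion; but for termination alone the sum of squares suffices, and the proof is only a few lines.
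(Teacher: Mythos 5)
Your proof is correct, and the potential you use is genuinely different from the paper's. To show termination, the paper introduces the set $S = \{(u,v)\in\vec{E} : \uwoutdeg{u}\ge s \text{ and } \uwoutdeg{v}<s-1\}$ of edges still violating the stopping condition and the potential $\Delta(S)=\sum_{(u,v)\in S}(\uwoutdeg{u}-\uwoutdeg{v})$; it then argues that a flip removes the flipped edge from $S$ (whose contribution was at least $2$) and, via a short case analysis, that no flip can insert a new edge into $S$. Your convex potential $\Phi=\sum_{u}\bigl(\uwoutdeg{u}\bigr)^2$ avoids that bookkeeping entirely: the drop $\Delta\Phi = -2(a-b-1)\le -2$ is local and needs no argument about which edges enter or leave a bad set, so it is arguably the cleaner route, and it gives an explicit iteration bound for free. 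One small nit: the loop guard gives $b<s-1$, not $b\le s-2$; these coincide only when $s$ is an integer, and in the paper's application $s$ need not be. But since $a,b$ are integers with $a\ge s$ and $b<s-1$, one still has $a-b\ge 2$, which is exactly what $\Delta\Phi=-2(a-b-1)\le -2$ needs, so the conclusion is unaffected.
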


\begin{proof}
  The second part is trivial according to Algorithm~\ref{alg:assign-direction}. Now we show that {Assign-Direction}$(G, s)$ will finally stop. Let $S = \{(u, v) \in \vec{E}\ |\ \uwoutdeg{u} \geq s ~\text{and}~ \uwoutdeg{v} < s-1 \}$, and $\Delta(S) = \sum_{(u, v)\in S} (\uwoutdeg{u} - \uwoutdeg{v})$. The algorithm stops if and only if $\Delta(S) = 0$. It is easy to see that $\Delta(S)$ is finite for arbitrary $\vec{G}$, thus the algorithm will stop if we can show that $\Delta(S)$ will decrease by \emph{at least} $2$ each time we execute Line $3$. To this end, we only need to show that each execution of Line $3$ will not add any new edge to $S$.

Consider executing Line $3$ on edge $(u, v)$. For $(u, w) \not \in S$, since $(u, v) \in S$, we have $\uwoutdeg{w} \geq s - 1$. Clearly, after executing Line $3$, $(u, w)$ will not be added to $S$ because $\uwoutdeg{w} \geq s - 1$ still holds.
For $(w, v) \not \in S$, since $(u, v) \in S$, we have $\woutdeg{w} < s$. After executing Line $3$, $\uwoutdeg{w} < s$ still holds, hence $(w, v)$ will not be added into $S$.
\end{proof}

For a set of directed edges $\vec{E}$, let $d_u^{\tt{out}}(\vec{E}) \gets |\{v\ |\ (u, v) \in \vec{E}\}|$.  Our partition step is described in Algorithm~\ref{alg:improved-partition}. We first run the spectral sparsification algorithm ~\cite{BSS14}, and then assign directions to each edge. Next, we partition the edges based on their weights, and then partition the directed graph based on the unweighted out-degree of each vertex. Finally, we recursively perform all the above steps on a subgraph induced by a set of edges which have large weights. Notice that the purpose of introducing directions on edges is to assist the edge partition.

\begin{algorithm}[t]
\KwIn{A graph $G = (V, E, w)$ and a parameter $\eps$}
\KwOut{A set of graph components $\mathcal{P}$}
\If {$n< 3$}{
\Return $\{ G \}$\;
}
Run \cite{BSS14} on $G$ with parameter $\eps$, get a spectral sparsifier $G' = (V, E', w')$ with $|E'| =  \eta\cdot \frac{n}{\eps^2}$ where $\eta = \tilde{O}(1)$\;
$\tilde{\eps} = \eps/\sqrt{\eta}$, $s \gets  1/{\tilde{\eps}}^2$\;
$\vec{G}' = (V, \vec{E}', w') \gets$ {\bf Assign-Direction}$(G', 2s)$ \;
Partition $\vec{E}'$ into $\vec{E}_j'$'s such that for each $\vec{E}_j'$, all $e \in \vec{E}_j'$ have $w(e) \in [2^j, 2^{j+1})$\;
$\mathcal{P} \gets \emptyset$\;
\ForEach{$\vec{E}_j'$} {
\tcc{Recall $\beta = c_\beta \eps^{-\frac{8}{5}}$ for a large enough constant $c_\beta$}
  Let $\vec{E}_{-\infty} \gets \left\{(u, v) \in \vec{E}_j'\ |\ d_u^{\tt{out}}(\vec{E}_j') < \beta \right\}$\;
  Let $\vec{E}_i \gets \left\{(u, v) \in \vec{E}_j'\ |\ d_u^{\tt{out}}(\vec{E}_j') \in [2^i \beta, 2^{i+1}\beta) \right\}$ for all $i \geq 0$ ~s.t.~ $2^i\beta \leq s = 1/{\tilde{\eps}}^2$\;
  Add $G(\vec{E}_{-\infty})$ and $G(\vec{E}_i)$ for all $0 \le i \le \log(1/(\tilde{\eps}^2 \beta))$ into $\mathcal{P}$\;
  Remove $\vec{E}_{-\infty}, \vec{E}_i$ from $\vec{E}'$\;
}
\tcc{Recursively apply on the remaining edges $E'$ (remove  directions on edges)}
\Return{$\mathcal{P} \cup \text{\bf Partition } (G(E'))$}\;
\caption{{\bf Partition}($G$)}
\label{alg:improved-partition}
\end{algorithm}

For the analysis, we first show that after each recursion in Algorithm~\ref{alg:improved-partition}, the number of vertices of the graph induced by the remaining edges will decrease by at least a constant fraction. In this way we can bound the number of recursion steps by $O(\log n)$.

\begin{lemma}
  \label{lem:recursion}
Given a graph $G = (V, E)$ with $m \leq sn\ (s > 1)$, let $\vec{G} \gets$ \text{Assign-Direction}$(G, 2s)$. If we remove all $(u, v)$ with $d_u^{\tt{out}}(\vec{E}) < 2s$ from $\vec{E}$ and get a subset $\vec{E}_r\subset \vec{E}$, then we have $|V_r(\vec{E}_r)| \leq n / (2 - 1/s)$
\end{lemma}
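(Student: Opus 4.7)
The plan is to leverage Lemma \ref{lem:assign-direction} applied with parameter $t = 2s$, which guarantees that for every edge $(u,v) \in \vec{E}$ in the output of Assign-Direction, either $\uwoutdeg{u} < 2s$ or $\uwoutdeg{v} \geq 2s - 1$. By definition $\vec{E}_r$ retains exactly those edges whose tail $u$ satisfies $\uwoutdeg{u} \geq 2s$, i.e., the first alternative fails, so for every $(u,v) \in \vec{E}_r$ the head also satisfies $\uwoutdeg{v} \geq 2s - 1$. Thus every vertex incident to some edge of $\vec{E}_r$ (whether as tail or head) has out-degree at least $2s - 1$ in $\vec{E}$.

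Let $W = \{u \in V : \uwoutdeg{u} \geq 2s - 1\}$, so $V_r(\vec{E}_r) \subseteq W$. To upper-bound $|W|$, I would use the standard handshake-style counting: the sum of all out-degrees equals the number of directed edges, which is at most $m \leq sn$. Restricting the sum to $W$ yields
\[
  (2s - 1)\,|W| \;\leq\; \sum_{u \in W} \uwoutdeg{u} \;\leq\; |\vec{E}| \;=\; m \;\leq\; sn,
\]
and hence $|W| \leq \frac{sn}{2s-1} = \frac{n}{2 - 1/s}$. Combined with $V_r(\vec{E}_r) \subseteq W$, this gives the claimed bound $|V_r(\vec{E}_r)| \leq n/(2 - 1/s)$.

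There is no real obstacle here; the only subtle point is recognizing that the $2s - 1$ threshold (not $2s$) is forced on the heads by the guarantee of Lemma \ref{lem:assign-direction}, since that off-by-one is precisely what produces the $\frac{1}{s}$ correction term in the denominator of the final bound. Once this is observed, the proof is a one-line degree-counting argument.
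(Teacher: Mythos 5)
Your proof is correct and follows essentially the same approach as the paper: apply Lemma~\ref{lem:assign-direction} to deduce that every vertex incident to an edge of $\vec{E}_r$ has out-degree at least $2s-1$, then sum out-degrees against $m \le sn$. Your write-up is in fact a bit more careful than the paper's, since you sum out-degrees in $\vec{E}$ rather than in $\vec{E}_r$, which sidesteps the fact that a head $v$ with $d_v^{\tt{out}}(\vec{E}) = 2s-1$ actually has \emph{all} its out-edges deleted from $\vec{E}_r$.
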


Before proving this lemma, note that in Algorithm~\ref{alg:improved-partition}, $m \leq sn$ is guaranteed by Line $3$, and ``remove all $(u, v)$ with $d_u^{\tt{out}}(\vec{E}) < 2s$" is done by Line $12$.

\begin{proof}
By Lemma \ref{lem:assign-direction},
  for each $(u, v) \in \vec{E}$, we have $d_u^{\tt{out}}(\vec{E}) < 2s$ or $d_v^{\tt{out}}(\vec{E}) \geq 2s - 1$.  If we remove all $(u, v)$ with $d_u^{\tt{out}}(\vec{E}) < 2s$, then for each $(u, v) \in \vec{E}_r$, we have $d_u^{\tt{out}}(\vec{E}_r) \geq 2s$ and $d_v^{\tt{out}}(\vec{E}_r) \ge 2s-1$. Consequently, $|V_r(\vec{E}_r)| (2s - 1) \leq m \leq sn$. Therefore we have
  $|V_r(\vec{E}_r)| \leq n / (2 - 1/s)$.
\end{proof}

The following lemma summarizes the properties of $\mathcal{P}$ returned by Algorithm~\ref{alg:improved-partition}.

\begin{lemma}
  \label{lem:improved-partition}
  Given $G = (V, E, w)$ with $w_{\max}/w_{\min} = \text{poly}(n)$, let $\mathcal{P} \gets$~\text{Partition}$(G)$ be a set of graphs after the partition, then (1) $|\mathcal{P}| = \text{poly}(\log n)$; and (2) for each $\vec{P} = (V_P, \vec{E}_P, w_P) \in \mathcal{P}$, if $|V_P| > 2$ then for any $e \in \vec{E}_P$,  $w(e) \in [\gamma, 2\gamma)$ for some $\gamma > 0$, and one of the following properties holds:
  \begin{enumerate}
  \item[] {\em Property $1$:} For each $u \in V_P$, $\uwoutdegP{u} < \beta$.
  \item[] {\em Property $2$:} There exists $i\ (0 \le i \le \log(\eta/(\beta \eps^2)))$,
    for each $u \in V_P$, $\uwoutdegP{u} \in [2^i\beta, 2^{i+1}\beta)$.
  \end{enumerate}
\end{lemma}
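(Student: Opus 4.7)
The plan is to induct on the recursion depth of Algorithm~\ref{alg:improved-partition} and establish both claims in tandem. For part~(1), I would bound separately the number of components produced per recursion level and the total recursion depth. Within a single call to \textbf{Partition}, the weight partition $\{\vec{E}_j'\}$ has $O(\log(w_{\max}/w_{\min})) = O(\log n)$ classes, and within each weight class the out-degree buckets $\{\vec{E}_{-\infty}\} \cup \{\vec{E}_i : 0 \le i \le \log(\eta/(\beta\eps^2))\}$ number $O(\log(1/\eps)) = O(\log n)$; this contributes $O(\log^2 n)$ components per level.

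To bound the recursion depth, I would apply Lemma~\ref{lem:recursion} with $s = 1/\tilde\eps^2$: the spectral sparsifier invoked at Line~$3$ guarantees $m \le sn$, and after extracting $\vec{E}_{-\infty}$ together with every $\vec{E}_i$ satisfying $2^i\beta \le s$, the surviving edges in $\vec{E}'$ all have source out-degree $\ge 2s$ in $\vec{G}'$. Lemma~\ref{lem:recursion} then tells me that the vertex count of the remaining graph $G(E')$ shrinks by a factor of $(2 - 1/s) \ge 3/2$, so the recursion terminates in $O(\log n)$ levels and $|\mathcal{P}| = O(\log^3 n) = \polylog(n)$.

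For part~(2), I would argue that every non-trivial component $\vec{P} \in \mathcal{P}$ was produced as $G(\vec{E}_{-\infty})$ or $G(\vec{E}_i)$ for some weight class $\vec{E}_j'$ within some recursion level. The factor-$2$ weight property is then immediate with $\gamma = 2^j$. For the out-degree structure, the key point is that these classes threshold only the \emph{source} out-degree in $\vec{E}_j'$ and collect \emph{all} out-edges of any source falling in the bucket, so $\uwoutdegP{u} = d_u^{\tt{out}}(\vec{E}_j')$ for every source $u \in V_P$. Hence sources of $G(\vec{E}_{-\infty})$ satisfy $\uwoutdegP{u} < \beta$ (Property~1), and sources of $G(\vec{E}_i)$ satisfy $\uwoutdegP{u} \in [2^i\beta, 2^{i+1}\beta)$ (Property~2), with the index range $0 \le i \le \log(\eta/(\beta\eps^2))$ coming from the algorithm's cutoff $2^i\beta \le s = \eta/\eps^2$.

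The main obstacle I anticipate is handling vertices that appear in $V_P$ only as sinks of $\vec{E}_i$, whose induced out-degree is $0$ and hence formally outside $[2^i\beta, 2^{i+1}\beta)$. I would resolve this either by interpreting $V_P$ as the source vertices of $\vec{E}_i$ (augmented with their sinks as isolated vertices if needed), or by noting that sink-only vertices do not contribute to the Laplacian quadratic form beyond their weighted degrees, which are stored explicitly in Algorithm~\ref{alg:improved-simple} anyway. Either convention is consistent with using Lemma~\ref{lem:improved-partition} to certify that each $\vec{P}$ is essentially an S3-graph, which is the intended downstream application.
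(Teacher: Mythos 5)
Your proof is correct and takes essentially the same approach as the paper: count $O(\log^2 n)$ pieces per recursion level from the weight-class and out-degree partitions, bound the recursion depth by $O(\log n)$ via Lemma~\ref{lem:recursion} (with the same observation that per-weight-class removal only removes a superset of the edges Lemma~\ref{lem:recursion} requires), and read part~(2) directly off Lines 6--10 of the algorithm. Your remark about sink-only vertices having out-degree $0$ is a legitimate loose end in the lemma's phrasing that the paper's proof silently skips (dismissing part~(2) as ``directly follows''), and your proposed fixes are both consistent with how the partition is used downstream.
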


\begin{proof}
We only need to bound the size of $\mathcal{P}$. The rest directly follows from the algorithm.

First, Line $6$ and Line $10$ will partition $\vec{E}'$ to $O(\log^2 n)$ (assuming $n > 1/\eps$) sets.
Second, we bound the number of recursion steps. Note that if we directly remove $E_s = \{(u, v) \in \vec{E}'\ |\ d_u^{\tt{out}}(\vec{E}') < 2s\}$ from $\vec{E}'$, then by Lemma~\ref{lem:recursion} we know that there are at most $O(\log n)$ recursion steps. The subtlety is that we first partition $\vec{E}'$ into $\vec{E}'_j$'s and then remove all $(u, v) \in \vec{E}'_j$ with $d_u^{\tt{out}}(\vec{E}'_j) < 2s$. However, since $d_u^{\tt{out}}(\vec{E}'_j) \le d_u^{\tt{out}}(\vec{E}')$, every edge in $E_s$ will still be removed by at Line $12$. Therefore $\abs{\mathcal{P}} = O(\log^3 n)$.
\end{proof}

We summarize our main result.

\begin{theorem} \label{thm:improved-general}
Given an undirected positively weighted $G = (V, E, w)$ with $w_{\max}/w_{\min} \le \text{poly}(n)$, there is a sketching algorithm that outputs a $(1+\eps, 0.01)$-spectral-sketch of size $\tilde{O}(n/\eps^{8/5})$.
\end{theorem}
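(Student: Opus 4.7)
The plan is to reduce the general case to the S3-graph case (Theorem \ref{thm:improved-simple}) via the \textbf{Partition} procedure, then combine estimators across parts using linearity of the Laplacian, as was done in the basic version (Theorem \ref{thm:basic-general}). First I would apply Lemma \ref{lem:spielman} to replace $G$ by a spectral sparsifier $\tilde G$ with $\tilde O(n/\eps^2)$ edges and polynomially bounded weight ratio, losing only a $(1+\eps)$ factor. Then I run \textbf{Partition}$(\tilde G)$ (Algorithm~\ref{alg:improved-partition}) to obtain a collection $\calP$ of edge-disjoint subgraphs whose Laplacians sum to $L(\tilde G)$. By Lemma~\ref{lem:improved-partition}, $|\calP| = \poly(\log n)$ and every $P \in \calP$ either has at most $2$ vertices (trivial to sketch exactly) or admits a buddy orientation $\vec P$ with edge weights within a factor of $2$ and either all out-degrees below $\beta$ or all out-degrees in some dyadic band $[2^i\beta, 2^{i+1}\beta)$ with $2^i\beta \le 1/\eps^2$. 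In the first case the whole subgraph has only $\tilde O(\beta n)$ edges and can be stored verbatim; in the second case $P$ is an S3-graph with $\kappa = i$, and we apply \textbf{Spectral-S3}$(P, \eps)$ to produce a sketch of size $\tilde O(n/\eps^{8/5})$.

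For the size bound, each of the $\poly(\log n)$ parts contributes at most $\tilde O(|V_P|/\eps^{8/5})$ bits, and since the parts cover subsets of the same vertex set $V$ I would argue a per-level bound of $\tilde O(n/\eps^{8/5})$; combined with the $O(\log n)$ recursion depth guaranteed by Lemma~\ref{lem:recursion} this yields $\tilde O(n/\eps^{8/5})$ bits overall. For correctness, given a query $x$, I form the estimator
\begin{equation*}
  I_{\tilde G} \;=\; \sum_{P \in \calP} I_P,
\end{equation*}
where $I_P$ is the Equation~\eqref{eq:estimator-P}-style estimator (or the exact value $x^T L(P) x$ when $P$ is stored verbatim). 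Linearity of the Laplacian (Observation~\ref{ob:Laplacian}) gives $\E[I_{\tilde G}] = x^T L(\tilde G) x$, and since the sketches for different parts use independent randomness,
\begin{equation*}
  \var{I_{\tilde G}} \;=\; \sum_{P \in \calP} \var{I_P} \;\le\; O(\eps^2) \sum_{P \in \calP} (x^T L(P) x)^2 \;\le\; O(\eps^2)\,\Big(\sum_{P \in \calP} x^T L(P) x\Big)^2 \;=\; O(\eps^2)\,(x^T L(\tilde G) x)^2,
\end{equation*}
using positive semidefiniteness of each $L(P)$. Chebyshev then gives a $(1+\eps)$-approximation with constant probability, and rescaling $\eps$ absorbs the sparsification error.

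The main obstacle I anticipate is not in any single step but in making sure the per-part variance bound from Theorem~\ref{thm:improved-simple} composes correctly across the $\poly(\log n)$ parts and $O(\log n)$ recursion levels of \textbf{Partition} without losing a factor of $\eps^{-\Theta(1)}$. Concretely, the Spectral-S3 variance analysis requires both the dyadic degree band and the $h_G \ge \beta\eps^2$ conductance lower bound enforced by \textbf{Spectral-Preprocessing} inside Algorithm~\ref{alg:improved-simple}; I need to check that the few vertices whose out-degrees drop by more than a factor of two after removing the sparse-cut edges (Lemma~\ref{alg:improve-preprocessing}) are either stored exhaustively or excluded from the sampling set, so that the bound $\uwoutdeg{v} \ge 2^{\kappa-1}\beta$ used in the variance calculation still applies to every edge $(v,u) \in \L$. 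With $c_\beta$ chosen sufficiently large and the probability of failure for each part boosted to $1/\poly(n)$ by $O(\log n)$-wise median amplification, a union bound over $|\calP| = \poly(\log n)$ parts yields the claimed $(1+\eps, 0.01)$-guarantee.
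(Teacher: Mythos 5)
Your proposal follows the paper's proof essentially step for step: sparsify, run \textbf{Partition}, store Property-1 parts verbatim and apply \textbf{Spectral-S3} to Property-2 parts, then combine the per-part estimators by linearity and bound the total variance using independence across parts and positive semidefiniteness of each $L(P)$. The concern you raise about vertices whose out-degree drops after \textbf{Spectral-Preprocessing} is exactly what Lemma~\ref{alg:improve-preprocessing} and the $\S(\vec P)$ set in Algorithm~\ref{alg:improved-simple} are there to handle, so your proof is correct and takes the same route as the paper.
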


\begin{proof}
Our final algorithm is described in Algorithm~\ref{alg:sketch-improved-general}.
For each component $H \in \mathcal{P}$, we store the whole $H$ if Property $1$ in Lemma~\ref{lem:improved-partition} holds (let $\mathcal{P}_1$ be this set), and we apply Theorem~\ref{thm:improved-simple} on $H$ (set $\delta = 1/\text{poly}\log n$) if Property $2$  in Lemma~\ref{lem:improved-partition} holds (let $\mathcal{P}_2$ be this set), thus the space usage is bounded by $\tilde{O}(n/\tilde{\eps}^{\frac{8}{5}}) = \tilde{O}(n/\eps^{\frac{8}{5}})$.  Since by $\abs{\mathcal{P}_1 \cup \mathcal{P}_2} = \abs{\mathcal{P}} = O(\log^3n)$ by Lemma~\ref{lem:improved-partition}, we can bound the total space by $\tilde{O}(n/\eps^{\frac{8}{5}})$.

Our final estimator, given $x \in \mathbb{R}^{n}$, is
$$I_G = \sum_{H \in \mathcal{P}_2} I_H + \sum_{H \in \mathcal{P}_1} x^T L(H) x,$$ where $I_H$ is defined in Equation~(\ref{eq:estimator-P}). Similar to the proof of Lemma~\ref{lem:basic-general}, we can bound $\var{I_G}$ by $O(\eps^2 \cdot (x^T L(G) x)^2)$. By a Chebyshev's inequality, $I_G$ is a $(1+\eps, 0.01)$-approximation of $x^T L(G) x$.
\end{proof}

\begin{algorithm}[t]
\KwIn{$G = (V, E, w)$; a quality control parameter $\eps$}
\KwOut{A $(1+\eps, 0.01)$-spectral-sketch $\sk{G}$ of $G$}
Let $\mathcal{H} \gets$ {\bf Partition}(G)\;
$\sk{G} \gets \emptyset$\;
\ForEach{$H \in \mathcal{H}$} {
\If{$H$ satisfies Property $1$ in Lemma~\ref{lem:improved-partition}}{
     Add the whole $H$ to $\sk{G}$\;
}
\ElseIf{$H$ satisfies Property $2$ in Lemma~\ref{lem:improved-partition}}{Add \textbf{Spectral-S3}($H, \eps$) into $\sk{G}$ \;}
}

\Return{$\sk{G}$}\;
\caption{{\bf Spectral-Improved}($G, \eps$)}
\label{alg:sketch-improved-general}
\end{algorithm}


\section{Positive-Semidefinite Matrices}
\label{sec:psd}

In this section we consider sketching positive-semidefinite matrices, in the ``for all'' and ``for each'' models, respectively. 

\subsection{``For All'' Model}
\label{sec:psd-all}

We first show that in the ``for all'' case (for general PSD matrices $A$), 
there is no better sketching algorithm (up to a logarithmic factor) than simply storing the whole matrix. 

\begin{theorem}
\label{thm:all-exact}
For a general PSD matrix $A$ and 
relative approximation $\eps>0$ that is a sufficiently small constant,
every sketch $\sk{A}$ that satisfies the ``for all'' guarantee 
(with constant probability of success), 
even if all of entries of $A$ are promised to be in the range $\{-1, -1+1/n^C, -1+2/n^C, \ldots, 1-1/n^C, 1\}$ for a sufficiently large constant $C > 0$, 
must use $\Omega(n^2)$ bits of space.
\end{theorem}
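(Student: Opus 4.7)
The plan is to establish the $\Omega(n^2)$-bit lower bound by exhibiting a family $\mathcal{F}=\{A_\sigma\}$ of $|\mathcal{F}|=2^{\Omega(n^2)}$ PSD matrices whose entries lie in the prescribed discrete grid, such that on $\mathcal{F}$ the sketch must be essentially injective. Concretely, I will build the $A_\sigma$'s so that any two of them satisfy $\|A_\sigma - A_{\sigma'}\|_{\mathrm{op}} \geq c_0$ for an absolute constant $c_0$ while still having $A_\sigma\preceq 2I$; the ``for all'' guarantee will then imply that the sketches of any two elements of $\mathcal{F}$ must differ (under common randomness), forcing an output length $\geq \log_2|\mathcal{F}| = \Omega(n^2)$.

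To produce the family, first select a code $C \subseteq \{-1,+1\}^{n(n+1)/2}$ of size $2^{\Omega(n^2)}$ with minimum Hamming distance $\Omega(n^2)$; such $C$ exists by Gilbert--Varshamov or by a one-line Chernoff-plus-union-bound on a random code. Index coordinates of a codeword $\sigma$ by pairs $i\le j$ and define the symmetric matrix $M_\sigma$ with diagonal entry $\sigma_{ii}$ and off-diagonal entries $M_\sigma(i,j)=M_\sigma(j,i)=\sigma_{ij}/\sqrt 2$. Since the matrix units $\{B_{ij}\}$ form an orthonormal basis of symmetric matrices in the Frobenius inner product and $M_\sigma=\sum_{i\le j}\sigma_{ij}B_{ij}$, we obtain
\[
\|M_\sigma - M_{\sigma'}\|_F^2 \;=\; 4\,d_H(\sigma,\sigma') \;=\; \Omega(n^2),
\]
and combined with the rank bound $\|\cdot\|_{\mathrm{op}} \geq \|\cdot\|_F/\sqrt n$ this yields $\|M_\sigma - M_{\sigma'}\|_{\mathrm{op}}=\Omega(\sqrt n)$ for all distinct $\sigma,\sigma'\in C$. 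When $C$ is sampled at random, each $M_\sigma$ is a Wigner-type matrix with independent sub-Gaussian entries of variance $O(1)$, so $\|M_\sigma\|_{\mathrm{op}}\leq K\sqrt n$ with probability $1-e^{-\Omega(n)}$ by standard concentration; consequently the expected number of $\sigma\in C$ violating this bound is $o(|C|)$, and Markov's inequality combined with the pairwise-distance union bound gives a realization of $C$ in which all pairs keep Hamming distance $\Omega(n^2)$ \emph{and} at least half of the codewords satisfy $\|M_\sigma\|_{\mathrm{op}}\leq K\sqrt n$. Discarding the exceptional codewords leaves a subfamily $C'$ with $|C'|=2^{\Omega(n^2)}$ enjoying both properties.

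Now set $\mu:=1/(2K\sqrt n)$ and $A_\sigma:=(I+\mu M_\sigma)/(1+\mu)$ for $\sigma\in C'$. Each $A_\sigma$ is PSD (as $I+\mu M_\sigma \succeq (1-\mu K\sqrt n)I=\tfrac12 I$), satisfies $\tfrac13 I \preceq A_\sigma \preceq 2I$, and has entries in $[-1,1]$; snapping the entries to the $1/n^C$-grid perturbs every quadratic form by at most a $1/n^{\Omega(C)}$ relative factor, which is absorbed into $\eps$ by taking $C$ large. The operator-norm gap inherits to
\[
\|A_\sigma - A_{\sigma'}\|_{\mathrm{op}} \;=\; \tfrac{\mu}{1+\mu}\,\|M_\sigma-M_{\sigma'}\|_{\mathrm{op}} \;\geq\; c_0
\]
for an absolute constant $c_0>0$. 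Suppose now that for some fixed internal randomness $r$ the sketch$+$estimator is a correct $(1+\eps)$-``for all'' approximator for both $A_\sigma$ and $A_{\sigma'}$ and produces identical sketches $\ske(A_\sigma;r)=\ske(A_{\sigma'};r)$. Then the common estimate $y(x)$ on any query $x$ lies in $(1\pm\eps)\,x^TA_\sigma x \,\cap\,(1\pm\eps)\,x^TA_{\sigma'}x$, so $|x^T(A_\sigma-A_{\sigma'})x|\leq 3\eps\,\max\{x^TA_\sigma x,\,x^TA_{\sigma'}x\}\leq 6\eps\,\|x\|^2$ using $A_\sigma,A_{\sigma'}\preceq 2I$; taking the sup over unit $x$ gives $\|A_\sigma-A_{\sigma'}\|_{\mathrm{op}}\leq 6\eps$, contradicting the constant gap $c_0$ whenever $\eps$ is a sufficiently small absolute constant.

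To finish, an averaging argument over the sketch's randomness produces a deterministic setting $r^*$ under which the ``for all'' promise holds on a $2/3$-fraction of $\{A_\sigma:\sigma\in C'\}$; on this good subfamily $A\mapsto\ske(A;r^*)$ is injective by the contradiction above, so its image has size $2^{\Omega(n^2)}$ and the worst-case output length is $\Omega(n^2)$ bits. The main hurdle I anticipate is ensuring that both packing conditions hold on the \emph{same} set of codewords: we need the Frobenius-based pairwise gap (which only yields $\Omega(\sqrt n)$ operator-norm separation via the rank bound) together with the uniform upper bound $\|M_\sigma\|_{\mathrm{op}}\leq O(\sqrt n)$ that allows $\mu$ to be as large as $1/\sqrt n$ while keeping $A_\sigma\succeq 0$ with entries in $[-1,1]$. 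The random-code-plus-filtering step above is the cleanest route; an alternative is to invoke the left-tail large-deviation estimate $\Pr[\|W\|_{\mathrm{op}}\leq c\sqrt n]\leq 2^{-\Omega(n^2)}$ for Wigner matrices and do a greedy packing directly in operator norm, but that relies on heavier random-matrix machinery.
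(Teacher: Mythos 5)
Your proof is correct, but it takes a genuinely different route than the paper's. The paper invokes a known $2^{\Omega(n^2)}$ packing (in operator norm) of rank-$n/2$ projection matrices, rounds them to the grid, and then shows that for any distinct pair $Q_i,Q_j$ there is a vector $z$ (lying in the column space of $Q_i$ orthogonal to that of $Q_j$) with $z^TQ_iz\geq 1/32$ while $z^TQ_jz=0$; since a $(1\pm\eps)$-multiplicative estimate of $0$ must itself be $0$, the two matrices cannot share a sketch. Your construction is essentially dual in spirit: instead of highly degenerate projections where the dichotomy is ``constant vs.\ zero,'' you build a packing of \emph{well-conditioned} matrices $A_\sigma = (I+\mu M_\sigma)/(1+\mu)$ with $\tfrac13 I \preceq A_\sigma \preceq 2I$, so that every quadratic form is $\Theta(\|x\|^2)$; then the multiplicative guarantee becomes a uniform additive error $O(\eps\|x\|^2)$, and colliding sketches would force $\|A_\sigma-A_{\sigma'}\|_{\mathrm{op}}\leq O(\eps)$, contradicting the $\Omega(1)$ operator-norm separation that you extract from a random code via the elementary chain Hamming~$\to$~Frobenius~$\to$~operator norm (with the crucial scaling $\mu\asymp 1/\sqrt n$ to make $I+\mu M_\sigma$ PSD). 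Your route has the advantage of being self-contained --- it replaces the cited subspace-packing result of Kapralov--Talwar / Alon--Etzion--K\"orner with a Gilbert--Varshamov code and standard Wigner concentration --- at the cost of needing the filtering step to keep both the pairwise distance and the operator-norm upper bound on the \emph{same} subfamily, which you correctly identify and handle. Both arguments finish with the same fix-the-randomness/injectivity-count step. The details you supply (the $2\eps\max\{a,b\}$ inequality from two multiplicative intervals, the $1/n^{\Omega(C)}$-perturbation absorbed into $\eps$, and preservation of positive semidefiniteness under grid-snapping because $A_\sigma\succeq\tfrac13 I$) all check out.
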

\begin{proof}
Consider a net of $n \times n$ projection (thus PSD) matrices 
$P$ onto $n/2$-dimensional subspaces $U$ of $\mathbb{R}^n$. It is known 
(see Corollary 5.1 of \cite{kt13}, which uses a result of \cite{aek04}) that there
exists a family $\mathcal{F}$ of $r = 2^{\Omega(n^2)}$ distinct matrices 
$P_{1}, \ldots, P_{r}$ so that for all $i \neq j$, 
$\|P_i - P_j\|_2 \geq \frac{1}{2}$. By rounding each
of the entries of each matrix $P_i$ to the nearest additive 
multiple of $1/n^C$,
obtaining a symmetric matrix $Q_i$ with entries 
in $\{-1, -1+1/n^C, \ldots, 1\}$, 
we obtain a family $\mathcal{F}'$ of $r=2^{\Omega(n^2)}$ distinct matrices
$Q_i$ such that $\|Q_i - Q_j\|_2 \geq \frac{1}{4}$ for all $i \neq j$. This
implies there is a unit vector $x^*$ for which
$\|Q_ix^*-Q_jx^*\|_2 \geq \frac{1}{4}$, or equivalently, 
\begin{eqnarray}\label{eqn:optimize}
\|Q_ix^*\|_2^2 + \|Q_jx^*\|_2^2 - 2 \langle (x^*)^TQ_i^T, Q_j x^* \rangle \geq \frac{1}{16}.
\end{eqnarray}
In the rest of the proof, for simplicity, we often abuse the notation by directly using a matrix $M$ to denote the column space of $M$.
Let $J$ be the subspace of $\mathbb{R}^n$ which is the intersection of the spaces
spanned by the columns of $Q_i$ and $Q_j$, let $K_i$ be the subspace of $Q_i$ orthogonal to $J$,
and let $K_j$ be the subspace of $Q_j$ orthogonal to $J$. We identify $J$, $K_i$, and $K_j$,
with their corresponding projection matrices. To maximize the lefthand side of 
(\ref{eqn:optimize}), we can assume the unit vector $x^*$ is in the span of the union of columns of $J, K_i,
$ and $K_j$. We can therefore write $x^* = Jx^* + K_ix^* + K_j x^*$, and note that the three summand
vectors are orthogonal to each other. Expanding (\ref{eqn:optimize}),
the lefthand side is equal to 
$$2\|Jx^*\|_2^2 + \|K_ix^*\|_2^2 + \|K_j x^*\|_2^2
- 2 \|Jx^*\|_2^2 =  \|K_ix^*\|_2^2 + \|K_j x^*\|_2^2.$$
Hence, by (\ref{eqn:optimize}), it must be that either $\|K_ix^*\|_2^2 \geq \frac{1}{32}$ or
$\|K_j x^*\|_2^2 \geq \frac{1}{32}$. This implies the vector $z = K_ix^*$ satisfies
$\|Q_iz\|_2^2 \geq \frac{1}{32}$, but $\|Q_jz\|_2^2 = 0$. 

Therefore,
if there were a sketch
$\sk{A}$ which had the ``for all'' guarantee for any matrix $A \in \mathcal{F}$, 
one could query $\sk{A}$ on the vector $z$ given above for each pair $Q_i, Q_j \in \mathcal{F}$, 
thereby recovering the matrix $A \in \mathcal{F}$. Hence, $\sk{A}$ is an encoding of an arbitrary
element $A \in \mathcal{F}$ which implies that the size of $\sk{A}$ is 
$\Omega(\log |\mathcal{F}|) = \Omega(n^2)$ bits, completing the proof. 
\end{proof}

\subsection{``For Each'' Model}
\label{sec:psd-each}

In the ``for each'' case for PSD matrices $A$, we can use the Johnson-Lindenstrauss lemma to obtain a sketch $\sk{A}$ of size $O(n/\eps^2 \cdot \log(1/\delta) \log n)$. 
One instantiation of this lemma works by choosing 
a random $r \times n$ matrix $S$ with i.i.d.\ entries 
in $\{-1/\sqrt{r}, +1/\sqrt{r}\}$ for $r = \Theta(\eps^{-2} \log 1/\delta)$. 
Then for any $n \times n$ matrix $B$ and fixed $x$, we have
$\Pr[ \|SBx\|_2^2 \in (1\pm\eps)\|Bx\|_2^2] \ge 1-\delta$. 
The sketch simply computes a matrix $B$ where $B$ satisfies $B^T B = A$
(such a decomposition exists since $A$ is PSD), 
hence $x^T A x = x^T B^T B x = \norm{Bx}_2^2$.
Therefore, the sketch $\sk{A}$ can be $SB$, 
which is an $r \times n$ matrix of size $\tilde{O}(n/\eps^2)$ bits,
and the estimate it produces would be $\|SBx\|^2$.

For general PSD matrices $A$, this $\tilde O(n/\eps^2)$ upper bound 
turns out to be optimal up to logarithmic factors. 

\begin{theorem}
\label{thm:all-approx}
For a general PSD matrix $A$ and relative approximation $\eps \in (1/\sqrt{n}, 1)$,
every sketch $\sk{A}$ that satisfies the ``for each'' guarantee 
(with constant probability) must use $\Omega(n/\eps^2)$ bits of space.
\end{theorem}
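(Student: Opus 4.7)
The plan is to reduce from the $n$-fold Gap-Hamming-Distance problem of Theorem~\ref{thm:GHD2}. The key identity is that for any $A = BB^T$ with $B \in \mathbb{R}^{n \times d}$, the quadratic form satisfies $x^T A x = \|B^T x\|_2^2$; I will design $B$ from Alice's inputs and $x$ from Bob's inputs so that this expression equals the Hamming distance $\|s_I - t\|_2^2 = \Delta(s_I, t)$ between Alice's string $s_I$ and Bob's string $t$, after which a $(1+\eps)$-multiplicative approximation of the quadratic form solves the GHD instance.

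Let $c$ be the constant from Theorem~\ref{thm:GHD2}, fix a large constant $c_0 > 2/c$, and set $\eta := c_0\eps$, $d := \lceil 1/\eta^2 \rceil = \Theta(1/\eps^2)$. The hypothesis $\eps > 1/\sqrt{n}$ implies $d \leq n/2$, so $n' := n - d = \Theta(n)$. Applying Theorem~\ref{thm:GHD2} with parameter $\eta$ and $n'$ copies gives an $\Omega(n'/\eta^2) = \Omega(n/\eps^2)$-bit one-way lower bound for the following distributional GHD instance: Alice holds strings $s_1,\ldots,s_{n'} \in \{0,1\}^d$ of Hamming weight $d/2$, and Bob holds $(I, t)$ with $I \in [n']$ and $t \in \{0,1\}^d$ of Hamming weight $d/2$, under the standard gap promise on $\Delta(s_I, t)$.

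The reduction to sketching is as follows. Alice forms the matrix $B \in \mathbb{R}^{n \times d}$ whose first $n'$ rows are the $s_i$'s and whose remaining $d$ rows form the identity $I_d$, computes the PSD matrix $A := BB^T$, and transmits $\sk{A}$ to Bob. Bob sets $x \in \mathbb{R}^n$ by $x_I = 1$, $x_k = 0$ for other $k \leq n'$, and $x_{n'+j} = -t_j$ for $j \in [d]$; then $B^T x = s_I - t$, so
\[
  x^T A x = \|s_I - t\|_2^2 = \Delta(s_I, t).
\]
Bob queries $\sk{A}$ at $x$, obtaining $\hat V$ with $|\hat V - \Delta(s_I, t)| \leq \eps \cdot \Delta(s_I, t) = O(\eps/\eta^2)$ with probability $\geq 9/10$, and outputs the GHD answer by comparing $\hat V$ to $d/2$.

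The main obstacle is the constant calibration in the previous step: a $(1+\eps)$-approximation of a quantity of scale $\Theta(1/\eta^2)$ has absolute error $\Theta(1/(c_0^2\eps))$, and this must be strictly smaller than the GHD half-gap $c/(2c_0\eps)$. The condition simplifies to $c_0 > 2/c$, which is why the lower-bound constant degrades by a factor of $c^2$ but the dependence $\Omega(n/\eps^2)$ is preserved. (In the regime where $\eps$ is larger than an absolute constant, the target bound $\Omega(n/\eps^2) = \Omega(n)$ is already trivial, e.g.\ by distinguishing PSD matrices with distinct diagonals via singleton queries $x = e_i$.)
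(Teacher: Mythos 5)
Your proof is correct and takes a genuinely different route from the paper's. The paper proves Theorem~\ref{thm:all-approx} by reducing from the \IND\ problem via the JKS08-style Gap-Hamming encoding of Lemma~\ref{lem:gap-ham}: Alice's matrix $B$ has columns encoding bits of her string via shared randomness, Bob's query is the $0/1$ vector $x = e_i + e_j$ (so $x^TAx = \|B^i+B^j\|_2^2$), and Alice additionally transmits the exact column norms $\|B^i\|_2^2$ and column supports (an extra $o(n/\eps^2)$ bits) so that Bob can extract the Hamming distance $\Delta(B^i,B^j)$ from the sketch answer. You instead reduce directly from the $n$-fold Gap-Hamming-Distance problem of Theorem~\ref{thm:GHD2}, which is proved within the same paper, by stacking Alice's strings as rows of $B$ together with an identity block so that a single $\pm 1$ query gives $B^Tx = s_I-t$ and hence $x^TAx = \Delta(s_I,t)$ exactly, with no auxiliary data and no shared randomness; this is more self-contained and arguably cleaner. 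Two points you should make explicit. First, Theorem~\ref{thm:GHD2} requires the distributional failure probability $\delta$ to be a sufficiently small constant, and it is stated over the unconditioned product distribution $\nu$ (the gap promise fails with probability $O(c)$), so before invoking it you should amplify the sketch's ``for each'' success probability from a generic constant to $1-\delta_0$ by $O(1)$ independent copies plus a median, which costs only a constant blowup and so does not affect $\Omega(n/\eps^2)$; the paper's reduction from \IND\ sidesteps this because \IND's lower bound is already robust to success probability $\tfrac12+\delta$. Second, the paper's query $e_i+e_j$ is a $0/1$ vector, so the paper's argument shows the $\Omega(n/\eps^2)$ bound already for $0/1$ queries to a PSD matrix, whereas your query has $-1$ entries and does not give that stronger conclusion --- though the statement of Theorem~\ref{thm:all-approx} does not demand it.
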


The proof is similar (and inspired from) a result in \cite{GWWZ14} for approximating the number of non-zero entries of $Ax$. Before giving the proof, we need some tools. Let $\Delta(a,b)$ be the Hamming distance between two bitstrings $a$ and $b$.

\begin{lemma}[modified from \cite{JKS08}]
\label{lem:gap-ham}
Let $x$ be a random bitstring of length $\gamma = 1/\eps^2$, and let $i$ be a random index in $[\gamma]$. Choose $\gamma$ public random bitstrings $r^1, \ldots, r^{\gamma}$, each of length $\gamma$. Create $\gamma$-length bitstrings $a$, $b$ as follows:
\begin{itemize}
\item
For each $j \in [\gamma]$, $a_j = \text{majority}\{r^j_k\ |\ \text{indices } k \text{ for which } x_k = 1\}$.
\item
For each $j \in [\gamma]$, $b_j = r^j_i$.
\end{itemize}
There is a procedure which, with probability $1/2+\delta$ for a constant $\delta > 0$, can determine the value of $x_i$ from any $c \sqrt{\Delta(a,b)}$-additive approximation to $\Delta(a,b)$, provided $c > 0$ is a sufficiently small constant.
\end{lemma}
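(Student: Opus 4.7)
The plan is to identify a statistic of $\Delta(a,b)$ whose expectation differs noticeably between the two possibilities $x_i=0$ and $x_i=1$, and whose fluctuations are small enough that an additive approximation of scale $\sqrt{\Delta(a,b)}$ suffices to separate the cases. The key observation is that for each coordinate $j\in[\gamma]$, the pair $(a_j,b_j)$ depends only on the $j$-th row $r^j\in\{0,1\}^\gamma$, so the indicators $\indic[a_j\neq b_j]$ are mutually independent across $j$. Setting $S=\{k:x_k=1\}$, conditioned on $x$ and $i$ the random variable $\Delta(a,b)=\sum_{j=1}^\gamma \indic[a_j\neq b_j]$ is a sum of $\gamma$ i.i.d.\ Bernoulli$(q)$ variables, where $q=\Pr[a_j\neq b_j\mid x,i]$ depends only on $|S|$ and on whether $i\in S$.

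I would then compute $q$ in the two cases. If $x_i=0$, then $i\notin S$ and $b_j=r^j_i$ is independent of $a_j=\mathrm{maj}(r^j_k:k\in S)$, so $q=1/2$, giving $\E[\Delta(a,b)\mid x_i=0]=\gamma/2$ with variance $\gamma/4$. If $x_i=1$, then $b_j$ is one of the bits feeding into the majority defining $a_j$, and the classical identity $\Pr[\mathrm{maj}(X_1,\dots,X_k)=X_1]=\tfrac12+\tfrac12\binom{k-1}{\lfloor (k-1)/2\rfloor}2^{-(k-1)}$ together with Stirling's formula yields $\tfrac12+\Theta(1/\sqrt{k})$ (with the analogous bound under any fixed tie-breaking rule when $k$ is even). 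Because $x$ is uniform on $\{0,1\}^\gamma$, Chernoff bounds give $|S|=\gamma/2\pm O(\sqrt{\gamma})$ with probability $1-o(1)$, so on this event $q=\tfrac12-\Theta(1/\sqrt{\gamma})$ and hence $\E[\Delta(a,b)\mid x_i=1]=\gamma/2-C\sqrt{\gamma}$ for an absolute constant $C>0$, with variance $\Theta(\gamma)$.

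The recovery procedure is now straightforward. Given $D$ with $|D-\Delta(a,b)|\leq c\sqrt{\Delta(a,b)}=O(c\sqrt{\gamma})$, set $\tau=\gamma/2-C\sqrt{\gamma}/2$ and predict $x_i=1$ iff $D<\tau$. Writing $\Delta(a,b)=\mu_{x_i}+\sigma Z$ with $\sigma=\Theta(\sqrt{\gamma})$ and $Z$ approximately standard normal by Berry--Esseen, the means $\mu_0,\mu_1$ are separated by $C\sqrt{\gamma}=\Theta(\sigma)$, while for $c$ a sufficiently small constant the additive noise shifts the decision boundary by at most a small fraction of $\sigma$. Thus $\Pr[D<\tau\mid x_i=1]\geq \Phi(\Omega(1))-o(1)$ and $\Pr[D\geq \tau\mid x_i=0]\geq \Phi(\Omega(1))-o(1)$, each at least $1/2+\delta'$ for an absolute constant $\delta'>0$; combining with the $1-o(1)$-probability typical event for $|S|$ yields overall success probability $1/2+\delta$ for some constant $\delta>0$. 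The main obstacle is the quantitative majority-bias estimate with constants independent of $\gamma$ together with a Berry--Esseen approximation tight enough that shrinking $c$ buys back the slack from both the approximation and from the Gaussian quantiles; the handling of even $|S|$ under a fixed tie-breaking convention is a minor but necessary technicality.
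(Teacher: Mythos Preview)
The paper does not actually prove this lemma; it is stated as ``modified from \cite{JKS08}'' and used as a black box in the proof of Theorem~\ref{thm:all-approx}. Your proposal is a correct reconstruction of the standard argument behind the JKS08 reduction: the coordinatewise independence of the $(a_j,b_j)$ pairs, the bias computation $\Pr[\mathrm{maj}=X_1]=\tfrac12+\Theta(1/\sqrt{k})$ when $x_i=1$ versus exact $\tfrac12$ when $x_i=0$, and the resulting $\Theta(\sqrt\gamma)$ gap in $\E[\Delta(a,b)]$ that survives both the Binomial fluctuation and the $c\sqrt{\Delta(a,b)}$ additive noise for small enough $c$. The thresholding procedure and the Chernoff/Berry--Esseen control are exactly what is needed, and the treatment of even $|S|$ via a fixed tie-breaking rule is the right technicality to flag. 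There is nothing to compare against in the paper itself; your argument is essentially the intended one from the cited source.
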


We introduce the \IND\ problem. In \IND, we have two randomized 
parties Alice and Bob. Alice has $x \in \{0, 1\}^n$, and Bob has an index $i \in [n]$. The communication is one-way from Alice to Bob, and the goal is for Bob to compute $x_i$.

\begin{lemma}[see, e.g., \cite{KN97}]
\label{lem:index}
To solve the \IND\ problem with success probability $1/2+\delta$ for any constant $\delta > 0$, Alice needs to send Bob $\Omega(n)$ bits  even with shared randomness.
\end{lemma}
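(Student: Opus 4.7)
The plan is to establish the standard information-theoretic lower bound for the \IND{} problem using Fano's inequality together with the chain rule for mutual information, exploiting the independence of Alice's input bits. First I would fix a hard input distribution: $X = X_1 \cdots X_n$ uniform on $\{0,1\}^n$ for Alice and $i$ independent and uniform on $[n]$ for Bob. Shared randomness can be removed at the outset by averaging: if a randomized protocol achieves success probability $1/2+\delta$ on every input, then for some fixing $R=r$ of the public coins the resulting deterministic protocol succeeds with probability at least $1/2+\delta$ under our joint distribution on $(X,i)$. Let $M=M(X)$ denote Alice's (now deterministic) message; it suffices to prove $H(M) = \Omega(n)$, from which $|M| \geq \Omega(n)$ follows immediately.

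Next I would isolate the coordinates on which Bob performs well. By averaging over $i$, there exists a set $G \subseteq [n]$ with $|G| \geq (\delta/2)\,n$ such that for every $i \in G$, Bob's output $\hat X_i(M,i)$ equals $X_i$ with probability at least $1/2 + \delta/2$ over the choice of $X$. For each such $i$, Fano's inequality applied to the binary prediction problem gives
\[
  H(X_i \mid M) \leq H_2\!\left(\tfrac{1}{2} - \tfrac{\delta}{2}\right),
\]
where $H_2$ denotes the binary entropy function. Since $X_i$ is a uniform bit, this translates into the per-coordinate information bound
\[
  I(X_i;M) \;=\; 1 - H(X_i \mid M) \;\geq\; 1 - H_2\!\left(\tfrac{1}{2} - \tfrac{\delta}{2}\right) \;=\; \Omega(\delta^2).
\]

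Finally I would aggregate these bounds using independence of the coordinates. The chain rule yields
\[
  I(X;M) \;=\; \sum_{i=1}^n I(X_i; M \mid X_1,\ldots,X_{i-1}),
\]
and because $X_i$ is independent of $X_{<i}$ we have $H(X_i \mid X_{<i}) = H(X_i)$, while conditioning on additional variables only decreases entropy, so $H(X_i \mid M, X_{<i}) \leq H(X_i \mid M)$. Combining these,
\[
  I(X_i; M \mid X_{<i}) \;\geq\; H(X_i) - H(X_i \mid M) \;=\; I(X_i; M).
\]
Summing over $i \in G$ gives $I(X;M) \geq |G|\cdot \Omega(\delta^2) = \Omega(\delta^3 n) = \Omega(n)$ for any constant $\delta>0$. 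Since $|M| \geq H(M) \geq I(X;M)$, we conclude $|M| = \Omega(n)$.

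I do not anticipate a genuine obstacle: this is a textbook argument, and the only slightly delicate step is verifying that conditioning on $X_{<i}$ can only \emph{increase} $I(X_i;M)$ (rather than decrease it), which is exactly where independence of the $X_i$'s is used. Everything else amounts to bookkeeping of the constant factor in $\delta$.
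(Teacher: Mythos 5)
The paper simply cites Lemma~\ref{lem:index} from the literature (``see, e.g.,~\cite{KN97}'') and does not supply a proof, so there is no ``paper's own proof'' to compare against. Your argument is correct and is the standard information-theoretic proof: fix a deterministic protocol by averaging over the shared coins, apply Fano's inequality coordinate-wise to bound $H(X_i\mid M)$, and use the chain rule together with the key fact that for independent $X_1,\dots,X_n$ conditioning on $X_{<i}$ can only increase $I(X_i;M)$, giving $I(X;M)\geq\sum_i I(X_i;M)$. All steps check out. Two tiny remarks: (i) the intermediate set $G$ is not actually necessary --- since $H_2$ is concave you can apply Jensen directly to $\frac{1}{n}\sum_i H_2\bigl(\Pr[\hat X_i\neq X_i]\bigr)\leq H_2(\tfrac12-\delta)$ and obtain $I(X;M)\geq \Omega(\delta^2 n)$ without the $\delta/2$ loss, though your $\Omega(\delta^3 n)$ bound is of course still $\Omega(n)$ for constant $\delta$; (ii) at the very end, passing from $H(M)$ to message length uses that a protocol with worst-case message length $m$ has $|\mathrm{supp}(M)|\leq 2^m$, hence $H(M)\leq m$, which is worth stating but is routine.
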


\begin{proof} (for Theorem~\ref{thm:all-approx})
Let $\gamma = 1/\eps^2$. The proof is by a reduction from the \IND\ problem, where Alice has a random bitstring $z$ of length $(n - \gamma) \cdot \gamma$, and Bob has an index $\ell \in [(n - \gamma) \cdot \gamma]$. 

Partition $z$ into $n - \gamma$ contiguous substrings $z^1, z^2, \ldots, z^{n - \gamma}$. Alice constructs a $0/1$ matrix $B$ as follows: she uses shared randomness to sample $\gamma$ random bitstrings $r^1, \ldots, r^\gamma$, each of length $\gamma$. For the leftmost $\gamma \times \gamma$ submatrix of $B$, in the $i$-th column for each $i \in [\gamma]$, Alice uses $r^1, \ldots, r^\gamma$ and the value $i$ to create the $\gamma$-length bitstring $b$ according to Lemma~\ref{lem:gap-ham} and assigns it to this column. Next, in the remaining $n - \gamma$ columns of $B$, in the $j$-th column for each $j \in \{\gamma+1, \ldots, n\}$, Alice uses $z^{j - \gamma}$ and $r^1, \ldots, r^\gamma$ to create the  $\gamma$-length bitstring $a$ according to Lemma~\ref{lem:gap-ham} and assigns it to this column.

Our PSD matrix is set to be $A = B^T B$.

Alice then sends Bob the sketch $\sk{A}$, together with $\{\|B^i\|_2^2 \ (i \in [n])\}$ and $\{nnz(B^i) \ (i \in [n])\}$ where $nnz(x)$ is the number of non-zero coordinates of $x$. Note that both $\{\|B^i\|_2^2 \ (i \in [n])\}$ and $\{nnz(B^i) \ (i \in [n])\}$ can be conveyed using  $O(n \log (1/\eps)) = o(n/\eps^2)$ bits of communication, which is negligible.

Bob creates a vector $x$ by putting a $1$ in the $i$-th and $j$-th coordinates, where $i, j\ (i \in [\gamma], j \in \{\gamma + 1, \ldots,  n\})$ satisfies $\ell = i + (j - \gamma - 1) \cdot \gamma$. Then $Bx$ is simply the sum of the $i$-th and $j$-th columns of $B$, denoted by $B^i + B^j$. Note that $B^i, B^j$ correspond to a pair of $(a, b)$ created from $r^1, \ldots, r^\gamma$ and $z^{j - \gamma}$ (and $(z^{j - \gamma})_i = z_\ell$). Now, from a $(1 + c\eps)$-approximation to $x^T A x = \|B^i + B^j\|_2^2$ for a sufficiently small constant $c$, and exact values of $\|B^i\|_2^2$ and  $\|B^j\|_2^2$, Bob can approximate $2 \langle B^i, B^j \rangle= \|B^i + B^j\|_2^2 - \|B^i\|_2^2 - \|B^j\|_2^2$ up to an additive error $c \eps \cdot \|B^i + B^j\|_2^2 \le c' /\eps$ for a sufficiently small constant $c'$. Then, using a $(c' /\eps)$-additive approximation of $2 \langle B^i, B^j \rangle$, and exact values of $nnz(B^i)$ and $nnz(B^j)$, Bob can approximate $\Delta(B^i, B^j) = nnz(B^i) + nnz(B^j) - 2 \langle B^i, B^j \rangle$ up to a $c'/\eps = c'' \sqrt{\Delta(B^i, B^j)}$ additive approximation for a sufficiently small constant $c''$, and consequently compute $z_\ell$ correctly with probability $1/2+\delta$ for a constant $\delta > 0$ (by Lemma~\ref{lem:gap-ham}).  

Therefore, any algorithm that produces a $(1+c \eps)$-approximation of $x^T A x = \|B^i + B^j\|_2^2$ with probability $1 - \delta'$ for some sufficiently small constant $\delta' < \delta$ can be used to solve the Indexing problem of size $(n - \gamma) \gamma = \Omega(n/\eps^2)$ with probability $1 - \delta' - (1/2 - \delta) > 1/2 + \delta''$ for a constant $\delta'' > 0$. The theorem follows by the reduction and Lemma~\ref{lem:index}. 
\end{proof}



{\small
\bibliographystyle{alphaurlinit}
\bibliography{paper,robi}

\newcommand{\etalchar}[1]{$^{#1}$}
\begin{thebibliography}{GWWZ15}

\bibitem[AEK06]{aek04}
P.-A. Absil, A.~Edelman, and P.~Koev.
\newblock On the largest principal angle between random subspaces.
\newblock {\em Linear Algebra and its applications}, 414(1):288--294, 2006.
\newblock \href {http://dx.doi.org/10.1016/j.laa.2005.10.004}
  {\path{doi:10.1016/j.laa.2005.10.004}}.

\bibitem[AGM12a]{agm12}
K.~J. Ahn, S.~Guha, and A.~McGregor.
\newblock Analyzing graph structure via linear measurements.
\newblock In {\em 23rd Annual {ACM-SIAM} Symposium on Discrete Algorithms},
  pages 459--467, 2012.

\bibitem[AGM12b]{agm12b}
K.~J. Ahn, S.~Guha, and A.~McGregor.
\newblock Graph sketches: sparsification, spanners, and subgraphs.
\newblock In {\em 31st {ACM} {SIGMOD-SIGACT-SIGART} Symposium on Principles of
  Database Systems}, pages 5--14, 2012.

\bibitem[AHK05]{AHK05}
S.~Arora, E.~Hazan, and S.~Kale.
\newblock Fast algorithms for approximate semidefinite programming using the
  multiplicative weights update method.
\newblock In {\em 46th Annual IEEE Symposium on Foundations of Computer
  Science}, FOCS '05, pages 339--348. IEEE Computer Society, 2005.
\newblock \href {http://dx.doi.org/10.1109/SFCS.2005.35}
  {\path{doi:10.1109/SFCS.2005.35}}.

\bibitem[Alo97]{Alon97-expansion}
N.~Alon.
\newblock On the edge-expansion of graphs.
\newblock {\em Comb. Probab. Comput.}, 6(2):145--152, June 1997.
\newblock \href {http://dx.doi.org/10.1017/S096354839700299X}
  {\path{doi:10.1017/S096354839700299X}}.

\bibitem[ARV09]{ARV09}
S.~Arora, S.~Rao, and U.~Vazirani.
\newblock Expander flows, geometric embeddings and graph partitioning.
\newblock {\em J. ACM}, 56(2):1--37, 2009.
\newblock \href {http://dx.doi.org/10.1145/1502793.1502794}
  {\path{doi:10.1145/1502793.1502794}}.

\bibitem[BBDS13]{bbds13}
J.~Blocki, A.~Blum, A.~Datta, and O.~Sheffet.
\newblock Differentially private data analysis of social networks via
  restricted sensitivity.
\newblock In {\em Innovations in Theoretical Computer Science, {ITCS} 2013},
  pages 87--96, 2013.

\bibitem[BGPW13]{BGPW13}
M.~Braverman, A.~Garg, D.~Pankratov, and O.~Weinstein.
\newblock Information lower bounds via self-reducibility.
\newblock In {\em Computer Science – Theory and Applications}, volume 7913 of
  {\em Lecture Notes in Computer Science}, pages 183--194. Springer, 2013.
\newblock \href {http://dx.doi.org/10.1007/978-3-642-38536-0_16}
  {\path{doi:10.1007/978-3-642-38536-0_16}}.

\bibitem[BK96]{BK96}
A.~A. Bencz{\'u}r and D.~R. Karger.
\newblock Approximating ${\rm s}$-${\rm t}$ minimum cuts in $\tilde {O}(n\sp
  2)$ time.
\newblock In {\em 28th Annual ACM Symposium on Theory of Computing}, pages
  47--55. ACM, 1996.
\newblock \href {http://dx.doi.org/10.1145/237814.237827}
  {\path{doi:10.1145/237814.237827}}.

\bibitem[BSS14]{BSS14}
J.~D. Batson, D.~A. Spielman, and N.~Srivastava.
\newblock Twice-ramanujan sparsifiers.
\newblock {\em {SIAM} Review}, 56(2):315--334, 2014.
\newblock \href {http://dx.doi.org/10.1137/130949117}
  {\path{doi:10.1137/130949117}}.

\bibitem[CW09]{cw09}
K.~L. Clarkson and D.~P. Woodruff.
\newblock Numerical linear algebra in the streaming model.
\newblock In {\em Proceedings of the 41st Annual {ACM} Symposium on Theory of
  Computing, {STOC} 2009}, pages 205--214, 2009.

\bibitem[FHHP11]{FHHP11}
W.~S. Fung, R.~Hariharan, N.~J. Harvey, and D.~Panigrahi.
\newblock A general framework for graph sparsification.
\newblock In {\em Proceedings of the Symposium on Theory of Computing (STOC)},
  pages 71--80. ACM, 2011.
\newblock \href {http://dx.doi.org/10.1145/1993636.1993647}
  {\path{doi:10.1145/1993636.1993647}}.

\bibitem[GRU12]{gru12}
A.~Gupta, A.~Roth, and J.~Ullman.
\newblock Iterative constructions and private data release.
\newblock In {\em 9th International Conference on Theory of Cryptography},
  TCC'12, pages 339--356. Springer-Verlag, 2012.
\newblock \href {http://dx.doi.org/10.1007/978-3-642-28914-9_19}
  {\path{doi:10.1007/978-3-642-28914-9_19}}.

\bibitem[GWWZ15]{GWWZ14}
D.~V. Gucht, R.~Williams, D.~P. Woodruff, and Q.~Zhang.
\newblock On the communication complexity of distributed set-joins.
\newblock In {\em PODS, to appear}, 2015.

\bibitem[GY03]{gy}
J.~L. Gross and J.~Yellen.
\newblock {\em {Handbook of Graph Theory}}.
\newblock CRC Press, Abingdon, 2003.

\bibitem[HW96]{hw96}
M.~R. Henzinger and D.~P. Williamson.
\newblock On the number of small cuts in a graph.
\newblock {\em Inf. Process. Lett.}, 59(1):41--44, 1996.

\bibitem[JKS08]{JKS08}
T.~S. Jayram, R.~Kumar, and D.~Sivakumar.
\newblock The one-way communication complexity of hamming distance.
\newblock {\em Theory of Computing}, 4(1):129--135, 2008.
\newblock \href {http://dx.doi.org/10.4086/toc.2008.v004a006}
  {\path{doi:10.4086/toc.2008.v004a006}}.

\bibitem[JT12]{jt12}
P.~Jain and A.~Thakurta.
\newblock Mirror descent based database privacy.
\newblock In {\em 15th International Workshop, {APPROX} 2012, and 16th
  International Workshop, {RANDOM} 2012}, pages 579--590, 2012.

\bibitem[Kar00]{karger2000minimum}
D.~R. Karger.
\newblock Minimum cuts in near-linear time.
\newblock {\em J. ACM}, 47(1):46--76, 2000.
\newblock \href {http://dx.doi.org/10.1145/331605.331608}
  {\path{doi:10.1145/331605.331608}}.

\bibitem[KLM{\etalchar{+}}14]{KLMMS14}
M.~Kapralov, Y.~T. Lee, C.~Musco, C.~Musco, and A.~Sidford.
\newblock Single pass spectral sparsification in dynamic streams.
\newblock In {\em 55th Annual Symposium on Foundations of Computer Science},
  FOCS '14, pages 561--570. IEEE Computer Society, 2014.
\newblock \href {http://arxiv.org/abs/1407.1289} {\path{arXiv:1407.1289}},
  \href {http://dx.doi.org/10.1109/FOCS.2014.66}
  {\path{doi:10.1109/FOCS.2014.66}}.

\bibitem[KN97]{KN97}
E.~Kushilevitz and N.~Nisan.
\newblock {\em Communication complexity}.
\newblock Cambridge: Cambridge Univ, 1997.

\bibitem[KNPR15]{knpr15}
H.~Klauck, D.~Nanongkai, G.~Pandurangan, and P.~Robinson.
\newblock Distributed computation of large-scale graph problems.
\newblock In {\em 26th Annual ACM-SIAM Symposium on Discrete Algorithms}, SODA
  '15, pages 391--410. SIAM, 2015.
\newblock \href {http://arxiv.org/abs/1311.6209} {\path{arXiv:1311.6209}},
  \href {http://dx.doi.org/10.1137/1.9781611973730.28}
  {\path{doi:10.1137/1.9781611973730.28}}.

\bibitem[KP12]{KP12}
M.~Kapralov and R.~Panigrahy.
\newblock Spectral sparsification via random spanners.
\newblock In {\em 3rd Innovations in Theoretical Computer Science Conference},
  pages 393--398. ACM, 2012.
\newblock \href {http://dx.doi.org/10.1145/2090236.2090267}
  {\path{doi:10.1145/2090236.2090267}}.

\bibitem[KT13]{kt13}
M.~Kapralov and K.~Talwar.
\newblock On differentially private low rank approximation.
\newblock In {\em Proceedings of the Twenty-Fourth Annual {ACM-SIAM} Symposium
  on Discrete Algorithms, {SODA} 2013}, pages 1395--1414, 2013.

\bibitem[Mad10]{madry2010fast}
A.~Madry.
\newblock Fast approximation algorithms for cut-based problems in undirected
  graphs.
\newblock In {\em Proceedings of the Symposium on Foundations of Computer
  Science (FOCS)}, pages 245--254. IEEE, 2010.

\bibitem[McG14]{m14}
A.~McGregor.
\newblock Graph stream algorithms: a survey.
\newblock {\em {SIGMOD} Record}, 43(1):9--20, 2014.

\bibitem[Nil91]{Nil91}
A.~Nilli.
\newblock On the second eigenvalue of a graph.
\newblock {\em Discrete Math}, 91:207--210, 1991.
\newblock \href {http://dx.doi.org/10.1016/0012-365X(91)90112-F}
  {\path{doi:10.1016/0012-365X(91)90112-F}}.

\bibitem[She09]{sherman2009breaking}
J.~Sherman.
\newblock Breaking the multicommodity flow barrier for {$O(\sqrt{\log
  n})$}-approximations to sparsest cut.
\newblock In {\em Proceedings of the Symposium on Foundations of Computer
  Science (FOCS)}, pages 363--372, 2009.

\bibitem[SS11]{SS11}
D.~A. Spielman and N.~Srivastava.
\newblock Graph sparsification by effective resistances.
\newblock {\em SIAM J. Comput.}, 40(6):1913--1926, December 2011.
\newblock \href {http://dx.doi.org/10.1137/080734029}
  {\path{doi:10.1137/080734029}}.

\bibitem[ST04]{ST04a}
D.~A. Spielman and S.-H. Teng.
\newblock Nearly-linear time algorithms for graph partitioning, graph
  sparsification, and solving linear systems.
\newblock In {\em Proceedings of the Symposium on Theory of Computing (STOC)},
  pages 81--90. ACM, 2004.
\newblock \href {http://dx.doi.org/10.1145/1007352.1007372}
  {\path{doi:10.1145/1007352.1007372}}.

\bibitem[ST11]{ST11}
D.~A. Spielman and S.-H. Teng.
\newblock Spectral sparsification of graphs.
\newblock {\em SIAM J. Comput.}, 40(4):981--1025, 2011.
\newblock \href {http://dx.doi.org/10.1137/08074489X}
  {\path{doi:10.1137/08074489X}}.

\bibitem[Upa13]{u13}
J.~Upadhyay.
\newblock Random projections, graph sparsification, and differential privacy.
\newblock In {\em 19th International Conference on Advances in Cryptology},
  ASIACRYPT 2013, pages 276--295. Springer-Verlag, 2013.
\newblock \href {http://dx.doi.org/10.1007/978-3-642-42033-7_15}
  {\path{doi:10.1007/978-3-642-42033-7_15}}.

\bibitem[Upa14]{u14}
J.~Upadhyay.
\newblock Circulant matrices and differential privacy.
\newblock {\em CoRR}, abs/1410.2470, 2014.
\newblock \href {http://arxiv.org/abs/1410.2470} {\path{arXiv:1410.2470}}.

\bibitem[WZ13]{wz13}
D.~P. Woodruff and Q.~Zhang.
\newblock When distributed computation is communication expensive.
\newblock In {\em Distributed Computing - 27th International Symposium, {DISC}
  2013}, pages 16--30, 2013.
\newblock \href {http://dx.doi.org/10.1007/978-3-642-41527-2_2}
  {\path{doi:10.1007/978-3-642-41527-2_2}}.

\end{thebibliography}
}

\newpage

\appendix
\section{General Matrices, ``For Each'' Model}
\label{app:general-each}

\begin{theorem}
\label{thm:nonPSD-LB}
Any sketch $\sk{A}$ of a general $n \times n$ matrix $A$ that satisfies the ``for each''  guarantee with probability $0.9$, even when all entries of $A$ are promised to be in the set $\{0,1\}$, must use $\Omega(n^2)$ bits of space.
\end{theorem}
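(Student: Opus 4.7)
The plan is a straightforward reduction from the one-way communication complexity of the Indexing problem (Lemma~\ref{lem:index}), where Alice holds $z\in\{0,1\}^N$ with $N=\binom{n}{2}$ and Bob holds an index $\ell\in[N]$, which we identify with an ordered pair $(i,j)$ with $1\le i<j\le n$.

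First I would have Alice build the $n\times n$ upper-triangular matrix $A$ whose strictly upper-triangular entries encode the bits of $z$ and all other entries are $0$; concretely, $A_{i,j}=z_{(i,j)}$ for $i<j$ and $A_{i,j}=0$ otherwise. This $A$ has entries in $\{0,1\}$, as required. Alice then computes the sketch $\sk{A}$ using the hypothesized sketching algorithm (with its own private randomness) and sends it to Bob. Bob, holding $(i,j)$, forms the query vector $x = e_i+e_j\in\{0,1\}^n$ and runs the estimation procedure on $\sk{A}$ and $x$.

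The key observation is that, by construction,
\[
x^T A x \;=\; A_{i,i}+A_{j,j}+A_{i,j}+A_{j,i} \;=\; 0+0+z_{(i,j)}+0 \;=\; z_\ell,
\]
which takes a value in $\{0,1\}$. Now the ``for each'' guarantee asserts that, for this fixed query $x$, with probability at least $0.9$ the estimator returns a value in $(1\pm\eps)\cdot x^TAx$. For any $\eps<1$, such a multiplicative approximation is $0$ precisely when $x^TAx=0$ and strictly positive when $x^TAx=1$, so Bob can recover $z_\ell$ exactly whenever the sketch succeeds on $x$, i.e.\ with probability at least $0.9$.

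Therefore this protocol solves Indexing on inputs of length $N=\binom{n}{2}$ with success probability at least $0.9>2/3$. By Lemma~\ref{lem:index}, the message length (and hence the size of $\sk{A}$) must be $\Omega(N)=\Omega(n^2)$ bits. The only points that require a sentence of care in the write-up are (i) noting that Bob makes a single, non-adaptive query, so the per-query failure probability $0.1$ from the ``for each'' guarantee directly yields the protocol's failure probability and no union bound is needed; and (ii) observing that $A$ genuinely lies in the promised family of matrices with $\{0,1\}$ entries. I do not foresee a serious obstacle: the entire argument is a one-line encoding/decoding combined with the standard Indexing lower bound.
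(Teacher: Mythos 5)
Your proposal is correct, and the core idea — build a $\{0,1\}$ matrix with zero diagonal and query it with $x=e_i+e_j$ so that $x^TAx$ reads off a single off-diagonal entry — is exactly what the paper does. The only difference is the final lower-bound machinery. You reduce from the one-way Indexing problem and invoke Lemma~\ref{lem:index} as a black box, whereas the paper runs a direct entropy argument: viewing the sketch as Alice's message $M$, it bounds $H(A\mid M)\le\sum_{i,j}H(A_{i,j}\mid M)$, applies Fano's inequality to each term (Bob recovers $A_{i,j}$ with probability $0.9$), and concludes $H(M)\ge H(A)-H(A\mid M)=\Omega(n^2)$. These are essentially the same argument in two packagings; Fano's inequality is what underlies the Indexing lower bound anyway. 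Your route is slightly shorter given that Lemma~\ref{lem:index} is already stated in the paper, while the paper's route is self-contained and slightly more transparent about where the constant comes from. One small presentational point: the paper's $A$ is symmetric (so $x^TAx=2A_{i,j}$ and all off-diagonal entries are recoverable), while your upper-triangular construction only encodes $\binom{n}{2}$ bits, but both give $\Omega(n^2)$ so nothing is lost.
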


\begin{proof}
Let A be a symmetric matrix with zero on the diagonal, and a random bit in every other entry. Set the query vector $x = (e_i + e_j)$. Then using $\frac{1}{2} x^T A x$ we can recover the entry $A_{i,j}$, with probability $0.9$.  Think the sketching problem as a communication problem where Alice holds the matrix $A$; she sends a message (the sketch) $M$ to Bob such that Bob can recover each entries of $A$ with probability $0.9$ (except for the diagonal entries, which are fixed to be $0$, Bob can recover exactly). Then,
\begin{eqnarray*}
H(A\ |\ M) &=& \sum_{i,j \in [n]} H(A_{i,j}\ |\ M) \quad \text{($A_{i,j}$ are independent)} \\
&\le& (H_2(0.9) + 0.1) \cdot n^2 \quad  \text{(Fano's inequality)} \\
&<& 0.6 n^2.
\end{eqnarray*}
Thus $H(M) \ge H(A) - H(A\ |\ M) = \Omega(n^2)$.
\end{proof}

\section{Reduction from SDD Matrices to Laplacian Matrices}
\label{app:sdd-each}
In this section we show that the quadratic form of an SDD matrix, $x^TAx$, can be reduced to the quadratic form of a Laplacian, therefore our upper bounds for Laplacian matrices in Section~\ref{sec:cut-each} and Section~\ref{sec:spectral-each} can be extended to SDD matrices.

An SDD matrix $A$ has the property that $A_{i,i} \geq \sum_{j\neq i} |A_{i,j}|$ for all $i$. In the case when $A_{i,i} = \sum_{i\neq j}|A_{i,j}|$ for all $i$, we can write $A$ as $A_p + A_n +D$ where $D$ is the diagonal of $A$, $A_n$ is the matrix consisting of only the negative off-diagonal entries of $A$, and $A_p$ is the matrix consisting of only the positive off-diagonal entries of $A$. It is straightforward to verify that
\[
\left ( \begin{array}{cc} x^T & -x^T\end{array} \right ) \left ( \begin{array}{cc} D + A_n & -A_p \\ -A_p & D + A_n \end{array} \right ) \left ( \begin{array}{c} x \\ -x\end{array} \right ) = 2 x^T A x.
\]
The matrix $\left ( \begin{array}{cc} D + A_n & -A_p \\ -A_p & D + A_n \end{array} \right )$ is clearly a Laplacian matrix.

For the general case when $A_{i,i} \geq \sum_{i\neq j} |A_{i,j}|$. We can remove some ``weights'' from the diagonal entries of $A$, so that $A$ can be written as $A = D + B$ where $D$ is a diagonal matrix and $B$ satisfies the requirement $B_{i,i} = \sum_{i \neq j}|B_{i,j}|$ for all $i$.  We then have $x^TAx = x^TDx + x^TBx$. 
The matrix $D$ can be stored explicitly, and $x^TBx$ can be reduced to the quadratic form of a Laplacian matrix as discussed above.

\begin{theorem} \label{thm:sdd-sketch-for-each}
Given an $n\times n$ SDD matrix $A$, let $w_{\max} = \max_{i,j} |A_{i,j}|$ and $w_{\min} = \min_{i,j ~\text{with}~ A_{i,j}\neq 0} |A_{i,j}|$,
and assume ${w_{\max}}/{w_{\min}} = \text{poly}(n)$
We can then construct a sketch of $A$ that gives a $(1 + \eps,0.99)$-approximation to $x^TAx$ for any fixed $x \in \mathbb{R}^n$. 
The size of this sketch is $\tilde{O}(n/\eps^{{8}/{5}})$ bits.
\end{theorem}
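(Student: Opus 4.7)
The plan is to reduce sketching an SDD matrix $A$ to sketching a graph Laplacian of at most twice the dimension, and then invoke Theorem~\ref{thm:improved-general} as a black box.

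First, I would write $A = D + B$ as described in the appendix, where $D$ is diagonal with $D_{i,i} = A_{i,i} - \sum_{j\neq i} |A_{i,j}| \geq 0$ and $B$ is SDD with $B_{i,i} = \sum_{j\neq i} |B_{i,j}|$. Store $D$ explicitly using $O(n\log n)$ bits, which is negligible. Split $B = D' + B_n + B_p$ where $D'$ is the diagonal of $B$, $B_n$ holds the negative off-diagonal entries, and $B_p$ holds the positive off-diagonal entries, and form the $2n \times 2n$ matrix
\[
  L \;=\; \begin{pmatrix} D' + B_n & -B_p \\ -B_p & D' + B_n \end{pmatrix}.
\]
As verified in the appendix, $L$ is a Laplacian matrix (its off-diagonal entries are non-positive and its row sums vanish), and for any $x \in \R^n$, setting $y = (x, -x) \in \R^{2n}$ gives $y^T L y = 2\, x^T B x$.

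Second, I would apply Theorem~\ref{thm:improved-general} to the Laplacian $L$ with accuracy parameter $\eps$, producing a sketch of size $\tilde{O}(2n/\eps^{8/5}) = \tilde{O}(n/\eps^{8/5})$ bits. I need to check the hypothesis that the edge weights of the graph $G(L)$ have ratio $w_{\max}/w_{\min} = \text{poly}(n)$. The non-zero off-diagonal entries of $L$ are in $\{\pm A_{i,j}\}_{i \neq j}$, so the edge weights of $G(L)$ inherit the polynomial ratio bound of $A$. On a query $x$, the estimation algorithm computes $x^T D x$ exactly from the stored $D$, queries the Laplacian sketch on the vector $y = (x, -x)$ to obtain an estimate $\widehat{y^T L y}$, and returns $x^T D x + \tfrac{1}{2}\widehat{y^T L y}$.

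Third, for the accuracy, note that $D$ being PSD gives $x^T B x \leq x^T A x$, and the Laplacian sketch yields $\widehat{y^T L y} = (1\pm\eps) y^T L y = (1\pm\eps)\cdot 2\, x^T B x$ with probability at least $0.99$. Combining,
\[
  \bigl|\text{estimate} - x^T A x\bigr| \;=\; \tfrac{1}{2}\bigl|\widehat{y^T L y} - y^T L y\bigr| \;\leq\; \eps\, x^T B x \;\leq\; \eps\, x^T A x,
\]
which is the claimed $(1+\eps)$-approximation. I do not expect any of these steps to pose a real obstacle: the decomposition $A = D+B$ is algebraic, the Laplacian embedding is already verified in the appendix, and the weight-ratio and additivity-of-error checks are routine. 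The only care needed is in confirming that sketching $L$ with parameter $\eps$ (not a scaled version of it) gives the right final accuracy for $x^T A x$, which is ensured by the observation that $x^T B x \leq x^T A x$.
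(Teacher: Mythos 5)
Your proposal is correct and follows the same route as the paper's Appendix~\ref{app:sdd-each}: decompose $A = D + B$ with $D$ diagonal PSD, store $D$ exactly, embed $B$'s quadratic form into that of the $2n\times 2n$ Laplacian via $y=(x,-x)$, and invoke Theorem~\ref{thm:improved-general}. You make explicit two small details the paper leaves implicit — that the edge weights of $G(L)$ inherit the poly$(n)$ ratio from $A$, and that $x^TDx\ge 0$ ensures the additive error $\eps\, x^TBx$ is at most $\eps\, x^TAx$ — but the argument is substantively identical.
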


\end{document}